\renewcommand{\norm}[1]{\left\lVert#1\right\rVert}
\newcommand{\AN}{\mathcal{A}_N}
\newcommand{\barAN}{\bar{\mathcal{A}}_N}
\newcommand{\PSPACE}{\mathsf{PSPACE}}
\newcommand{\EXPSPACE}{\mathsf{EXPSPACE}}
\newcommand{\Pclass}{\mathsf{P}}
\newcommand{\NP}{\mathsf{NP}}
\newcommand{\BQP}{\mathsf{BQP}}
\newcommand{\BQPSPACE}{\mathsf{BQPSPACE}}
\newcommand{\PrQSPACE}{\mathsf{PrQSPACE}}
\newcommand{\FSTherm}{\mathsf{FSTherm (MC)}}
\newcommand{\FSThermG}{\mathsf{FSTherm (Gibbs)}}
\newcommand{\FSRelax}{\mathsf{FSRelax}}
\newcommand{\FTFSRelax}{\mathsf{FTFSRelax}}
\newcommand{\FSHalt}{\mathsf{FSHalt}}
\newcommand{\Heff}{H_{\text{eff}}}
\newcommand{\rhoMC}{\rho_{\textsf{MC}}}
\newcommand{\rhop}{\rho_{\mathrm{poly}}}
\newcommand{\psip}{\psi_{\mathrm{poly}}}
\newcommand{\rhog}{\rho_{\beta}}
\newcommand{\rhogp}{\rho_{\beta'}}
\newcommand{\pgauss}[2]{P_{\mathrm{Gauss}(#1)}^{(#2)}}
\newcommand{\LocalDim}{51 }
\DeclareMathOperator{\traceOp}{tr}
\DeclareMathOperator{\polyOp}{poly}
\DeclareMathOperator{\logOp}{log}
\DeclarePairedDelimiterXPP\bigO[1]{\ensuremath{\mathcal{O}}}{(}{)}{}{#1}
\DeclarePairedDelimiterXPP\bigomega[1]{\ensuremath{\Omega}}{(}{)}{}{#1}
\DeclarePairedDelimiterXPP\bigtheta[1]{\ensuremath{\Theta}}{(}{)}{}{#1}
\DeclarePairedDelimiterXPP\trdummy[1]{\traceOp}{[}{]}{}{#1}
\DeclarePairedDelimiterXPP\ptr[2]{\traceOp_{#1}}{[}{]}{}{#2}
\DeclarePairedDelimiterXPP\poly[1]{\polyOp}{(}{)}{}{#1}
\DeclarePairedDelimiterXPP\logdummy[1]{\ensuremath{\logOp}}{(}{)}{}{#1}
\renewcommand{\tr}[1]{\trdummy{#1}}
\renewcommand{\log}[1]{\logdummy{#1}}
\definecolor{LightPink}{rgb}{0.858, 0.188, 0.478}
\definecolor{navyblue}{rgb}{0.0, 0.0, 0.5}
\Crefname{lemma}{Lemma}{Lemmas}
\Crefname{proposition}{Proposition}{Propositions}
\Crefname{definition}{Definition}{Definitions}
\Crefname{theorem}{Theorem}{Theorems}
\Crefname{conjecture}{Conjecture}{Conjectures}
\Crefname{corollary}{Corollary}{Corollaries}
\Crefname{example}{Example}{Examples}
\Crefname{section}{Section}{Sections}
\Crefname{appendix}{Appendix}{Appendices}
\Crefname{figure}{Fig.}{Figs.}
\Crefname{equation}{Eq.}{Eqs.}
\Crefname{table}{Table}{Tables}
\Crefname{item}{Property}{Properties}
\Crefname{remark}{Remark}{Remarks}
\newtheorem{theorem}{Theorem}[section]
\newtheorem{definition}[theorem]{Definition}
\newtheorem{corollary}[theorem]{Corollary}
\newtheorem{proposition}[theorem]{Proposition}
\newtheorem{lemma}[theorem]{Lemma}
\newcommand\prob\textsc
\newcommand{\probleminput}[1]{\gdef\@probleminput{#1}}
\newcommand{\problemquestion}[1]{\gdef\@problemquestion{#1}}
\newcommand{\problempromise}[1]{\gdef\@problempromise{#1}}
  \par\addvspace{0\baselineskip}
\title{The Complexity of Thermalization in Finite Quantum Systems}
\date{}
\author[1,3,4]{\href{https://orcid.org/0000-0002-2612-308X}{Dhruv~Devulapalli}}
\author[1,3,4]{\href{https://orcid.org/0000-0001-9727-6967}{T.~C.~Mooney}}
\author[1,2]{\href{https://orcid.org/0000-0002-6077-4898}{James~D.~Watson}}
\affil[1]{\normalsize  Joint Center for Quantum Information \& Computer Science, 
	National Institute of Standards \& Technology 
	\authorcr and University of Maryland, 
	College Park }
\affil[2]{\normalsize Department of Computer Science and Institute for Advanced Computer Studies,
	\authorcr 
	University of Maryland, College Park}
\affil[3]{\normalsize Joint Quantum Institute, NIST/University of Maryland, 
        College Park, MD 20742, USA}
\affil[4]{\normalsize Department of Physics,
	University of Maryland, College Park}
\begin{document}

	{\begingroup
		\hypersetup{urlcolor=navyblue}
		\maketitle
		\endgroup}

\begin{abstract}
Thermalization is the process through which a physical system evolves toward a state of thermal equilibrium. Determining whether or not a physical system will thermalize from an initial state has been a key question in condensed matter physics. Closely related questions are determining whether observables in these systems relax to stationary values, and what those values are.
    Using tools from computational complexity theory, we demonstrate that given a Hamiltonian on a finite-sized system, determining whether or not it thermalizes or relaxes to a given stationary value is computationally intractable, even for a quantum computer.
    In particular, we show that the problem of determining whether an observable of a finite-sized quantum system relaxes to a given value is $\PSPACE$-complete, and so no efficient algorithm for determining the value is expected to exist. Further, we show the existence of Hamiltonians for which the problem of determining whether the system thermalizes to the Gibbs expectation value is $\PSPACE$-complete. We also show that the related problem of determining whether the system thermalizes to the microcanonical expectation value is contained in $\PSPACE$ and is $\PSPACE$-hard under quantum polynomial time reductions. 
    In light of recent results 
    demonstrating undecidability of thermalization in the thermodynamic limit, our work shows that the intractability of the problem is due to inherent difficulties in many-body physics rather than particularities of infinite systems.
\end{abstract}

\section{Introduction}
Statistical mechanics has been remarkably successful at describing the physics of large numbers of particles in thermal equilibrium. Experimental outcomes of systems in thermal equilibrium can be predicted using statistical ensembles based on macroscopic observables, independently of the system's microstate. Therefore, the process through which an isolated system evolves toward thermal equilibrium,  known as thermalization, has been a central topic of investigation in condensed matter physics.
In classical systems, the emergence of thermalization has been attributed to ergodicity and chaotic dynamics  \cite{Deutsch_2018}. Ergodic systems evolve such that they eventually come arbitrarily close to every point in phase space on a constant energy surface.
As classical measurements can be modeled as long-time averages, the long-time behavior of ergodic systems is hence similar to that of a uniform distribution over the constant-energy submanifold of phase space, also known as the classical microcanonical ensemble \cite{Pathria2021,Ruelle1999}. The true relationship between thermalization and ergodicity in classical systems is subtle and nuanced. Ergodicity is difficult to prove and has only been rigorously shown for some systems \cite{Sinai1970,Bunimovich1979,Simanyi2004,Chernov2006}, notably for systems with suitably strong chaotic behavior \cite{Ott2002,deAlmeida1989}. Indeed, it has been argued that ergodicity is neither necessary nor sufficient for thermalization \cite{Bricmont_1995}. Nevertheless, ergodicity has remained a useful conceptual tool to understand thermalization in classical systems.

However, the ergodic explanation of thermalization can't be naturally extended to isolated quantum systems. In particular, unitary time evolution does not sample all wavefunctions within a given energy range of the starting state, and hence cannot be ergodic. One explanation for thermalization in quantum systems is in terms of \emph{subsystem} equilibration. \textcite{linden_quantum_2009, Short_Farrelly_2012} have shown that subsystems equilibrate under some assumptions on their Hamiltonian. Related work \cite{Reimann_2008} has shown that experimentally realistic conditions can lead to equilibration, based on high density of states in physically realistic scenarios. An alternative explanation, the Eigenstate Thermalization Hypothesis (ETH) \cite{Srednicki_1994, Deutsch_1991,Deutsch_2018} has been proposed to explain the exhibition of thermalization in isolated quantum systems. ETH proposes that observables with slowly varying diagonal elements (in the energy eigenbasis) and small off-diagonal elements approach their microcanonical thermal values. However, the ETH has not been proven to be generally applicable. Indeed, systems exist that do not thermalize, due to phenomena such as many-body localization \cite{Abanin_Altman_Bloch_Serbyn_2019}, quantum many-body scars \cite{turner_scars}, and integrability \cite{Deutsch_2018,Kinoshita_Wenger_Weiss_2006} (although one can consider integrable systems as equilibrating to generalized Gibbs ensembles that take their additional conservation laws into account \cite{Langen2015,Vidmar2016}).

A large volume of theoretical and experimental work has gone into determining which properties and initial states ensure thermalization \cite{Tasaki_2016,Rigol_Dunjko_Olshanii_2008, Kim_Ikeda_Huse_2014, Gogolin_Eisert_2016, roadmap_2025}.
A natural question thus arises: How can we determine whether a given quantum system thermalizes? 
While this question has been studied by physicists, it was first considered from the viewpoint of computational complexity theory by \textcite{shiraishithermal2021}, who proved that even for product initial states with translationally invariant Hamiltonians, the problem of determining whether a system thermalizes or not is undecidable in the limit of an infinitely large system.
Their results utilize techniques inspired by previous work in the field of Hamiltonian complexity proving complexity and computability results in physics \cite{Cubitt_Perez-Garcia_Wolf_2015, Moore_1990}.

However, in the lab --- and nature more generally --- we only have access to physical systems of finite size. This raises the question of how difficult it is to predict thermalization of these finite-sized systems.
While the result of \textcite{shiraishithermal2021} proving undecidability in the thermodynamic limit, along with the more general hardness of simulating quantum systems, suggests this may be a difficult task, it is entirely possible that there exists an efficient algorithm for it. The result of \textcite{shiraishithermal2021} makes use of a reduction from an undecidable problem (the Halting problem) to the problem of determining whether a system thermalizes, but undecidable instances may require a large (unbounded) size and therefore may not be seen in systems of finite size such as those encountered in nature and the lab.

In this work we consider the problem of determining whether a \textit{finite-sized} quantum system relaxes to a given value, or thermalizes. 
To study this from a complexity-theoretic perspective, we formulate relaxation as a decision problem. 
\begin{definition}[(Informal) Finite-Size Relaxation, $\FSRelax$]
\label{def:fsrelax_informal}
    Consider an input of a $k$-local Hamiltonian $H$ acting on $N$ qudits with constant local Hilbert space dimension $d$, a sum of local  observables $\AN= \frac{1}{N}\sum_{i}A_i$, an input classical description of product state $\ket{\psi_0}$, a value $A^*$, and constants $c, \epsilon$.
    Let the long time average of $\AN$ be
    \begin{align}
        \barAN \coloneqq \lim_{T\rightarrow\infty}\frac{1}{T}\int_0^Tdt   \bra{\psi_0}e^{iHt}\AN e^{-iHt}\ket{\psi_0}
    \end{align}
    Then, output: YES if $|\barAN - A^*|\leq \epsilon$ or NO if $|\barAN - A^*|\geq c\epsilon$.
\end{definition}
$\FSRelax$ is depicted in \cref{fig:decision_problem}.
\begin{figure}
    \centering
\includegraphics[width=0.9\linewidth]{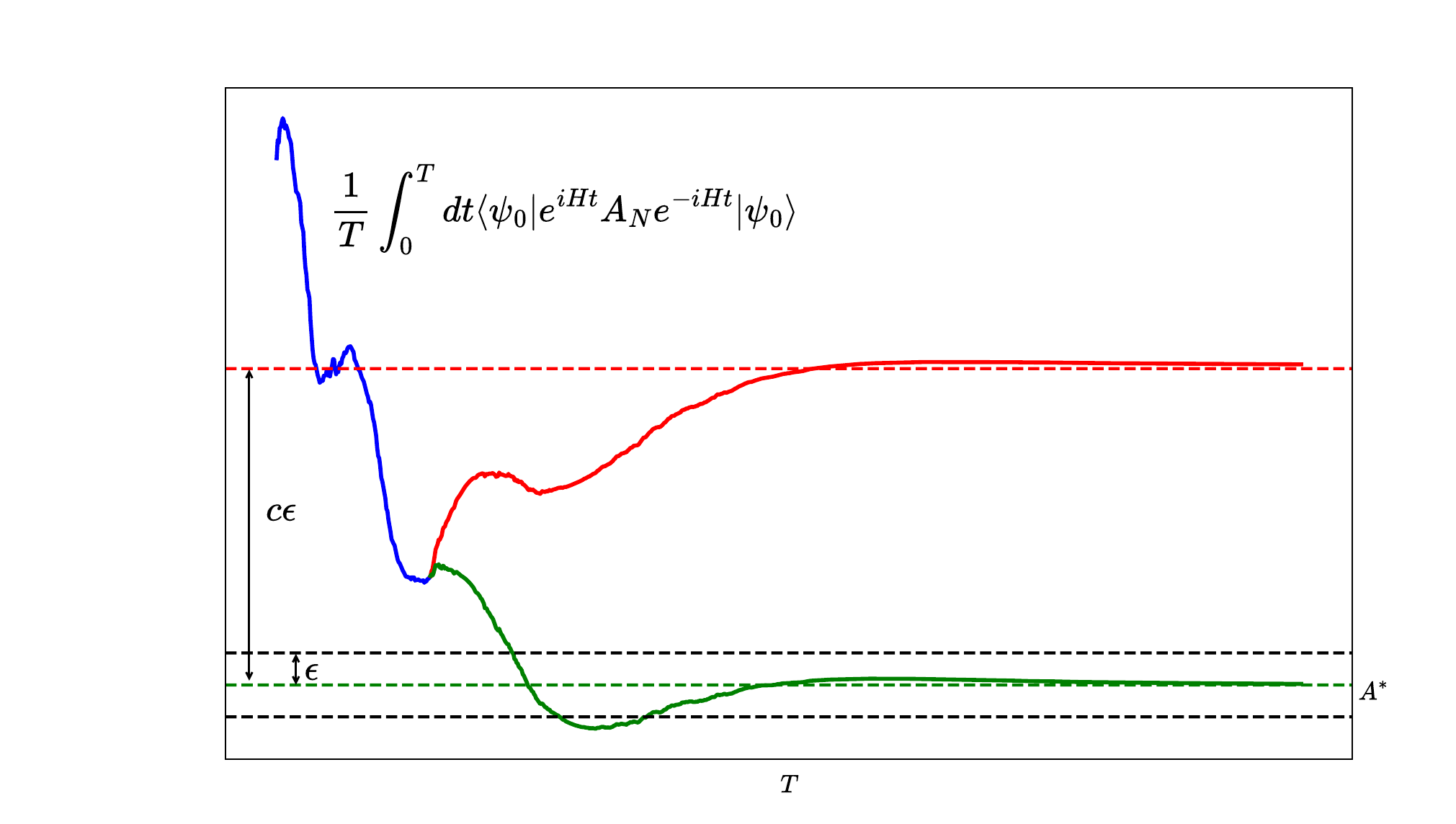}
    \caption{The relaxation decision problem. For a given initial state $\ket{\psi}$, we plot the time-average of $\AN \coloneqq \frac{1}{T}\int_0^Tdt   \bra{\psi_0}e^{iHt}\AN e^{-iHt}\ket{\psi_0}$. $A^*$ is indicated by the dashed green line. Given the promise that $\barAN\coloneqq\lim_{T\rightarrow\infty} \langle \AN \rangle_T$ is such that either $|\barAN - A^*| \geq c\epsilon$ or $|\barAN - A^*| \leq \epsilon, \FSRelax$ is the problem of deciding which of these two holds.}  
    \label{fig:decision_problem}
\end{figure}

One relevant state of interest that describes the thermal behavior of an isolated system is the microcanonical ensemble, which represents the set of possible states within a fixed range of energy of our initial state.
\begin{definition}[Microcanonical Ensemble]
\label{def:mc} For a Hamiltonian $H$ with eigenstates $\{\ket{\lambda_i}\}$ and corresponding eigenenergies $\{\lambda_i\}$ the quantum microcanonical ensemble $\rhoMC(H,E)$ at energy $E$ is defined as
    \begin{equation}
    \rhoMC(H,E) = \frac{1}{W} \sum_i f\left(\frac{\lambda_i - E}{w}\right) \ketbra{\psi_i}{\psi_i},
\end{equation}
where $f$ is an energy window function (typically decaying rapidly in its argument), $W = \sum_i f\left(\frac{\lambda_i - E}{w}\right)$ and $w$ is a characteristic energy width.
\end{definition}

\noindent To investigate thermalization, we define the following decision problem:
\begin{definition}[(Informal) Finite-Size Thermalization to the microcanonical state, $\FSTherm$]
\label{def:fstherm_informal}
    Consider an input of a $k$-local Hamiltonian $H$ acting on $N$ qudits with local Hilbert space dimension $d$, a sum of local  observables $\AN= \frac{1}{N}\sum_{i}A_i$, an input classical description of product state $\ket{\psi_0}$, and constants $c, \epsilon$.
    Let the long time average of $\AN$ be
    \begin{align}
        \barAN \coloneqq \lim_{T\rightarrow\infty}\frac{1}{T}\int_0^Tdt   \bra{\psi_0}e^{iHt}\AN e^{-iHt}\ket{\psi_0}.
    \end{align}
    
    Then, output: YES if $|\barAN - \tr{\AN \rhoMC(E)}|\leq \epsilon$ or NO if $|\barAN - \tr{\AN \rhoMC(E)}|\geq c\epsilon$, where $E$ is the expected energy of the input state $\bra{\psi_0} H \ket{\psi_0}$.
\end{definition}
We also consider a closely related problem to $\FSTherm$ which is the problem of thermalization to the Gibbs state $(\frac{e^{-\beta H}}{\tr{e^{-\beta H}}})$.
\begin{definition}[(Informal) Finite-Size Thermalization to the Gibbs state, $\FSThermG$]
\label{def:fsthermg_informal}
    Consider an input of a $k$-local Hamiltonian $H$ acting on $N$ qudits with local Hilbert space dimension $d$, a sum of local  observables $\AN= \frac{1}{N}\sum_{i}A_i$, an input classical description of product state $\ket{\psi_0}$, and constants $c, \epsilon$.
    Let the long time average of $\AN$ be
    \begin{align}
        \barAN \coloneqq \lim_{T\rightarrow\infty}\frac{1}{T}\int_0^Tdt   \bra{\psi_0}e^{iHt}\AN e^{-iHt}\ket{\psi_0}.
    \end{align}
    
    Then, output: YES if $|\barAN - \tr{\AN \rhog}|\leq \epsilon$ or NO if $|\barAN - \tr{\AN \rhog}|\geq c\epsilon$, where where $\rhog$ is the Gibbs state of $H$ with expected energy $E\coloneqq \bra{\psi_0} H \ket{\psi_0}$.
\end{definition}
Our main results are that $\FSRelax, \FSTherm,$ and $\FSThermG$ are computationally intractable, even for a quantum computer and for simple systems. We first prove that the problem of determining whether a given observable relaxes to a particular value is $\PSPACE$-complete.
\begin{restatable}{theorem}{fsrelaxpspacecomplete}
\label{thm:fsrelax_pspace_complete}
    $\FSRelax$ is PSPACE-complete for a translationally invariant, nearest neighbour Hamiltonian with $d\geq \LocalDim$, with a 1-local translationally invariant observable $A_i$.
\end{restatable}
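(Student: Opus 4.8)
The plan is to establish the two inclusions separately: membership in $\PSPACE$, which is comparatively routine here, and $\PSPACE$-hardness, which is where essentially all of the work lies.

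\emph{Membership in $\PSPACE$.} On a finite system the long-time average coincides with the expectation of $\AN$ in the dephased (``diagonal-ensemble'') state: writing $P_E$ for the spectral projector of $H$ onto eigenvalue $E$, one has $\barAN = \sum_E \bra{\psi_0} P_E \AN P_E \ket{\psi_0}$. Since $H$ is a $d^N \times d^N$ matrix with a succinct ($\mathrm{poly}(N)$-bit) description, its characteristic polynomial, its distinct eigenvalues, the projectors $P_E$, and hence $\barAN$ can all be computed to inverse-exponential precision in polynomial space by standard exact/high-precision linear algebra on succinctly represented exponential-size matrices; the one delicate point is to detect \emph{exact} eigenvalue coincidences correctly when forming the $P_E$, which is possible because the relevant numbers are algebraic of controlled height. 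As the promise gap is a constant, this precision more than suffices, so $\FSRelax \in \PSPACE$.

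\emph{Hardness.} For hardness I would reduce from the canonical $\PSPACE$-complete problem: given a fixed machine $M_L$ deciding a language $L$ in space polynomial in $|x|$ (hence in time $2^{O(|x|)}$; one may assume $M_L$ always halts, and via Bennett's construction has a non-repeating computation history), decide whether $M_L$ accepts $x$. Following the Hamiltonian-encoding toolkit for translationally invariant systems (Gottesman--Irani; Cubitt--Perez-Garcia--Wolf) together with the thermalization-hardness construction of Shiraishi--Matsumoto, I would build one fixed translationally invariant nearest-neighbour qudit Hamiltonian $H$ with local dimension $\LocalDim$ whose continuous-time dynamics from a simple product state $\ket{\psi_0}$ implement a quantum walk along the computational history of a \emph{universal} reversible Turing machine: signals emitted from the two ends of the open chain let it measure its own length $N$, a binary-counter subroutine writes $N$ on the tape, a parser reads off an encoding of $\langle M_L, x\rangle$, and the universal machine then simulates $M_L$ on $x$ using $\mathrm{poly}(\log N) \ll N$ tape cells. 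The reduction, on input $x$, outputs this fixed $H$, the integer $N := \mathrm{enc}(M_L,x)$ in binary (so the chain has length at most $2^{O(|x|)}$), the fixed $1$-local observable $A_i$ indicating whether a site carries a dedicated ``flag'' symbol, the fixed $\ket{\psi_0}$, and appropriate constants $A^*, \epsilon, c$; this is plainly polynomial time.

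The dynamics are designed so that the subspace reachable from $\ket{\psi_0}$ is spanned by a simple path of configurations, along which $\barAN$ equals, up to lower-order corrections, the weighted fraction of the path on which flags are set. If $M_L$ rejects, the path ends at a dead-end configuration immediately after the simulation and no flag is ever set, so $\barAN = 0$. If $M_L$ accepts, the machine instead enters a long appendix: first a sweep of the head across all $N$ cells writing the flag symbol on each (length $\Theta(N)$), then a binary counter on $\mathrm{poly}(\log N)$ cells that runs for $\Lambda = 2^{\mathrm{poly}(\log N)}$ steps with the remaining $\Theta(N)$ cells frozen as flags, before halting. Choosing the counter's polynomial large enough that $\Lambda$ exceeds the total pre-counter path length by a factor at least $1/\epsilon$, and using that the diagonal-ensemble occupation of the length-$L$ path obeys $\bar P(j) \le \max_k |\braket{\psi_0}{\phi_k}|^2 = O(1/L)$ (so a short sub-path carries little weight), the flagged appendix carries a $1 - O(\epsilon)$ fraction of the weight with $1 - o(1)$ of its sites flagged. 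Hence $\barAN \ge 1 - O(\epsilon)$ in the accept case and $\barAN = 0$ in the reject case; taking $A^* = 1$ and suitable constants $\epsilon, c$, both cases satisfy the promise and $\FSRelax$ returns YES exactly when $M_L$ accepts $x$.

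I expect the main obstacle to be the Hamiltonian construction and its verification: making the encoding translationally invariant, nearest-neighbour and of local dimension only $\LocalDim$ while still faithfully self-measuring $N$, parsing it, running a universal reversible computation, and executing the flag-sweep and counter gadgets --- and, crucially, proving that the evolution from $\ket{\psi_0}$ never leaves the intended invariant subspace, i.e. that there are no spurious computational branches contributing to the time average. A secondary technical point is justifying the ``occupation is essentially uniform along the path'' estimate for the resulting continuous-time quantum walk: one must bound the overlap of $\ket{\psi_0}$ with every eigenstate of the (tridiagonal, hence non-degenerate) walk Hamiltonian by $O(1/L)$ --- achievable for the chains arising here, if necessary after prepending a short homogeneous buffer segment --- and check that any residual degeneracies coming from the full configuration graph do not spoil the estimate, where the constant-size promise gap leaves ample slack.
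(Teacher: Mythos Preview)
Your containment sketch is a legitimate alternative to what the paper does (the paper instead builds a $\PrQSPACE$ algorithm via phase estimation and a block encoding of $\sqrt{1-\AN/2dN}$, then invokes $\PrQSPACE=\PSPACE$); the promise that the eigenvalues supporting $\ket{\psi_0}$ are inverse-exponentially separated makes your ``detect exact degeneracies'' worry moot for the relevant subspace, so the direct linear-algebra route goes through.

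The hardness argument, however, has a real gap in the reduction. You encode the instance $x$ into the \emph{system size} via $N:=\mathrm{enc}(M_L,x)$, which forces $N=2^{\Theta(|x|)}$. But the input to $\FSRelax$ is the classical description of the product state $\ket{\psi_0}$ on $N$ qudits, of length $\Theta(N)$; that length is the yardstick for both ``polynomial-time reduction'' and ``polynomial-space containment''. So either (i) you write $\ket{\psi_0}$ out explicitly, and your reduction outputs $2^{\Theta(|x|)}$ bits and is not polynomial time; or (ii) you allow a succinct description (just $N$ in binary plus a fixed one-site state), in which case the $\poly{N}$-space containment algorithm --- yours and the paper's --- is now exponential in the input length $\Theta(\log N)$. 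Either way the two halves of ``$\PSPACE$-complete'' no longer fit together. The Gottesman--Irani/CPW self-measuring machinery you invoke is tailored to settings where the only input really is the system size (thermodynamic-limit undecidability, $\mathsf{QMA}_{\mathsf{EXP}}$ with unary input); here it is both unnecessary and the source of the mismatch.

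The paper sidesteps this by using the freedom in the product-state input: it writes the TM input directly onto the tape, $\ket{\psi_0}=\ket{q_0}\otimes\ket{y}$ with $y\in\Gamma^{N-1}$ carrying $x$ on designated $M$-cells, so that $N=\Theta(|x|)$ and the reduction is genuinely polynomial time. The Hamiltonian then needs no length-measurement or parsing layer at all --- it just runs a fixed universal reversible TM (TM1), then a sweep that flips $a_1\!\to\!a_2$ on the $A$-cells if TM1 halts (TM2), then TM1 in reverse (TM3) to pad the history. The observable $A_i=\ketbra{a_2}{a_2}_i$ then has $\barAN=0$ in the non-halting case and $\barAN\approx(1-\alpha)/2$ in the halting case. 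Your flag-sweep-plus-counter gadget would also work once the input is moved into $\ket{\psi_0}$, but the paper's ``run the computation backwards'' trick gives the required padding for free and keeps the local dimension at $51$.
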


We then show a $\PSPACE$ algorithm to compute the microcanonical expectation value of a local observable, which allows us to make use of the $\PSPACE$ algorithm to determine $\FSRelax$ to decide $\FSTherm$ in $\PSPACE$.
We also show that the $\FSTherm$ problem is contained in $\PSPACE$. By allowing a slightly larger local Hilbert space dimension, one can choose $A_i$ such that $\barAN = \tr{\rhoMC\AN}$, where $\rhoMC$ is the microcanonical ensemble of $H$. Therefore, we have:
\begin{restatable}{theorem}{fsthermpspacecomplete}
    \label{thm:fstherm_pspace_complete}
    $\FSTherm$ is $\PSPACE$-hard under quantum polynomial time reductions, and contained in $\PSPACE$.
\end{restatable}
Furthermore, our hardness result applies even for a translationally invariant, nearest neighbour Hamiltonian and a  1-local, translationally invariant observable.
However, we note that our proof of $\PSPACE$-hardness under quantum polynomial time reductions has the technical limitation that we restrict to a class of microcanonical states with  width $w = \bigomega{\norm{H}/\sqrt{\log{N}}}$. Further details on this restriction and the possibility of removing it in future work are discussed in \cref{sec:changingquantum} and \cref{sec:discussion}.

In the case of thermalization to the Gibbs state, we use similar techniques to show that $\FSThermG$ is contained in $\PSPACE$. However, using results from the tensor-network literature \cite{Alhambra_Cirac_2021}, we are able to show that $\FSThermG$ is $\PSPACE$-hard under \emph{classical} polynomial-time reductions, and is hence $\PSPACE$-complete. 
\begin{restatable}{theorem}{fsthermgpspacecontained}
    \label{thm:fsthermg_pspace_contained}
    $\FSThermG$ is  $\PSPACE$-complete.
\end{restatable}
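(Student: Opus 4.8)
The approach is to follow the template by which \cref{thm:fstherm_pspace_complete} places $\FSTherm$ in $\PSPACE$: reduce deciding $\FSThermG$ to estimating the two numbers $\barAN$ and $\tr{\AN\rhog}$ to additive accuracy $\eta:=(c-1)\epsilon/8$ and comparing them, which decides the instance since the promise gap is $(c-1)\epsilon$. The first number requires no new work --- the containment $\FSRelax\in\PSPACE$ underlying \cref{thm:fsrelax_pspace_complete} is proved by computing the diagonal-ensemble value $\barAN=\sum_E\tr{P_E\ketbra{\psi_0}{\psi_0}P_E\AN}$ ($P_E$ the spectral projector of $H$ at energy $E$) to accuracy $2^{-\poly{N}}$ in polynomial space, and we reuse that routine. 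So the whole task is to estimate $\tr{\AN\rhog}$ in $\PSPACE$, where $\rhog=e^{-\beta^* H}/\tr{e^{-\beta^* H}}$ and, by strict monotonicity of $g(\beta):=\tr{He^{-\beta H}}/\tr{e^{-\beta H}}=-\partial_\beta\log{\tr{e^{-\beta H}}}$, the inverse temperature $\beta^*$ is the unique solution of $g(\beta^*)=E$ with $E=\bra{\psi_0}H\ket{\psi_0}$ (with $\beta^*=\pm\infty$, i.e.\ $\rhog=\Pi_{\min}/r_{\min}$ or $\Pi_{\max}/r_{\max}$, when $E$ equals an extremal eigenvalue).

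The core primitive is that for any fixed real $\beta$ with $|\beta|\le 2^{\poly{N}}$ and any local operator $M\in\{\mathds{1},H,\AN\}$, the ratio $\tr{Me^{-\beta H}}/\tr{e^{-\beta H}}$ is computable to accuracy $2^{-\poly{N}}$ in polynomial space. I would do this by scaling-and-squaring with renormalization: choose $m=\poly{N}$ with $|\beta|\norm{H}/2^m\le 1$, form the truncated Taylor series $C_0\approx e^{-\beta H/2^m}$ (convergent to accuracy $2^{-\poly{N}}$ in $\poly{N}$ terms, with entries of size $\bigO{1}$ obtained as short sums over entries of powers $H^j$, $j\le\poly{N}$, each evaluated by enumerating the weighted walks of length $j$ through the row-sparse support of $H$ one at a time --- at most $\poly{N}$ choices per step), and set $D_0=C_0/\tr{C_0}$, $D_j=D_{j-1}^2/\tr{D_{j-1}^2}$. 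A telescoping identity gives $D_j=C_0^{2^j}/\tr{C_0^{2^j}}$, hence $D_m=e^{-\beta H}/\tr{e^{-\beta H}}$; the renormalization keeps every intermediate matrix entry of magnitude at most $1$ with $\tr{D_{j-1}^2}\ge d^{-N}$, so $\poly{N}$ bits of precision suffice throughout, and each entry of $D_m$ is produced by the depth-$m$ recursion (an entry of $D_j$ being a $d^N$-term sum over entries of $D_{j-1}$) in polynomial space. Finally $\tr{Me^{-\beta H}}/\tr{e^{-\beta H}}=\sum_{x,y}M_{xy}(D_m)_{yx}$, a sum with only $\poly{N}\cdot d^N$ nonzero terms because the local operator $M$ is row-sparse.

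With the primitive in hand, $\beta^*$ is found by binary search. Since $g$ is continuous and strictly decreasing from $\lambda_{\max}$ to $\lambda_{\min}$, whenever $E$ lies strictly between the extremal eigenvalues, bisecting $\beta$ over $[-2^{\poly{N}},2^{\poly{N}}]$ and evaluating $g$ with the primitive at each step yields, after $\poly{N}$ steps, a value $\hat\beta$ with $|\tr{\AN e^{-\hat\beta H}}/\tr{e^{-\hat\beta H}}-\tr{\AN\rhog}|\le\eta$. Two facts make this work: the map $\beta\mapsto\tr{\AN e^{-\beta H}}/\tr{e^{-\beta H}}$ is Lipschitz in $\beta$ with constant $\norm{\AN}\sqrt{\mathrm{Var}_\beta(H)}=\bigO{\poly{N}}$, since its derivative is $-\mathrm{Cov}_\beta(\AN,H)$; and wherever $g$ is too flat for the bisection to resolve the sign of $g(\beta)-E$ within the evaluation error, $\mathrm{Var}_\beta(H)$ is correspondingly small and so $\tr{\AN e^{-\beta H}}/\tr{e^{-\beta H}}$ varies negligibly there, making any bisection choice harmless.

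The main obstacle is the case $|\beta^*|>2^{\poly{N}}$ (equivalently, $g$ fails to reach $E$ within the grid). This can genuinely occur: $\lambda_{\min}$ is a root of the degree-$2^N$ characteristic polynomial of $H$, so root-separation bounds allow both $E-\lambda_{\min}$ and the spectral gap near $\lambda_{\min}$ to be as small as $2^{-2^{\poly{N}}}$, forcing $\beta^*$ as large as $2^{2^{\poly{N}}}$, at which point the scaling-and-squaring recursion would need exponential depth. My plan here is to pass to the low-energy sector using spectral data that remains $\PSPACE$-computable despite doubly-exponentially small gaps: the number of eigenvalues of $H$ below a rational threshold $t$ is a count of sign changes among the leading principal minors of $H-t\mathds{1}$, each a determinant of a succinctly specified $2^N\times 2^N$ matrix and hence in $\PSPACE$, so a pigeonhole argument plus binary search yields a threshold $\theta$ with a $2^{-\poly{N}}$-wide spectral gap around it above which every eigenstate has weight $\le\eta$ in $\rhog$; the Gibbs expectation then reduces to one inside the block $\Pi_{<\theta}H\Pi_{<\theta}$, whose entries are computed in $\PSPACE$ from a Riemann-summed resolvent contour integral $\frac{1}{2\pi i}\oint(z\mathds{1}-H)^{-1}\,\mathrm{d}z$ along a D-shaped contour that stays $2^{-\poly{N}}$ from the (real) spectrum, each integrand entry being a ratio of determinants. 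The genuine difficulty, which I expect to be the crux, is controlling the possibly many nested scales of quasi-degeneracy at the bottom of the spectrum while keeping all thresholds representable in $\poly{N}$ bits --- plausibly by reading off the low-degree local factor of $\det(z\mathds{1}-H)$ near $\lambda_{\min}$ from its $\PSPACE$-computable Taylor coefficients and handling polynomial-size clusters exactly.
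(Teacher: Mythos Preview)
Your approach differs substantially from the paper's. The paper dispatches containment in essentially one line (\cref{thm:fstherm_gibbs_containment}): cite block-encoding Gibbs samplers to prepare $\rhog$ with circuit depth $\bigO{\sqrt{\beta}\log{1/\epsilon}}$ and $\bigO{\sqrt{d^N/Z}}$ amplitude-amplification rounds, note $Z\ge d^Ne^{-\beta\norm{H}}$, and invoke $\PrQSPACE=\PSPACE$. It neither explains how $\beta$ is obtained from $E$ nor justifies the implicit assumption $\beta\norm{H}=\poly{N}$ needed for both the depth bound and the $Z$-bound to yield polynomial space. So the paper's argument, read literally, handles strictly less than your scaling-and-squaring plus bisection, and with less care.

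Your classical route is more self-contained. Two remarks on execution. First, a wrong bisection step discards the half containing $\beta^*$ permanently, so ``any choice harmless'' needs the quantitative form
\[
\bigl|\tr{\AN\rho_{\hat\beta}}-\tr{\AN\rho_{\beta^*}}\bigr|\le\norm{\AN}\int_{\beta^*}^{\hat\beta}\sqrt{-g'(\beta)}\,d\beta\le\norm{\AN}\sqrt{|\hat\beta-\beta^*|\cdot|g(\hat\beta)-E|},
\]
and one checks that the terminal $\hat\beta$ always satisfies $|g(\hat\beta)-E|=\bigO{\epsilon_{\mathrm{eval}}}$ (the left endpoint of the working interval can only move right at a midpoint where $|g-E|\le\epsilon_{\mathrm{eval}}$); taking $\epsilon_{\mathrm{eval}}=2^{-\poly{N}}$ then beats the $2^{\poly{N}}$ interval length under the square root. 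Second, error growth through $m=\poly{N}$ renormalized squarings is at most $d^{\bigO{Nm}}=2^{\poly{N}}$, so $\poly{N}$ bits suffice, as you claim.

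The genuine gap is the one you already flag: $|\beta^*|>2^{\poly{N}}$. Your spectral-projector plan isolates a low-energy block, but a gap above some threshold $\theta$ does not determine the Gibbs weights \emph{within} the block: the ratios $e^{-\beta^*(\lambda_i-\lambda_1)}$ depend on both the unknown $\beta^*$ and on differences $\lambda_i-\lambda_1$ that can be $2^{-2^{\poly{N}}}$, beyond $\PSPACE$-accessible precision. This is not a defect relative to the paper---the paper simply does not treat this regime---but if you want a complete argument rather than one matching the paper's level of sketch, this is where the work lies and your outline does not yet close it.
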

Finally, we show that the experimentally relevant problem of determining whether finite-sized systems of qubits thermalize in polynomial time is $\BQP$-complete.

Our results show evidence for the difficulty of understanding the mechanisms of relaxation and thermalization, and it's intractability from a complexity theoretic perspective even for systems in nature or the lab.
\cref{thm:fsrelax_pspace_complete} demonstrates that we should not expect there to be an efficient way of finding (or even approximating) the time-averaged expectation value of local observables.
Furthermore, \cref{thm:fstherm_pspace_complete} shows that in general any ``easy-to-check'' condition which ensures thermalization will not apply to all Hamiltonians.
The systems we design may be of further physical interest, since our results show that even simple systems (translationally invariant, 1D nearest-neighbour Hamiltonians and initial product states) can exhibit highly complex relaxation and thermalization dynamics.

The rest of this paper is organized as follows: 
In \cref{sec:preliminaries} we define relevant complexity classes and introduce the relevant mathematical quantities for thermalization.
In \cref{sec:hardness} we prove the $\PSPACE$-hardness of $\FSRelax$.
In \cref{sec:containment} we prove that $\FSRelax \in \PSPACE$. Combined with our result on $\PSPACE$-hardness, this shows that $\FSRelax$ is $\PSPACE$-complete. In \cref{sec:fstherm_containment}, we prove that $\FSTherm \in \PSPACE$ and $\FSThermG \in \PSPACE$.  In \cref{sec:FSThermHardness}, we derive the quantum polynomial time reduction from $\FSTherm$ to $\FSRelax,$ as well as the classical polynomial time reduction of $\FSThermG$ to $\FSRelax$. Together, these prove our main results \cref{thm:fstherm_pspace_complete} and \cref{thm:fsthermg_pspace_contained}. Finally, in \cref{sec:FTFSRelax}, we prove that deciding whether a finite size system relaxes in finite time is $\BQP$-complete. We conclude with a discussion of our results and  outlook for future work in \cref{sec:discussion}.

\section{Preliminaries}
\label{sec:preliminaries}
We briefly introduce classical and quantum Turing machines. A more in-depth treatment can be found in \textcite{arora2009computational} and \textcite{Watrous_1999}. Turing machines consist of a tape with cells onto which symbols can be written, as well as a tape head, which  both stores a state and points to a specific tape cell. The tape head is able to read the symbol on the cell, and based on the read symbol and the head's state, (potentially) both write to the cell and move the head to an adjacent cell, according to a transition function. A quantum Turing machine can be defined similarly, but with a quantum tape and the ability to perform quantum operations (Toffoli gates, Hadamard gates, phase-shift gates, or single-qubit measurements in the computational basis) on the quantum tape \cite{watrous2008quantum}.
A promise problem is a pair $(A_{yes}, A_{no})$ where $A_{yes}, A_{no} \subseteq \{0,1\}^*$ are sets of strings satisfying $A_{yes} \cap A_{no} = \varnothing$ \cite{watrous2008quantum}.
We are primarily interested in the following classes of promise problems.
The first class $\PSPACE$, is the class of decision problems solvable by a deterministic Turing Machine in polynomial space.
More formally:
\begin{definition}[$\PSPACE$]
    A promise problem $A = (A_{yes}, A_{no})$ is in $\PSPACE$ (polynomial space) if and only if there exists a deterministic Turing machine running in polynomial space that accepts every string $x \in A_{yes}$ and rejects every string $x \in A_{no}$. 
\end{definition}

\noindent $\PSPACE$  contains the commonly known classes $\Pclass, \BQP, \NP$, and it is believed that $\PSPACE$ is not contained in them (and thus $\PSPACE$-complete problems are believed to be harder to solve).
There is a corresponding class for quantum computers, $\BQPSPACE$, which is roughly the class of decision problems solvable by a quantum Turing Machine in polynomial space.

\begin{definition}[$\BQPSPACE$,\cite{watrous2008quantum}]
    A promise problem $A = (A_{yes}, A_{no})$ is in $\BQPSPACE$ (bounded-error quantum polynomial space) if and only if there exists a quantum Turing machine running in polynomial space that accepts every string $x \in A_{yes}$ with probability at least 2/3 and accepts every string $x \in A_{no}$ with probability at most 1/3.
\end{definition}
\noindent It is known that $\BQPSPACE= \PSPACE$ \cite{watrous2008quantum}. 

\begin{definition}[$\PrQSPACE$ \cite{watrous2008quantum}]
    A promise problem $A = (A_{yes}, A_{no})$ is in $\PrQSPACE$ (probabilistic quantum polynomial space) if and only if there exists a quantum Turing machine
    $M$ running in polynomial space that accepts every string $x \in A_{yes}$ with probability strictly greater than $1/2$ and accepts every string $x \in A_{no}$ with probability at most $1/2$.
\end{definition}
\noindent It is known that $\PrQSPACE = \PSPACE$ \cite{watrous2008quantum}. 

\subsection{Algorithmic primitives}
We make use of two algorithmic primitives: block encodings and quantum phase estimation.
Block encodings allow the implementation of arbitrary sub-normalized matrices as the upper left block of a unitary \cite{gilyen2018qsvt, Childs_Kothari_Somma_2017, Berry_Childs_Cleve_Kothari_Somma_2015, block_encodings, Low2019hamiltonian}. 
We restate Definition 43 from \cite{gilyen2018qsvt} here:
\begin{definition}[Block Encoding]
\label{def:block_encoding}
    Suppose that $A$ is an $s$-qubit operator, $\alpha, \epsilon \in \mathbb{R}_{>0}$, and $a \in \mathbb{N}$. Then an $(s+a)$-qubit unitary $U$ is an $(\alpha, a, \epsilon)$-block encoding of $A$ if and only if
    \begin{equation}
        \norm{A - \alpha((I\otimes\bra{0^a}) U (I\otimes\ket{0^a}) )} \leq \epsilon
    \end{equation}
\end{definition}
Let $s$ represent the $s$-qubit register, and $a$ represent the ancillas used by the block-encoding. Then, for an initial state $\ket{\psi}$,
\begin{equation}
\label{eq:block_encoding_state}
    U\ket{\psi} = A'_s \ket{\psi}\ket{0^a}_a + \ket{\phi^{\perp}}
\end{equation}
where $\bra{0^a}_a \ket{\phi^{\perp}} = 0$ and $\norm{A-A'_s} \leq \epsilon$. Usually, when we make use of a block-encoding, we will postselect on the ancillas being in the state $\ket{0^a}$, since this heralds the correct application of $A'_s$ to $\ket{\psi}$.

We now restate some standard results on the complexity of preparing block encodings of (functions of) matrices.
\begin{lemma}[Block Encoding Sparse Matrices (Lemma 48 of \cite{gilyen2018qsvt})]\label{lemma:sparseblockenc} Let $A\in M_{2^w}(\mathbb{C})$ be a $s$-sparse matrix with $\vert A_{ij}\vert\leq 1$ for all $i,j,$ and for which both the position of nonzero entries and these non-zero entries' values have efficient classical descriptions (i.e., can be computed in polynomial time classically). Then, there exists a quantum circuit implementing a $(s, w+3, \varepsilon)$ block encoding of $A$ with $\bigO*{\poly{w}+\poly{\log{\frac{s^2}{\varepsilon}}}}$ fundamental gates and $w+\bigO*{\poly{\log{\frac{s^2}{\varepsilon}}}}$ space.
\end{lemma}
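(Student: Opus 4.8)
The statement is the specialization of Lemma 48 of \cite{gilyen2018qsvt} in which the usual quantum query oracles for the sparsity pattern and the entries are replaced by the stronger assumption that both admit classical polynomial-time descriptions. The plan therefore has two parts: first, realize those oracles as explicit reversible circuits using the classical descriptions; second, run the standard sparse block-encoding construction (as in \cite{Childs_Kothari_Somma_2017, Berry_Childs_Cleve_Kothari_Somma_2015, gilyen2018qsvt}) on top of these circuits and bound the gate count and space.

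\emph{Oracles.} From the poly-time description of the nonzero pattern I would build a reversible circuit $O_{\mathrm{loc}}\colon \ket{i}\ket{k}\ket{0^{w}}\mapsto \ket{i}\ket{k}\ket{c_{i,k}}$, where $c_{i,k}$ is the column index of the $k$-th nonzero entry of row $i$ (with a fixed dummy convention if row $i$ has fewer than $k$ nonzeros), together with its column analogue (needed since $A$ is not assumed Hermitian). From the poly-time description of the entries I would build $O_{\mathrm{val}}\colon \ket{i}\ket{j}\ket{0^{b}}\mapsto \ket{i}\ket{j}\ket{\widetilde{A_{ij}}}$, where $\widetilde{A_{ij}}$ is a $b$-bit fixed-point approximation of the real and imaginary parts of $A_{ij}$ with error at most $2^{-b}$. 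Each of these is computed by a classical Boolean circuit of size $\poly(w)$, respectively $\poly(w)+\poly(b)$, so by standard reversibilization (Bennett) each becomes a Toffoli/CNOT/NOT circuit of size $\bigO{\poly{w}+\poly{b}}$ using $\bigO{\poly{w}+\poly{b}}$ scratch qubits that are uncomputed after use.

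\emph{Block encoding and resources.} With these oracles in hand I would run the standard construction: introduce a $w$-qubit index ancilla, one ancilla qubit onto which the entry value is amplitude-encoded, and a constant number of flag qubits for the Hermitian dilation $\bar A := \ketbra{0}{1}\otimes A + \ketbra{1}{0}\otimes A^{\dagger}$ and for the compare-and-conditional-swap symmetrization step (this is what accounts for the ``$w+3$'' ancillas). On input $\ket{0^{a}}\ket{j}$ one prepares the uniform superposition $\tfrac1{\sqrt{s}}\sum_{k=1}^{s}\ket{k}$ on the index register (exactly when $s$ is a power of two, otherwise to error a constant fraction of $\varepsilon$ in $\bigO{\poly{\log{s}}}$ gates, or by padding $s$ to the next power of two at the cost of a factor at most $2$ in the normalization), uses $O_{\mathrm{loc}}$ to move to the relevant neighbor index, calls $O_{\mathrm{val}}$ and applies a controlled single-qubit rotation by $2\arcsin(\lvert\widetilde{A_{ij}}\rvert)$ about $Y$ followed by $\arg(\widetilde{A_{ij}})$ about $Z$ (angles obtained from the $b$-bit string by reversible arithmetic of size $\bigO{\poly{b}}$, clipping $\lvert\widetilde{A}\rvert$ to $1$ when rounding overshoots), uncomputes $O_{\mathrm{val}}$, swaps the index register into the operand register, and runs $O_{\mathrm{loc}}$ once more with the conditional-swap bookkeeping to return the ancillas toward $\ket{0}$ on exactly the support of $A$. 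Postselecting the ancillas on $\ket{0^{a}}$ heralds $\widetilde{A}/s$. For the error analysis: $\widetilde{A}-A$ is supported on the same $s$-sparse pattern with every entry of size $\bigO{2^{-b}}$ (value rounding plus $\arcsin$-clipping), so $\norm{A-\widetilde{A}} = \bigO{s\,2^{-b}}$, and combining this with the state-preparation error and the standard bound relating a block encoding's error to the errors of its components, choosing $b = \bigO{\log{\tfrac{s}{\varepsilon}}} \subseteq \bigO{\poly{\log{\tfrac{s^{2}}{\varepsilon}}}}$ makes the total block-encoding error at most $\varepsilon$. The circuit makes a constant number of calls to $O_{\mathrm{loc}}, O_{\mathrm{val}}$ and their inverses, plus the $\bigO{\poly{\log{s}}}$-gate state preparation and $\bigO{\poly{b}}$ arithmetic, for $\bigO{\poly{w}+\poly{\log{\tfrac{s^{2}}{\varepsilon}}}}$ fundamental gates; the only persistent qubits beyond the operand register are the $w+3$ block ancillas and all arithmetic scratch is reused, giving space $w+\bigO{\poly{\log{\tfrac{s^{2}}{\varepsilon}}}}$.

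\emph{Main obstacle.} The delicate point is the correctness of the final uncomputation when $A$ is not Hermitian: there is no symmetric $\sqrt{A_{ij}}$-amplitude factorization, so one must pass through the Hermitian dilation $\bar A$ and verify that the row/column-oracle calls together with the compare-and-conditional-swap step return the index ancilla to $\ket{0^{w}}$ on \emph{precisely} the nonzero pattern, so that postselection yields $A/s$ rather than an operator with leaked mass; carefully propagating the $b$-bit rounding through the $\arcsin$ rotation into the claimed operator-norm bound is the other point that must be done with care. Everything else is the routine classical-to-reversible compilation of the oracles and resource counting, so the proof amounts to checking that this reduction of the oracle model to the classical-computability model preserves the resource bounds of Lemma 48 of \cite{gilyen2018qsvt}.
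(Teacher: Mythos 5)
The paper offers no proof of this lemma: it is stated as a verbatim specialization of Lemma~48 of \cite{gilyen2018qsvt}, with the required sparsity and value oracles implicitly supplied by the polynomial-time classical descriptions. Your proposal reconstructs exactly that intended argument --- compile the oracles reversibly from the classical descriptions and run the standard sparse block-encoding construction, tracking gates and reusable scratch space --- so it is correct and takes essentially the same approach (the only quibble being that the Hermitian-dilation detour is unnecessary, since Lemma~48 handles non-Hermitian $A$ directly via separate row and column oracles, and the dilation would perturb the exact ``$w+3$'' ancilla count by one).
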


\begin{lemma}[Polynomial Functions of Block Encodings (Thm. 56 of \cite{gilyen2018qsvt})] \label{lemma:polyblock} Let $U\in U(2^n)$ be a  $(\alpha,a,\varepsilon)$-block encoding of some Hermitian $A$ with a circuit that uses $\mathscr{A}_U$ ancillas, $\mathscr{C}_U$ fundamental gates, and whose circuit has a classical description computable in time $\mathscr{T}_U$. Further, let $p\in\mathbb{R}[x]$ be a polynomial of degree $d$ such that $\sup_{x\in[-1,1]}\vert p(x)\vert \leq 1/2.$ Then, for any $\delta\geq 0$ there exists a  circuit using $\bigO{d(a+\mathscr{C}_U)}$ fundamental gates and $\bigO{d\mathscr{A}_U}$ ancillas 
to produce a $(1,a+2,4d\sqrt{\varepsilon/\alpha}+\delta)$ encoding $\tilde{U}_\delta$ of $p(A/\alpha).$ A classical description of this circuit can be computed in time $\bigO{d\mathscr{T}_U+\poly{d}+\poly{\log{\delta^{-1}}}}.$
\end{lemma}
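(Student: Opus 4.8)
This statement is the Gilyén--Su--Low--Wiebe quantum singular value transformation theorem (Thm.~56 of \cite{gilyen2018qsvt}) specialized to approximate block encodings of Hermitian matrices, so the plan is to reconstruct the QSVT pipeline: a parity decomposition, quantum signal processing at the level of $2\times 2$ matrices, qubitization to lift this to the matrix level, a linear-combination-of-unitaries step to recombine, and finally a robustness analysis that absorbs the $\varepsilon$-slack of the input block encoding. First I would reduce to parity-definite polynomials: write $p = p_{\mathrm e} + p_{\mathrm o}$ with $p_{\mathrm e}(x)=\tfrac12(p(x)+p(-x))$ and $p_{\mathrm o}(x)=\tfrac12(p(x)-p(-x))$, so that $\sup_{x\in[-1,1]}|p(x)|\le 1/2$ forces $\sup_{[-1,1]}|2p_{\mathrm e}|\le 1$ and $\sup_{[-1,1]}|2p_{\mathrm o}|\le 1$, each of definite parity. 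This is exactly where the hypothesis $\le 1/2$ (rather than $\le 1$) is needed: a final LCU of the two subnorm-$1$ encodings will rescale by $\tfrac12$ and land back at subnorm $1$. Thus it suffices to build subnorm-$1$ block encodings of $2p_{\mathrm e}(A/\alpha)$ and $2p_{\mathrm o}(A/\alpha)$.

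For a real, parity-definite $q$ of degree $\le d$ with $\|q\|_{[-1,1]}\le 1$, I would invoke the quantum signal processing characterization (Low--Chuang, and \cite{gilyen2018qsvt}): there is a real complementary polynomial and a phase list $\Phi_q=(\phi_0,\dots,\phi_d)\in\mathbb R^{d+1}$, obtained by completing $1-q^2$ and a root-finding procedure, such that the alternating product of $z$-rotations $e^{i\phi_k Z}$ and the signal unitary $W(x)=\left(\begin{smallmatrix}x & i\sqrt{1-x^2}\\ i\sqrt{1-x^2} & x\end{smallmatrix}\right)$ realizes $q(x)$ as a matrix entry. I would then lift this to the given block encoding $U$ by qubitization: set $\Pi=\ketbra{0^a}{0^a}\otimes I$, implement the phased reflection $e^{i\phi(2\Pi-I)}$ with one fresh ancilla and a multiply-controlled phase rotation ($\bigO{a}$ fundamental gates), and form the alternating circuit $\prod_k e^{i\phi_{2k-1}(2\Pi-I)}\,U^\dagger\, e^{i\phi_{2k}(2\Pi-I)}\,U$ (with the usual truncation when $d$ is odd). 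By Jordan's lemma, $U$ together with $\Pi$ decomposes the space into one- and two-dimensional invariant subspaces on which this circuit acts as the QSP sequence evaluated at the singular values of $\tilde A\coloneqq(\bra{0^a}\otimes I)\,U\,(\ket{0^a}\otimes I)$; since $A$ is Hermitian, the singular value transform of an even or odd $q$ coincides with the eigenvalue transform $q(\tilde A)$. Hence this circuit is a subnorm-$1$ block encoding of $q(\tilde A)$ using $\bigO{d}$ applications of $U$ and $U^\dagger$ interleaved with $\bigO{d}$ phased reflections.

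Applying this to $q=2p_{\mathrm e}$ and $q=2p_{\mathrm o}$ and combining them by a one-ancilla LCU (Hadamard, controlled choice of the two subcircuits, Hadamard) produces a subnorm-$1$ block encoding of $\tfrac12\big(2p_{\mathrm e}(\tilde A)+2p_{\mathrm o}(\tilde A)\big)=p(\tilde A)$. The ancilla count is then the $a$ block ancillas of $U$, one for the phased reflections, and one for the LCU, i.e.\ the claimed $a+2$; the fundamental-gate count is $\bigO{d}$ layers of $\mathscr C_U$ gates plus $\bigO{d}$ controlled rotations of $\bigO{a}$ gates each, giving $\bigO{d(a+\mathscr C_U)}$, while $\bigO{d\mathscr A_U}$ is the conservative workspace bound obtained by giving each layer its own copy of $U$'s internal ancillas. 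The classical description is assembled from $\bigO{d}$ copies of the description of $U$ ($\bigO{d\mathscr T_U}$), the $\bigO{d}$ explicit multiply-controlled rotations ($\poly{d}$), and the two QSP phase lists; the angle-finding runs in time $\poly{d}+\poly{\log{\delta^{-1}}}$ to the precision needed below, matching $\bigO{d\mathscr T_U+\poly{d}+\poly{\log{\delta^{-1}}}}$.

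It remains to account for two error sources. Finite-precision QSP angles implement polynomials within $\bigO{\delta'}$ of $2p_{\mathrm e},2p_{\mathrm o}$ for a controllable $\delta'$, contributing the additive $\delta$ in the encoding error. The second, more delicate source is the slack in $U$: by definition $\norm{A/\alpha-\tilde A}\le \varepsilon/\alpha$, whereas the construction yields $p(\tilde A)$, not $p(A/\alpha)$. To bound $\norm{p(A/\alpha)-p(\tilde A)}$ I would use robustness of the singular value transform (Lemma~23 of \cite{gilyen2018qsvt}): complete $A/\alpha$ and $\tilde A$ to their canonical unitary block encodings $\left(\begin{smallmatrix} B & \sqrt{I-BB^\dagger}\\ \sqrt{I-B^\dagger B} & -B^\dagger\end{smallmatrix}\right)$, use operator H\"older-$\tfrac12$ continuity of the matrix square root to conclude these unitaries are $\bigO{\sqrt{\varepsilon/\alpha}}$-close, and note that the QSVT circuit is a length-$\bigO{d}$ word in the signal unitary and fixed rotations and hence $\bigO{d}$-Lipschitz in it; since the QSVT output depends on a block encoding only through its encoded operator, this transfers to $\norm{p(\tilde A)-p(A/\alpha)}$ and, after tracking constants, gives exactly $4d\sqrt{\varepsilon/\alpha}$. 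I expect the main obstacles to be precisely these two points: making the qubitization step fully rigorous (the Jordan-subspace decomposition, the use of Hermiticity of $A$ so that the singular value and eigenvalue transforms agree, and the sign/parity conventions in the alternating product), and pinning down the constant $4$ in the robustness estimate.
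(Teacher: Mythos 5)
The paper does not actually prove this lemma---it is imported verbatim as Theorem~56 of \cite{gilyen2018qsvt}---so there is no in-paper argument to compare against; your reconstruction (parity split using the $\le 1/2$ bound, QSP phase sequences, qubitization via phased reflections about $\ketbra{0^a}{0^a}\otimes I$, a one-ancilla LCU to recombine, and the robustness bound $4d\sqrt{\varepsilon/\alpha}$ from their Lemma~23) is essentially the proof given in that reference, with the resource counts matching. The one imprecision worth flagging is that $\tilde A = (\bra{0^a}\otimes I)\,U\,(\ket{0^a}\otimes I)$ need not be Hermitian even when $A$ is, so the identification of the singular value transform with the eigenvalue transform should be made for $A/\alpha$ \emph{after} the robustness step rather than for $\tilde A$ directly, which is how the cited proof arranges it.
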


To make full use of this, we need some polynomial approximations of specific functions we wish to block encode. 
\begin{lemma}[Polynomial Approximation of Square Root]
\label{lemma:polyapprox_sqrt}
For all $\epsilon\leq 1/2,$ there exists an efficiently computable polynomial $P^{(\epsilon)}_{\text{sqrt}}(x)\in\mathbb{C}[x]$ of degree $\bigO{\log{\epsilon^{-1}}}$ such that \begin{equation}\sup_{x\in[-1,1]}\left\vert P_{\mathrm{sqrt}}^{(\epsilon)}(x)-\sqrt{1+x/2}
\right\vert \leq \epsilon.\end{equation}
\end{lemma}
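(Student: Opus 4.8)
The plan is to take $P^{(\epsilon)}_{\mathrm{sqrt}}$ to be a truncation of the binomial series of $\sqrt{1+t}$ evaluated at $t = x/2$. The function $t \mapsto \sqrt{1+t}$ is analytic on the open unit disk $|t|<1$, its only singularity being a branch point at $t=-1$; since $x\in[-1,1]$ forces $|t| = |x/2| \le \tfrac12$, we stay strictly inside the disk of convergence, the Taylor coefficients are geometrically suppressed at the evaluation point, and so a truncation of logarithmic degree will suffice.

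Concretely, first I would recall the identity $\sqrt{1+t} = \sum_{k\ge0}\binom{1/2}{k} t^{k}$ for $|t|<1$, together with the elementary bound $\bigl|\binom{1/2}{k}\bigr| \le 1$ for all $k\ge0$ (one has $\binom{1/2}{0}=1$ and $\bigl|\binom{1/2}{k+1}\bigr|/\bigl|\binom{1/2}{k}\bigr| = (k-\tfrac12)/(k+1) < 1$ for $k\ge1$). I then set
\[
P^{(\epsilon)}_{\mathrm{sqrt}}(x) := \sum_{k=0}^{d}\binom{1/2}{k}\Bigl(\frac{x}{2}\Bigr)^{k}, \qquad d := \text{the least integer with } 2^{-d}\le\epsilon,
\]
so that $d = \bigO{\log{\epsilon^{-1}}}$ and, since $\epsilon\le\tfrac12$, $d\ge 1$. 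For $x\in[-1,1]$ the truncation error is then
\[
\Bigl|P^{(\epsilon)}_{\mathrm{sqrt}}(x) - \sqrt{1+x/2}\Bigr| = \Bigl|\sum_{k>d}\binom{1/2}{k}\Bigl(\frac{x}{2}\Bigr)^{k}\Bigr| \le \sum_{k>d}\Bigl(\frac12\Bigr)^{k} = 2^{-d} \le \epsilon,
\]
which is exactly the claimed uniform bound, with degree $\bigO{\log{\epsilon^{-1}}}$.

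For efficient computability I would note that the coefficients obey the recursion $\binom{1/2}{k+1} = \binom{1/2}{k}\cdot\frac{1/2-k}{k+1}$, so each is a rational number whose numerator and denominator have $\bigO{k}$ bits; hence all $d+1$ coefficients, and thus a description of $P^{(\epsilon)}_{\mathrm{sqrt}}$, can be produced in time $\poly{d} = \poly{\log{\epsilon^{-1}}}$. If a fixed-precision representation is wanted instead, rounding each coefficient to $\bigO{d}$ bits perturbs the polynomial by at most $\epsilon/2$ uniformly on $[-1,1]$, which is absorbed by doubling $d$.

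I do not expect a real obstacle here; the statement is a routine approximation-theory fact, and the only points needing care are (i) having a clean bound on $\bigl|\binom{1/2}{k}\bigr|$ so the tail is dominated by a geometric series, and (ii) confirming that the branch point at $t=-1$ (rather than any point of $[-1,1]$) is the obstruction, so the radius-of-convergence argument genuinely covers the whole interval. One mild caveat for downstream use: if invoking \cref{lemma:polyblock} later requires $\sup_{x\in[-1,1]}\bigl|P^{(\epsilon)}_{\mathrm{sqrt}}(x)\bigr| \le \tfrac12$, one should first rescale $\sqrt{1+x/2}$ by an absolute constant $\le 1$ before truncating, which only changes constants in the statement.
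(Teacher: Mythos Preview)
Your proof is correct and follows essentially the same approach as the paper: both truncate the binomial series of $\sqrt{1+x/2}$ and exploit the fact that $|x/2|\le 1/2$ forces geometric decay of the tail. The only difference is cosmetic --- the paper rewrites the coefficients via central binomial coefficients and then invokes \cite[Cor.~66]{gilyen2018qsvt} as a black box to extract the $\bigO{\log{\epsilon^{-1}}}$ degree bound, whereas you bound the tail directly; your version is in fact the more self-contained of the two.
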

\begin{proof} 
Note that for $x\in [-1,1],$ $\sqrt{1+\frac{x}{2}}= \sum_{n=0}^\infty \frac{1}{1-2n}\binom{2n}{n}\left(-\frac{x}{8}\right)^n.$ We have that $\sum_{n=0}^\infty \frac{1}{\vert 1-2n\vert } \binom{2n}{n} 2^{-3n}=2-2^{-1/2}\approx 1.293.$
 
       Thus, by \cite[Cor. 66]{gilyen2018qsvt}, we have that for any $0<\epsilon \leq 1/2< 1/(2-2^{-1/2}),$
       there exists a polynomial $P_{\text{sqrt}}^{(\epsilon)}(x)$ of degree $
        \bigO{\log{\epsilon^{-1}}}$ (in the language of the cited corollary, $B=2
        -2^{-1/2},$ $\delta = r=1/2,$ and $x_0=0$) that satisfies $\sup_{x\in [-1,1]}\left\vert \sqrt{1+x/2}
        -P_{\text{sqrt}}^{(\epsilon)}(x)
        \right\vert \leq \epsilon$. 
\end{proof}
\begin{lemma}[Polynomial Approximations of Gaussian]
\label{lemma:polyapprox_gaussian}
For any $0<\epsilon \leq 1,$ there exists an efficiently computable polynomial $P_{\text{Gauss}(w)}^{(\epsilon)}(x)\in\mathbb{C}[x]$ of degree $\bigO{\max\{w^{-1},\sqrt{\log{\epsilon^{-1}}}\}\times\sqrt{\log{\epsilon^{-1}}}}$ such that 
$\sup_{x\in[-1,1]}\vert P_{\mathrm{Gauss}(w)}^{(\epsilon)}-e^{-\frac{\pi x^2}{2w^2}}\vert \leq \epsilon.$
\end{lemma}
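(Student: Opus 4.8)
The plan is to reduce the Gaussian to a negative exponential and then compose with the quadratic map $x \mapsto x^2$, so that the problem becomes the standard (Chebyshev/Bessel-based) polynomial approximation of $y \mapsto e^{-\beta y}$. Set $\beta \coloneqq \pi/(2w^2)$; then $e^{-\pi x^2/(2w^2)} = e^{-\beta x^2}$, and as $x$ ranges over $[-1,1]$ the argument $x^2$ ranges over $[0,1]$.

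First I would invoke the standard fact (see \cite{gilyen2018qsvt}) that for every $\beta > 0$ and every $0 < \epsilon \le 1$ there is an efficiently computable real polynomial $q$ of degree $\bigO{\sqrt{\max\{\beta,\log{\epsilon^{-1}}\}\,\log{\epsilon^{-1}}}}$ with $\sup_{y\in[0,1]} |q(y) - e^{-\beta y}| \le \epsilon$. (This is the quadratic-in-$\sqrt\beta$ savings over naive Taylor truncation of $e^{-\beta y} = \sum_n (-\beta y)^n/n!$, and comes from the Jacobi--Anger expansion of the exponential together with the rapid decay of the modified Bessel coefficients $I_k(\beta)$ once $k \gtrsim \sqrt\beta$. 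If the cited statement is instead phrased on $[-1,1]$ for $e^{-\beta(1+y)/2}$, one uses $y = 2x^2 - 1$ in the next step; the bookkeeping is unchanged.) Then I would set $\pgauss{w}{\epsilon}(x) \coloneqq q(x^2)$. Since $x^2 \in [0,1]$ for $x \in [-1,1]$, we get $\sup_{x\in[-1,1]} |\pgauss{w}{\epsilon}(x) - e^{-\pi x^2/(2w^2)}| = \sup_{x\in[-1,1]} |q(x^2) - e^{-\beta x^2}| \le \sup_{y\in[0,1]} |q(y) - e^{-\beta y}| \le \epsilon$, and $\pgauss{w}{\epsilon}$ is efficiently computable since its coefficients are read off from those of $q$.

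It remains to simplify the degree. We have $\deg \pgauss{w}{\epsilon} = 2\deg q = \bigO{\sqrt{\max\{\beta,\log{\epsilon^{-1}}\}\,\log{\epsilon^{-1}}}}$; substituting $\beta = \pi/(2w^2)$ and using $\sqrt{\max\{a,b\}} = \max\{\sqrt a, \sqrt b\}$ gives $\sqrt{\max\{\beta,\log{\epsilon^{-1}}\}} = \max\{\sqrt{\pi/2}\,w^{-1}, \sqrt{\log{\epsilon^{-1}}}\} = \bigtheta{\max\{w^{-1}, \sqrt{\log{\epsilon^{-1}}}\}}$, so the degree is $\bigO{\max\{w^{-1}, \sqrt{\log{\epsilon^{-1}}}\} \times \sqrt{\log{\epsilon^{-1}}}}$, matching the claim.

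I do not expect a real obstacle: all the content sits in the cited exponential-approximation bound, and the rest is an (affine-then-)squaring substitution plus an elementary rewrite of the degree. The only point needing care is matching the domain and exponent-normalization conventions of whichever exponential statement is cited, which is why I would phrase the reduction so that it is agnostic to the choice of interval $[0,1]$ versus $[-1,1]$. If a self-contained proof is preferred, the work shifts to directly estimating the tail $\sum_{k>K} I_k(\beta)$ of the Chebyshev coefficients of $e^{-\beta x^2}$ and verifying that $K = \bigO{\sqrt{\max\{\beta,\log{\epsilon^{-1}}\}\,\log{\epsilon^{-1}}}}$ terms suffice — a routine but slightly delicate Bessel-asymptotics computation, which would be the main technical step of that route.
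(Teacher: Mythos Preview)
Your proposal is correct and follows essentially the same approach as the paper: both reduce to a known polynomial approximation of the decaying exponential and then compose with the quadratic map $x\mapsto x^2$, with the degree bookkeeping handled identically. The only cosmetic difference is that the paper phrases the cited result as approximating $e^{-x}$ on $[0,\pi/(2w^2)]$ (citing Sachdeva--Vishnoi directly) whereas you phrase it as approximating $e^{-\beta y}$ on $[0,1]$; these are the same statement under rescaling.
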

\begin{proof} Note that if $x\in [-1,1],$ $\frac{\pi }{2}\left(\frac{x}{w}\right)^2\in [0, \frac{\pi}{2w^2}].$ Further, note that by \cite[Thm. 4.1]{Sachdeva_Vishnoi_2014}, there exists a polynomial $r_{\epsilon,w}$ of degree $\bigO*{ \sqrt{\max\left\{\frac{\pi}{2w^2}, \log{\epsilon^{-1}} \right\}\log{\epsilon^{-1}}}}$ such that $\sup_{x\in [0,\frac{\pi}{2w^2}]}\vert e^{-x}-r_{\epsilon,w}(x)\vert \leq \epsilon.$ Thus, letting $P_{\text{Gauss}(w)}^{(\epsilon)}(x):=r_{\epsilon, w}\left(\frac{\pi x^2}{w^2}\right),$ we arrive at our result.
\end{proof}
Next, we state a result about the number of ancillas necessary to implement phase estimation to $m$ bits of accuracy with total error $\leq \epsilon$ \cite{kitaev2002classical, Cleve1998}. 
\begin{lemma}[Space Complexity of Phase Estimation]\label{lemma:phaseestspacecomp}
    Given an $n$-qubit state $\ket{\psi},$ an 
    $n$-qubit unitary $U\in U(2^{n})$ such that $U\ket\psi = e^{2\pi i\phi},$ $m\in \mathbb{N},$ and $\epsilon\in (0,1)$ there exists a quantum circuit taking as input $\ket{\psi}\ket{0^{m+1+\lceil2\mathrm{log}_2\epsilon^{-1}\rceil}}$
    and using no additional ancillae which outputs a state $\ket{\psi}\ket{\tilde{\phi}^{(m,\epsilon)}}$ such that $\Vert \Pi_{\phi,m}\ket{\tilde{\phi}^{(m,\epsilon)}}\Vert \leq \epsilon,$ where $\Pi_{\phi,m}:=\sum_{\substack{\mathbf{b}\in \{0,1\}^{m+1+\lceil2\mathrm{log}_2\epsilon^{-1}\rceil}
    \\ \vert 0.b_1b_2\cdots - \phi\vert > 2^{-(m+1)}}}\ket{\boldsymbol{b}}\bra{\boldsymbol{b}}.$ 
\end{lemma}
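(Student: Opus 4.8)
The plan is to run the textbook quantum phase estimation circuit of \cite{Cleve1998, kitaev2002classical} and then carefully account for its qubit count and output fidelity. Write $t := m+1+\lceil 2\mathrm{log}_2\epsilon^{-1}\rceil$ for the size of the ancilla register. First I would apply a Hadamard to each of the $t$ ancilla qubits to create the uniform superposition, then for each $j\in\{0,\dots,t-1\}$ apply controlled-$U^{2^j}$ with the $j$-th ancilla as control, realizing each controlled-$U^{2^j}$ simply by iterating controlled-$U$ a total of $2^j$ times; since $U$ is a genuine $n$-qubit unitary this uses no scratch space beyond the register and the control line itself, so the full controlled-power stage costs $\bigO{2^t}$ applications of controlled-$U$ and introduces no ancillae. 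Because $U\ket\psi = e^{2\pi i\phi}\ket\psi$, the input register stays in $\ket\psi$ and factors out, leaving the ancilla register in $2^{-t/2}\sum_{k=0}^{2^t-1}e^{2\pi i k\phi}\ket{k}$. Finally I would apply the inverse quantum Fourier transform on the ancilla register; the standard in-place circuit for this uses only Hadamard and controlled-phase gates acting within those $t$ qubits, so again no additional ancillae are consumed. The resulting ancilla state is the claimed $\ket{\tilde\phi^{(m,\epsilon)}}$, and the input register has been returned to $\ket\psi$.

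It then remains to bound $\norm{\Pi_{\phi,m}\ket{\tilde\phi^{(m,\epsilon)}}}$, whose square equals the probability that measuring the ancilla register returns a string $\mathbf b$ with $\vert 0.b_1b_2\cdots - \phi\vert > 2^{-(m+1)}$. For this I would run the standard phase-estimation error analysis. Let $b$ be the integer in $\{0,\dots,2^t-1\}$ minimizing $\vert\phi - b/2^t\vert$, so $\vert\phi - b/2^t\vert\le 2^{-(t+1)}$; the amplitude on outcome $\ell$ is the Dirichlet-kernel expression $2^{-t}(e^{2\pi i(2^t\phi-\ell)}-1)/(e^{2\pi i(\phi-\ell/2^t)}-1)$, and the total of its squared modulus over $\ell$ with $\vert\ell-b\vert\bmod 2^t$ exceeding a threshold $e\ge 2$ is at most $\tfrac{1}{2(e-1)}$ \cite{Cleve1998, kitaev2002classical}. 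Since $\phi$ lies within $2^{-(t+1)}$ of $b/2^t$, every outcome with $\vert\ell-b\vert\le 2^{t-m-1}-1$ satisfies $\vert 0.b_1b_2\cdots-\phi\vert\le 2^{-(m+1)}$, so taking $e = 2^{t-m-1}-1$ bounds the bad probability by $\tfrac{1}{2(2^{t-m-1}-2)}$; substituting $t-m-1=\lceil 2\mathrm{log}_2\epsilon^{-1}\rceil$ and checking the cases $\epsilon\in(0,1)$ shows this is at most $\epsilon^2$, so $\norm{\Pi_{\phi,m}\ket{\tilde\phi^{(m,\epsilon)}}}\le\epsilon$.

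I expect this lemma to be essentially bookkeeping with no conceptual obstacle, and the fiddliest parts to be: (i) confirming that the in-place inverse Fourier transform together with the black-box use of controlled-$U$ really keeps the ancilla count at exactly $t$, with no hidden scratch registers; and (ii) tracking the off-by-one constants so that $t-m-1=\lceil 2\mathrm{log}_2\epsilon^{-1}\rceil$ forces the Dirichlet tail below $\epsilon^2$ --- the clean bound $\tfrac{1}{2(e-1)}$ applies once $e\ge 2$, which holds whenever $\epsilon<1/\sqrt2$, and for the remaining (large) values of $\epsilon$ the threshold is $e=1$ and one instead estimates directly the weight on the three central outcomes $\ell\in\{b-1,b,b+1\}$, which already exceeds $1-\epsilon^2$ there. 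It is worth noting explicitly that $\ket\psi$ being an exact eigenvector of $U$ is precisely what lets the input register decouple and be returned unchanged, so the output is the claimed product state.
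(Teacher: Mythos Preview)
The paper does not supply its own proof of this lemma; it is stated as a standard result with citations to \cite{kitaev2002classical,Cleve1998} and immediately followed by the corollary. Your proposal to reconstruct the textbook QPE circuit and Dirichlet-tail analysis from those references is exactly what the citations point to, and the argument is correct. The one place to be a little careful is the constant-chasing in part (ii): with $e=2^{t-m-1}-1$ and $t-m-1=\lceil 2\log_2\epsilon^{-1}\rceil$, the bound $\tfrac{1}{2(e-1)}\le\epsilon^2$ reduces to $4\epsilon^2\le 1$, so the clean tail bound covers $\epsilon\le 1/2$ rather than $\epsilon<1/\sqrt2$; for the remaining range you still need the direct estimate you mention, but this is routine.
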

\noindent In addition, we will need the following corollary about the approximate orthogonality of estimates of distant eigenvalues.
\begin{corollary}\label{cor:QPEOrtho}
    Let $\phi_1,\phi_2$ be eigenphases of $U$ as above, such that $\vert \phi_1-\phi_2\vert > 2^{-m}$ for some $m\in \mathbb{N}.$ Then, if $\ket{\phi_1^{(m,\epsilon)}},\ket{\phi_2^{(m,\epsilon)}}$ are as defined above, 
    \begin{equation}
        \left\vert\left\langle \phi_1^{(m,\epsilon)}\Big\vert \phi_2^{(m,\epsilon)}\right\rangle\right\vert \leq 2\epsilon.
    \end{equation}
\end{corollary}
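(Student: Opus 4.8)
The plan is to obtain \cref{cor:QPEOrtho} as a short consequence of \cref{lemma:phaseestspacecomp}. The idea is that each phase-estimation output decomposes into a ``good'' component supported on the computational-basis strings that are $2^{-(m+1)}$-close to the true eigenphase, plus a ``bad'' component of norm at most $\epsilon$. The separation hypothesis $|\phi_1-\phi_2|>2^{-m}$ then forces the two good components to be supported on disjoint sets of strings, hence orthogonal, so the overlap is controlled entirely by the bad components.

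Concretely, for $j\in\{1,2\}$ I would write $\ket{\phi_j^{(m,\epsilon)}}=\ket{g_j}+\ket{e_j}$ with $\ket{e_j}\coloneqq\Pi_{\phi_j,m}\ket{\phi_j^{(m,\epsilon)}}$ and $\ket{g_j}\coloneqq(I-\Pi_{\phi_j,m})\ket{\phi_j^{(m,\epsilon)}}$. By \cref{lemma:phaseestspacecomp}, $\ket{\phi_j^{(m,\epsilon)}}$ is a normalized state and $\norm{\ket{e_j}}\leq\epsilon$; since $\ket{g_j}$ and $\ket{e_j}$ are images of complementary projectors they are orthogonal, so $\norm{\ket{g_j}}\leq 1$. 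By definition of $\Pi_{\phi_j,m}$, the vector $\ket{g_j}$ is supported only on basis strings $\mathbf{b}$ with $|0.b_1b_2\cdots-\phi_j|\leq 2^{-(m+1)}$; write $S_j$ for this set of strings. If some $\mathbf{b}\in S_1\cap S_2$, then by the triangle inequality $|\phi_1-\phi_2|\leq |0.b_1b_2\cdots-\phi_1|+|0.b_1b_2\cdots-\phi_2|\leq 2^{-m}$, contradicting the hypothesis; hence $S_1\cap S_2=\varnothing$ and $\langle g_1|g_2\rangle=0$.

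Finally I would expand the overlap, grouping the terms asymmetrically so the constant comes out exactly:
\begin{align}
\langle\phi_1^{(m,\epsilon)}|\phi_2^{(m,\epsilon)}\rangle=\langle g_1|g_2\rangle+\langle g_1|e_2\rangle+\langle e_1|\phi_2^{(m,\epsilon)}\rangle,
\end{align}
where the first term vanishes and Cauchy--Schwarz together with the norm bounds above gives $|\langle g_1|e_2\rangle|\leq\norm{\ket{g_1}}\norm{\ket{e_2}}\leq\epsilon$ and $|\langle e_1|\phi_2^{(m,\epsilon)}\rangle|\leq\norm{\ket{e_1}}\norm{\ket{\phi_2^{(m,\epsilon)}}}\leq\epsilon$, so $|\langle\phi_1^{(m,\epsilon)}|\phi_2^{(m,\epsilon)}\rangle|\leq 2\epsilon$. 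This is essentially a one-line calculation and I do not anticipate a genuine obstacle; the only two points worth attention are interpreting $|0.b_1b_2\cdots-\phi|$ in $\Pi_{\phi,m}$ as the relevant distance between eigenphases (so the triangle inequality applies cleanly) and using the asymmetric three-term grouping above rather than the naive four-term expansion, which would yield the slightly weaker bound $2\epsilon+\epsilon^2$.
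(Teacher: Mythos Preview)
Your argument is correct. The paper states \cref{cor:QPEOrtho} without proof, treating it as an immediate consequence of \cref{lemma:phaseestspacecomp}; your decomposition into good/bad components with disjoint-support good parts, together with the asymmetric three-term grouping to obtain the sharp constant $2\epsilon$, is exactly the standard justification one would supply and there is nothing to compare against.
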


\subsection{Thermalization}
Throughout, we will restrict ourselves to local Hamiltonians of the form:
\begin{align}
    H=\sum_{i=1}^m h_i
\end{align}
where $m=\poly{N}$, $h_i$ acts on at most an $\bigO{1}$ number of qudits, and $\norm{h_i}=\bigO{1}.$

Let $A_i$ be a local (single-qudit) observable acting on qudit $i$, then define: 
\begin{align}\label{eq:ANdef}
    \AN = \frac{1}{N}\sum^N_{i=1} A_i.
\end{align}
We will further mainly consider translationally invariant $\barAN$, such that each $A_i = A$.
Given an initial state $\ket{\psi_0}$ and a Hamiltonian $H$, the long time-average of $\AN$ is:
\begin{align}\label{Eq:Time-Averaged_Observable}
    \barAN \coloneqq \lim_{T\rightarrow\infty}\frac{1}{T}\int_0^Tdt  \bra{\psi_0}e^{iHt}\AN e^{-iHt}\ket{\psi_0}
\end{align}

The problem we will be interested in is, roughly, given an initial state $\ket{\psi_0}$, what expectation value of $\AN$ does this state relax to after an infinite amount of time?
In order to study this from a complexity theoretic viewpoint, we define this as a decision problem. 
\begin{definition}[(Formal) Finite Size Relaxation $\FSRelax$]
\label{def:fsrelax}
Consider a $k$-local Hamiltonian $H$ acting on $N$ qudits with constant local Hilbert space dimension $d$, a local  observable $A$, and a fixed value $A^*$.
Let $c, \epsilon$ be positive constants.
    Let the long time average of $A$ be as given in \cref{Eq:Time-Averaged_Observable}.
    
    \emph{\textbf{Input:}} A classical description of product state $\ket{\psi_0}$.
    
    \emph{\textbf{Output:}}  YES if $|\barAN - A^*|\leq \epsilon$ or NO if $|\barAN - A^*|\geq c\epsilon$.
    
    \emph{\textbf{Promise:}} Either $|\barAN - A^*|\leq \epsilon$ or $|\barAN - A^*|\geq c\epsilon$. Let $\{\ket{\lambda_i} \}$ be the eigenstates of $H$. Then, the initial state $\ket{\psi_0} := \sum_i c_i \ket{\lambda_i}$ is only supported on energy eigenstates with an inverse exponential gap, i.e. $\exists \, m = \bigO{\mathrm{poly}(N)} \, s.t. \forall \,i,j \text{ if } c_i, c_j \neq 0 \, , |\lambda_i - \lambda_j| \geq d^{-poly(m)} \, $.
\end{definition}

\noindent $\FSRelax$ decides whether the observable $\AN$ for the input initial state relaxes over a long time to the target value $A^*$. We note that the promise on the initial state strictly makes the problem easier, and so the problem without the promise is at least as hard.

The problem of determining whether the system thermalizes is closely related to determining relaxation, where the observable relaxes specifically to the expected value for a thermodynamic ensemble. We consider two relevant ensembles: the microcanonical ensemble (\cref{def:mc}) and the Gibbs ensemble.
\
The expectation value of a local observable $O$ with respect to the microcanonical ensemble is:
\begin{align}
    \langle O\rangle_{\text{MC}} \coloneqq \tr{O\rhoMC(H,E)} = \frac{1}{W} \sum_{i}  f\left(\frac{\lambda_i-E}{w}\right)\bra{\lambda_i}O\ket{\lambda_i}
\end{align}
where $W = \sum_{i}  f\left(\frac{\lambda_i-E}{w}\right)$ is a normalizing factor and $\{\ket{\lambda_i}\}$ are the eigenstates of the relevant Hamiltonian.

In the rest of the paper, we select $f(x) = e^{-\pi x^2}$, such that
\begin{equation}
\label{eq:microcanonical_gaussian}
    \rhoMC(H, E) = \frac{1}{\sum_i e^{-\pi \left(\frac{\lambda_i - E}{w} \right)^2}} \sum_i e^{-\pi \left(\frac{\lambda_i - E}{w} \right)^2} \ketbra{\lambda_i}{\lambda_i}
\end{equation}
By choosing a Gaussian window function, we ensure that the distribution of energies in the microcanonical ensemble is peaked at $E$ and decays exponentially with distance. While in principle, the microcanonical state could be defined with a rectangular window function, this has the disadvantage of making the state sensitive to the precision of $w$ in the case where there are eigenenergies close to the boundaries $\lvert E \pm w \rvert$. Previous work \cite{apeldoorn2019} has used projectors onto the eigenspace above or below a certain value to prepare the Gibbs states, but with the requirement that the value is sufficiently far from the spectrum of $H$. In order to avoid the requirement of a promise that the boundaries $\lvert E \pm w \rvert$ are away from the spectrum of $H$ and to avoid non-analyticities, we make use of a smooth window function.
For ease of notation, and since we are usually only concerned with the microcanonical ensemble at an energy corresponding to that of a single initial state, in the rest of the paper we denote $\rhoMC(E)$ as $\rhoMC$.

Another state which can represent a thermalized system is the canonical ensemble, or Gibbs state, which represents the state at a fixed temperature $(1/\beta)$.
\begin{definition}[Canonical Ensemble]
\label{def:gibbs} For a Hamiltonian $H$ the canonical ensemble, or Gibbs state $\rhog$ at inverse temperature $\beta$ is defined as
    \begin{equation}
    \rhog = \frac{e^{-\beta H}}{\tr{e^{-\beta H}}},
\end{equation}
\end{definition}

When we consider thermalization to the Gibbs state, $\beta$ is implicitly determined by the expected energy of the initial state $\ket{\psi_0}$, such that $\frac{\tr{H e^{-\beta H}}}{\tr{e^{-\beta H}}} = E$, where $E = \bra{\psi_0} H \ket{\psi_0}$. Since $\frac{\tr{H e^{-\beta H}}}{\tr{e^{-\beta H}}}$ is monotonic in $\beta$, there exists a unique $\beta$ and hence unique $\rhog$ satisfying this constraint.

We now define the decision problems for Thermalization:
\begin{definition}[(Formal) Finite Size Thermalization to the microcanonical ensemble $\FSTherm$]
\label{def:fstherm_formal}

    Consider a $k$-local Hamiltonian $H$ acting on $N$ qudits with local Hilbert space of constant dimension $d$, an observable $\AN$ which is a sum of local observables. 
    Let $c, \epsilon$ be constants.
    Let the long time average of $\AN$ be as given in \cref{Eq:Time-Averaged_Observable}.
    
    \emph{\textbf{Input:}} A classical description of product state $\ket{\psi_0}$.
    
    \emph{\textbf{Output:}} YES if $|\barAN - \tr{\AN \rhoMC(E)}|\leq \epsilon$ or NO if $|\barAN - \tr{\AN \rhoMC(E)}|\geq c\epsilon$, where $E$ is the expected energy of the input state $\bra{\psi_0} H \ket{\psi_0}$.
    
    \emph{\textbf{Promise:}} Either $|\barAN - \tr{\AN \rhoMC(E)}|\leq \epsilon$ or $|\barAN - \tr{\AN \rhoMC(E)}|\geq c\epsilon$. Let $\{\ket{\lambda_i} \}$ be the eigenstates of $H$. Then, the initial state $\ket{\psi_0} := \sum_i c_i \ket{\lambda_i}$ is only supported on energy eigenstates with an inverse exponential gap, i.e. $\exists \, m = \bigO{\mathrm{poly}(N)} \, s.t. \ \ \forall \,i\neq j \text{ if } c_i, c_j \neq 0 \, , |\lambda_i - \lambda_j| \geq d^{-poly(m)} \, $.

\end{definition}

\begin{definition}[(Formal) Finite Size Thermalization to the Gibbs ensemble $\FSThermG$]
\label{def:fstherm_gibbs_formal}

    Consider a $k$-local Hamiltonian $H$ acting on $N$ qudits with local Hilbert space of constant dimension $d$, an observable $\AN$ which is a sum of local  observables. 
    Let $c, \epsilon$ be constants.
    Let the long time average of $\AN$ be as given in \cref{Eq:Time-Averaged_Observable}.
    
    \emph{\textbf{Input:}} A classical description of product state $\ket{\psi_0}$.
    
    \emph{\textbf{Output:}}  YES if $|\barAN - \tr{\AN \rhog}|\leq \epsilon$ or NO if $|\barAN - \tr{\AN \rhog}|\geq c\epsilon$, where $\rhog$ is the Gibbs state of $H$ with expected energy equal to that of the input state $\bra{\psi_0} H \ket{\psi_0}$.
    
    \emph{\textbf{Promise:}} Either $|\barAN - \tr{\AN \rhog}|\leq \epsilon$ or $|\barAN - \tr{\AN \rhog}|\geq c\epsilon$. Let $\{\ket{\lambda_i} \}$ be the eigenstates of $H$. Then, the initial state $\ket{\psi_0} := \sum_i c_i \ket{\lambda_i}$ is only supported on energy eigenstates with an inverse exponential gap, i.e. $\exists \, m = \bigO{\mathrm{poly}(N)} \, s.t. \ \ \forall \,i\neq j \text{ if } c_i, c_j \neq 0 \, , |\lambda_i - \lambda_j| \geq d^{-poly(m)} \, $.

\end{definition}

We will also be interested in a slightly different version of the relaxation problem, where we promise that the Hamiltonian and initial state relax in polynomial amount of time in the system size.
The study of this ``quickly'' relaxing set of Hamiltonians is motivated by the fact that these are the set of Hamiltonians for which we can realistically study relaxation and equilibriation in a lab.
We call this version ``Finite-Time, Finite-Space Relaxation'', $\FTFSRelax$:
\begin{definition}[Finite-Time, Finite-Space Relaxation, $\FTFSRelax$]
    \label{def:ftfsrelax}
Consider a local Hamiltonian $H$ acting on $N$ qudits with local Hilbert space dimension $d$, a local  observable $A$, a fixed value $A^*$ and a timescale $T = \bigO{\poly{N}}$. 
Let $c, \epsilon$ be constants.
    Let the long time average of $A$ be as given in \cref{Eq:Time-Averaged_Observable}.
    
    \emph{\textbf{Input:}} A classical description of product state $\ket{\psi_0}$.
    
    \emph{\textbf{Output:}}  YES if $|\barAN - A^*|\leq \epsilon$ or NO if $|\barAN - A^*|\geq c\epsilon$.
    
    \emph{\textbf{Promise:}} Either $|\barAN - A^*|\leq \epsilon$ or $|\barAN - A^*|\geq c\epsilon$. Let $\{\ket{\lambda_i} \}$ be the eigenstates of $H$. Then, the initial state $\ket{\psi_0} := \sum_i c_i \ket{\lambda_i}$ is only supported on energy eigenstates with an inverse exponential gap, i.e. $\exists \, m = \bigO{\mathrm{poly}(N)}$ \, such that $\forall \,i,j \text{ if } c_i, c_j \neq 0 \,$ then $|\lambda_i - \lambda_j| = \Omega(1/\poly{N}) \, $.
\end{definition}

\section{\texorpdfstring{$\PSPACE$-hardness of $\FSRelax$}{PSPACE-hardness of FSRelax}}
\label{sec:hardness}
We show the $\PSPACE$-hardness of $\FSRelax$ by reducing a computationally hard problem, the finite-size halting problem $\FSHalt$, to $\FSRelax$.
\begin{definition}[Finite Size Halting ($\FSHalt$)]
    The problem $\FSHalt$ takes as input a Turing machine $M,$ an input $x$ and an integer $n$ in unary. 
    It outputs \textsc{YES} if $M$ halts on $x$ within $n$ tape space, and \textsc{NO} if it does not.
\end{definition}

By definition, $\PSPACE$ consists of all languages for which membership can be decided by a deterministic Turing machine in polynomial space. For any language $L \in \PSPACE$ decided by a Turing machine $M$ we can define a TM $M'$ that runs $M$, and halts on a Yes instance, or runs infinitely on a No instance. The language $L$ can thus be decided by $\FSHalt$ with $M'$, the instance $x$, and $n = \poly{|x|}$ as input.
Further, $\FSHalt \in \PSPACE$ since we can simply simulate $M$ with a universal Turing machine restricted to polynomial space.
Therefore, $\FSHalt$ is $\PSPACE$-complete.

We present a brief overview of our proof strategy to show that $\FSRelax$ is $\PSPACE$-hard, which is closely related to the construction used by \textcite{shiraishithermal2021}.  Our reduction relies on constructing a Hamiltonian that encodes the action of a Universal Reversible Turing Machine (URTM), and the input state to $\FSRelax$ encodes the input to the URTM. We design our Hamiltonian and observable such that the long-time average of the observable depends on whether the URTM halts or not. Therefore, the relaxation value of the observable allows us to determine the solution to instances of $\FSHalt$, and thus $\FSRelax$ must be as hard as $\FSHalt$. 
Later, in \cref{sec:FSThermHardness}, we show how to modify this construction to apply to $\FSTherm$.
We discuss the relationship between our hardness construction and that of \textcite{shiraishithermal2021} in \cref{sec:discussion}.

\subsection{Turing Machines and Setup}
\label{Sec:Turing_Machine_Description}

We construct a Turing machine $\mathcal{M}$ composed of three TMs: 
\begin{itemize}
    \item TM1, a 10-state, 8-symbol URTM  \cite{Morita2017}. The states of the finite control are $Q_u$ and the symbols for the tape are $\Gamma_u$ 
    \item TM2, which has 1 state $r$ and 2 symbols $(a_1, a_2)$ 
    \item TM3, which has 10-states $Q_{rev}$, and uses the same 8-symbols $\Gamma_u$ to run TM1 in reverse.
\end{itemize}
Each site of our physical system represents one cell or the finite control of the TM tape. Our system is chosen to have periodic boundary conditions, so our tape is also periodic. Similar to \textcite{shiraishithermal2021}, we denote the cells with symbols from TM1 and TM3 as $M$-cells, and the cells with symbols of TM2 as $A$-cells. 

Our TM $\mathcal{M}$ functions as follows.
Assume in the initial state we fix a constant $\alpha$ as the fraction of cells which are to be $M$-cells, and we evenly distribute $M$ and $A$ cells. 
The initial state has the input to TM1 directly on the $M$ cells. Every $A$ cell starts with the symbol $a_1$. TM1 commences on the M cells, and when it encounters an $A$ cell it simply skips over in the same direction. TM1 functions as a URTM on the M cells. 
If TM1 halts, the finite control state changes to the state $r$ and TM2 commences, flipping every A cell from $a_1$ to $a_2$. Eventually, TM2 wraps around the tape and encounters an $A$ cell with the symbol $a_2$ -- this is the first cell encountered by TM2, adjacent to the halting cell of TM1. Then, the finite control state is changed to a state of TM3, which starts on the halting cell of TM1 and runs TM1 in reverse, before finally halting. By running TM1 in reverse, but with the A cells flipped, we ensure that in the case of halting, the A cells remain flipped for a sufficiently long duration.
TM3 has a special halting state which it enters at the point when it would transition to the initial state of TM1.

In order to ensure that our Hamiltonian is 2-local, we split each step of TM1 and TM3 into 2 - first to change the control state, and then to move the control either left or right. 
Thus the set of symbols and states required is:
\begin{align}
    Q &= \{W,M\}\times (Q_u \cup Q_{rev}) \cup  \{r\},\nonumber \\
    \Gamma &= \Gamma_u \cup \{a_1,a_2\},
\end{align}
the ``$W$" corresponding to the write substep and the ``$M$'' corresponding to the move substep.

We note $|Q_u|=10, |\Gamma_u|=8$, and $|Q_{rev}|=10$.
To ensure the combined TM functions as intended, we furnish the combined TM with an additional transition rule which takes it out of the halting state of TM1, and moves its control to the state $r$, thereby initiating TM2, as well as a transition rule for when TM2 encounters $a_2$, which then halts TM2 and initiates TM3.
Thus, if TM1 does not halt on its input, $\mathcal{M}$ has all A-cells filled with $a_1$, and if TM1 halts on its input, $\mathcal{M}$ has all A-cells filled with $a_2$.

\subsection{TM-to-Hamiltonian Mapping} \label{Sec:TM-to_Hamiltonian}

We now need find a way of mapping our Turing Machine $\mathcal{M}$ to a Hamiltonian.
Fortunately, this is a well studied problem, originally in Ref. \cite{Feynman_1986}, and a variety of TM-to-Hamiltonian mappings have been developed for proving hardness of properties of Hamiltonians \cite{breuckmann2014space,bausch2017complexity, watson2019detailed}.

Let $U$ be the unitary which implements the transition rule of a given TM.
In particular, if there is a transition rule $\ket{ab}\rightarrow \ket{cd}$, then we define the term $U_{abcd} =\ket{cd}\bra{ab}$, applied to all relevant qudits.
We then define the set of transition rules to be $\mathcal{T}$ and the corresponding Hamiltonian to be:
\begin{align}
    H = \sum_{abcd\in \mathcal{T}} (U^\dagger_{abcd} +U_{abcd}).
\end{align}
Now let $\ket{w_0}$ be an initial state of some computation in the computational basis.
We can then define the state after the $k^{th}$ application of the TM transition rule as $\ket{w_k} = V^k\ket{w_0},$ where $V:=\sum_{abcd\in\mathcal{T}}U_{abcd}$ maps product states representing some Turing machine configuration to those of the Turing machine's incrementation by the reversibility of the transition relations described by $\mathcal{T}$.

Given a particular input state $\ket{w_0}$, the encoded computation either (a) halts, (b) loops forever.
We can apply a unitary transformation to this get the Hamiltonian in a block-diagonal form, where each of the blocks corresponds to a different input state (see \cite{kitaev2002classical} or \cite[Section 5]{watson2019detailed}).
Within each block, the Hamiltonian has the structure of a graph Laplacian of either a line graph (halting case) or loop (non-halting case).
Thus for a given input state, we can define an effective Hamiltonian $\Heff$ corresponding to the particular input state:
\begin{align}\label{Eq:Effective_Hamiltonian}
    \Heff = \sum_{k=0}^{T-2}\ket{w_k+1}\bra{w_k}+ h.c.,
\end{align}
where $T$ is either the halting time, or the time until the TM begins to loop.
This effective Hamiltonian has eigenstates of
the adjacency matrix of a line graph or loop which are \cite{yueh2008explicit,da2020eigenpairs}: 
\begin{align}\label{Eq:Energies}
    \lambda_j(\Heff) = 2\cos\left(\frac{(j+1)\pi}{T+1} \right)
\end{align} for for halting and 
\begin{align}
    \lambda_j(\Heff) &= 2\cos\left(\frac{2\pi j}{T}\right)
\end{align}
for non-halting, $j\in\{0,1,\dots, T-1\}.$
Due to the determinism of Turing Machine evolution and the finite number of configurations of the machine, we see that $T= \bigO{d^N}.$ 

\subsection{Long-Time Expectation Values} \label{Sec:Long-Time_Expectation_Values}

Given the TM-to-Hamiltonian mapping presented in \cref{Sec:TM-to_Hamiltonian}, we can encode the TMs given in \cref{Sec:Turing_Machine_Description}.
We now consider an initial state:
\begin{align}
    \ket{w_0} = \ket{q_0}\otimes \ket{y},
\end{align}
where $q_0$ corresponds to the the TM head in its initial state and $y\in \Gamma^N,$ with a proportion $(1-\alpha)$ evenly spaced out A-cells in the state $\ket{a_1}$. On the sites that are not set to $a_1,$ $y$ will encode Turing machine we wish to determine $\FSHalt$ for into the URTM's tape symbols. We then choose the TM1 to run the problem $\FSHalt$   .
If the TM1 halts, we have it transition to TM2 which flips $\ket{a_1}\rightarrow \ket{a_2}$.  For the remainder of \cref{Sec:Turing_Machine_Description}, we will define $A_i:=\ket{a_2}\bra{a_2}_i.$

\begin{lemma}
	\label{lem:halting_separation}
    If $y$ encodes a Turing Machine that either halts in time $T_h$ or never halts, we have that
    \begin{align}
        \barAN&=\begin{cases}
        \frac{1-\alpha}{2}\left[1-\frac{2}{2T_h+N+1}\right] & \text{halting}\\
        0 & \text{non-halting}.
    \end{cases}
    \end{align}
\end{lemma}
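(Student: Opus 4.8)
The plan is to reduce the computation of $\barAN$ to a spectral analysis of the effective Hamiltonian $\Heff$ of \cref{Eq:Effective_Hamiltonian} governing the orbit of $\ket{w_0}$, treating the two cases separately. The non-halting case is immediate: since TM1 never halts on $y$, the machine $\mathcal M$ never transitions into TM2, so every configuration reachable from $\ket{w_0}$ has all $A$-cells in $\ket{a_1}$. As $\AN=\frac1N\sum_i\ket{a_2}\bra{a_2}_i$ is diagonal in the computational basis and annihilates every such configuration, it annihilates the entire invariant block containing $\ket{w_0}$; hence $\bra{\psi_0}e^{iHt}\AN e^{-iHt}\ket{\psi_0}=0$ for all $t$, and so $\barAN=0$.

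For the halting case, the forward orbit of $\ket{w_0}$ is a line of $T=2T_h+N$ configurations $\ket{w_0},\dots,\ket{w_{T-1}}$ (TM1 runs for $T_h$ steps, TM2 sweeps the periodic tape in $\sim N$ steps flipping every $A$-cell, and TM3 reverses TM1 in $T_h$ steps), so $\Heff$ is the adjacency matrix of a path on $T$ vertices. Its spectrum $\{2\cos\frac{(j+1)\pi}{T+1}\}_{j=0}^{T-1}$ is non-degenerate (cosine is injective on $(0,\pi)$), and its eigenvectors satisfy $\braket{w_k}{v_j}=\sqrt{\tfrac{2}{T+1}}\sin\!\big(\tfrac{(k+1)(j+1)\pi}{T+1}\big)$. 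Non-degeneracy makes all cross terms in the Cesàro average vanish, so with $c_j\coloneqq\braket{v_j}{w_0}$ and using that $\AN$ is diagonal in the $\{\ket{w_k}\}$ basis with entries $a_k\coloneqq\bra{w_k}\AN\ket{w_k}=n_k/N$ ($n_k=$ number of $A$-cells in state $\ket{a_2}$ in $w_k$),
\begin{equation}
\barAN=\sum_{j=0}^{T-1}|c_j|^2\,\bra{v_j}\AN\ket{v_j}=\sum_{k=0}^{T-1}a_k W_k,\qquad W_k\coloneqq\sum_{j=0}^{T-1}|c_j|^2\,\big|\braket{w_k}{v_j}\big|^2 .
\end{equation}

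Next I would evaluate the weights $W_k$ explicitly: writing each $\sin^2$ as $\tfrac12(1-\cos)$, expanding the product, and applying the identity $\sum_{l=1}^{T}\cos\frac{2\pi m l}{T+1}=-1$ whenever $T+1\nmid m$ (and $=T$ when $m=0$), the cross-cosine sums collapse and one finds $W_k=\frac{1}{T+1}$ for all interior $k\in\{1,\dots,T-2\}$ and $W_0=W_{T-1}=\frac{3}{2(T+1)}$ (these correctly sum to $1$). Because the interior weight is $k$-independent, only the total $\Sigma\coloneqq\sum_{k=0}^{T-1}a_k$ and the endpoint values $a_0=0$, $a_{T-1}=1-\alpha$ survive, giving $\barAN=\frac{1}{T+1}\big(\Sigma+\tfrac{1-\alpha}{2}\big)$; substituting the claimed $\Sigma=\frac{(1-\alpha)(T-2)}{2}$ then yields $\barAN=\frac{1-\alpha}{2}\big(1-\frac{2}{T+1}\big)=\frac{1-\alpha}{2}\big(1-\frac{2}{2T_h+N+1}\big)$.

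The main obstacle is precisely the last input, $\Sigma=\frac{(1-\alpha)(T-2)}{2}$. One must pin down the phase boundaries and the exact shape of the $\ket{a_1}\to\ket{a_2}$ profile produced by TM2 — which, since only every $\sim 1/(1-\alpha)$-th site is an $A$-cell, is a staircase rather than a clean line — and carefully track the off-by-one indices at the TM1/TM2 and TM2/TM3 seams and at the halting configuration, so that the areas of the three phases ($0$ during TM1, the ramp during TM2, the plateau $(1-\alpha)N$ during TM3) add up to exactly $\frac{(1-\alpha)N(T-2)}{2}$ and the path length comes out to exactly $T=2T_h+N$. Everything else — the spectral decomposition of the path, the vanishing of the cross terms, and the trigonometric evaluation of $W_k$ — is routine.
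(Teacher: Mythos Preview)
Your proposal is correct and follows essentially the same route as the paper: both reduce to the diagonal-ensemble formula via non-degeneracy of the path spectrum, obtain the same endpoint/interior weights $W_0=W_{T-1}=\tfrac{3}{2(T+1)}$, $W_k=\tfrac{1}{T+1}$ (the paper cites this from \cite{shiraishithermal2021} while you rederive it from the $\sin$ eigenvectors), and then evaluate $\sum_k a_k$ over the orbit. The ``main obstacle'' you flag---the staircase sum during the TM2 sweep and the seam indices giving $T=2T_h+N$---is exactly what the paper works out explicitly, arriving at your claimed $\Sigma=\tfrac{(1-\alpha)(T-2)}{2}$; your upfront use of the diagonality of $\AN$ in the $\{\ket{w_k}\}$ basis is also how the paper disposes of the $\bra{w_j}\AN\ket{w_{j\pm2}}$ cross term in the cited formula.
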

\begin{proof}
We begin by considering the limit definition for  $\barAN:$
\begin{align}
    \barAN &= \lim_{T\to\infty}\frac{1}{T}\int_{0}^Tdt \bra{w_0}e^{iHt}\AN e^{-iHt}\ket{w_0}\nonumber\\
    &= \lim_{T\to\infty}\sum_{ij}\frac{1}{T}\int_{0}^Tdt e^{i(\lambda_i-\lambda_j)t}c_i^*c_j\bra{\lambda_i}\AN\ket{\lambda_j}\nonumber\\&=\sum_{ij}\delta_{\lambda_i,\lambda_j}c_i^*c_j\bra{\lambda_i}\AN\ket{\lambda_j} \nonumber\\&= \sum_{ij}\delta_{i,j}c_i^*c_j\bra{\lambda_i}\AN\ket{\lambda_j} \nonumber\\&=\sum_{j}|c_j|^2\bra{\lambda_j}\AN\ket{\lambda_j}, \label{Eq:barAN_Eigenstates}
\end{align}
where for the penultimate equality we have used that none of the eigenvalues of the effective Hamiltonian have the same energy (as per \cref{Eq:Energies}).

 If TM1 never halts, then TM2 never starts and none of then $a_1$ cells are ever flipped to $a_2$ and we see that
\begin{align}
    \AN\ket{w_j} = 0, \quad \forall j.
\end{align} 
This in turn means $\AN\ket{\lambda_j}=0$ for all $j,$ as all $\ket{\lambda_j}$ are superpositions of the $\ket{w_k}.$ Thus, for non-halting, $\barAN=0.$
\textcite[Eq. S.46]{shiraishithermal2021} show that for halting, $\barAN$ becomes:
\begin{align}
    \sum_{j}|c_j|^2\bra{\lambda_j}\AN\ket{\lambda_j} &= \frac{3}{2(T+1)}\left( \bra{w_0}\AN\ket{w_0}+\bra{w_{T-1}}\AN\ket{w_{T-1}}    \right)  \nonumber\\
    &+ \frac{1}{T+1}\sum_{j=1}^{T-2}\bra{w_j}\AN\ket{w_j} - \frac{1}{2(T+1)}\sum_{\substack{0\leq j,j'\leq T-1 \\ j'=j\pm2}} \bra{w_j}\AN\ket{w_{j'}} \label{Eq:Expectation_Value}. 
\end{align} 
We now analyze the terms of this equation. 
We have that the $M$ cells are evenly spaced. We assume that every $M$ cell is followed by $\alpha^{-1}-1$ $A$-cells, and that $\alpha N$ of these $\alpha^{-1}$ blocks constitute the whole Turing Machine. Thus, noting that $\bra{w_0}\mathcal{A}_N\ket{w_0}=0,$ $\bra{w_{T-1}}\mathcal{A}_N\ket{w_{T-1}}=1-\alpha,$ and the final term in \cref{Eq:Expectation_Value} is $0,$ it remains to calculate 
\begin{align}
    \sum_{j=1}^{T-2} \bra{w_j}\mathcal{A}_N\ket{w_j} &= (1-\alpha)\left[T-2-(T_h+N+1)+1\right]+\sum_{j=T_h+1}^{T_h+N}\bra{w_j}\mathcal{A}_N\ket{w_j}\nonumber\\
    & \label{eq:expectation_value2} = (1-\alpha)\left[T-2-(T_h+N)\right] + \sum_{j=T_h+1}^{T_h+N}\bra{w_j}\mathcal{A}_N\ket{w_j}.
\end{align}
If $\ket{w_j}$ has $l$ $A$-cells with the symbol $a_2$, then 
\begin{equation}
	\bra{w_j}\mathcal{A}_N\ket{w_j} = l\bra{a_2} A \ket{a_2}.
\end{equation}
Due to the uniformity of the placement of the $M$-cells (and recalling our assumption that $\alpha N\in\mathbb{Z}$), for $j \in [T_h+1, T_h+N]$,
\begin{align}
\sum_{j=T_h+1}^{T_h+N}\bra{w_j}\mathcal{A}_N\ket{w_j} &= \sum_{j=1}^{N}\frac{\left[j-\lfloor \alpha j\rfloor \right]}{N}\bra{a_2} A \ket{a_2}\nonumber\\
&= \frac{1}{N}\left[\sum_{j=1}^Nj-\sum_{j=1}^N\lfloor \alpha j\rfloor\right]\bra{a_2} A \ket{a_2}\nonumber\\
&= \frac{1}{N}\left[\frac{N(N+1)}{2} - \sum_{j=0}^{N-1}\lfloor \alpha j\rfloor - N\right]\bra{a_2} A \ket{a_2}\nonumber\\
&=\frac{1}{N}\left[\frac{N(N+1)}{2}-\alpha^{-1} \sum_{k=0}^{\alpha N-1}k-N\right]\bra{a_2} A \ket{a_2}\nonumber\\
&= \left[\frac{N+1}{2} -\frac{1}{\alpha N} \frac{\alpha N(\alpha N-1)}{2}-1\bra{a_2} \right]A \ket{a_2}\nonumber\\
&= \frac{(1-\alpha)N}{2}\bra{a_2} A \ket{a_2}.
\end{align}
Thus,
\begin{align}
    \sum_{j=1}^{T-2}\bra{w_j}\mathcal{A}_N\ket{w_j} &= (1-\alpha)[T-(T_h+N/2+2)]\bra{a_2} A \ket{a_2}.
\end{align}
Combining this with \cref{Eq:Expectation_Value} and \cref{eq:expectation_value2}, we obtain
\begin{align}
    \barAN &= (1-\alpha)\frac{T-(T_h+N/2+1/2)}{T+1} \bra{a_2} A \ket{a_2} \label{eq:expvaluegeneral}.\end{align}
Finally, noting that $T=2T_h+N,$ and $A = \ketbra{a_2}{a_2}$ we obtain 
\begin{align}
    \barAN &= (1-\alpha)\frac{T_h+N/2-1/2}{2T_h+N+1}\nonumber\\
    &= \frac{1-\alpha}{2}\left[1-\frac{2}{2T_h+N+1}\right].
\end{align}
\end{proof}

\subsection{Putting it All Together and Verifying the Promise }

We now wish to use our previously proved results about the Hamiltonian to prove hardness of $\FSRelax$. 
For the remainder of this section, we remember that $A$ satisfies $\bra{a_2}A\ket{a_2}=1$.

\begin{theorem}
\label{thm:fsrelax_pspace_hardness}
    $\FSRelax$ \  is PSPACE-hard for $d=\LocalDim$.
\end{theorem}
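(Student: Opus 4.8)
The plan is to reduce $\FSHalt$, which is $\PSPACE$-complete by the discussion above, to $\FSRelax$, using the Turing machine $\mathcal{M}$, the TM-to-Hamiltonian mapping, and the long-time expectation formula assembled in \cref{Sec:Turing_Machine_Description,Sec:TM-to_Hamiltonian,Sec:Long-Time_Expectation_Values}. Given an instance $(M,x,n)$ of $\FSHalt$ with $n$ in unary, I would first compile $M$ into a reversible Turing machine with only constant-factor space overhead (so that simulating the space-$n$-bounded run of $M$ on $x$ uses $\bigO{n}$ tape), encode a description of this reversible machine together with $x$ onto the $M$-cells of the initial configuration $\ket{w_0}$, fill every $A$-cell with $a_1$, and feed $\ket{w_0}$ to TM1. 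The number of qudits $N$ is taken so that $\alpha N$ is the (suitably padded) number of cells needed for this simulation, which makes $N$ polynomial --- indeed linear --- in the input length since $n$ is unary; we pad further so that $\alpha N\in\mathbb{Z}$ and $N\geq 3$. The Hamiltonian is the fixed, translationally invariant, nearest-neighbour Hamiltonian on the ring of $N$ qudits obtained from $\mathcal{M}$ via the mapping of \cref{Sec:TM-to_Hamiltonian} (with local dimension $\LocalDim$, by the construction of \cref{Sec:Turing_Machine_Description}), the observable is the $1$-local translationally invariant $A_i=\ket{a_2}\bra{a_2}_i$, the target value is $A^*=\tfrac{1-\alpha}{2}$, and the constants are $\epsilon=\tfrac{1-\alpha}{4}$ and $c=\tfrac{3}{2}$. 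The reduction outputs the product state $\ket{w_0}$, which is computable in polynomial time.

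Correctness is then immediate from \cref{lem:halting_separation}. If $M$ halts on $x$ within $n$ tape space, then TM1 halts at some time $T_h$, TM2 and TM3 run to completion, the history has length $T=2T_h+N$, and $\barAN=\tfrac{1-\alpha}{2}\bigl(1-\tfrac{2}{2T_h+N+1}\bigr)$, so $\lvert\barAN-A^*\rvert=\tfrac{1-\alpha}{2T_h+N+1}\leq\tfrac{1-\alpha}{N+1}\leq\epsilon$, a YES instance of $\FSRelax$. If $M$ does not halt within $n$ tape space, then by reversibility and finiteness of the configuration space TM1 enters a cycle, TM2 never starts, the $A$-cells stay at $a_1$, so $\AN$ annihilates the entire computation block and $\barAN=0$, giving $\lvert\barAN-A^*\rvert=\tfrac{1-\alpha}{2}=2\epsilon\geq c\epsilon$, a NO instance.

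It then remains to verify the spectral-gap promise of \cref{def:fsrelax} for $\ket{w_0}$. As a computational-basis state, $\ket{w_0}$ lies inside a single block of the block-diagonalised $H$, on which $H$ acts as the effective Hamiltonian $\Heff$ of \cref{Eq:Effective_Hamiltonian} --- the adjacency operator of a path (halting) or a cycle (non-halting) on $T=\bigO{d^N}$ vertices, with eigenvalues as in \cref{Eq:Energies}. In the halting case these eigenvalues are pairwise distinct with minimum gap $\bigtheta{1/T^2}=\bigomega{d^{-2N}}$. In the non-halting case the cycle eigenvalues $2\cos(2\pi j/T)$ come in degenerate pairs, but one may choose the eigenbasis of each eigenspace so that $\ket{w_0}$ overlaps only the component $\Pi_\lambda\ket{w_0}$ in it; the distinct eigenvalues are again $\bigomega{d^{-2N}}$-separated (and $\barAN=0$ places the instance unambiguously on the NO side regardless). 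Thus, with $m=N$, the promise holds with separation $d^{-\poly{m}}$, so the produced object is a legal instance of $\FSRelax$ meeting the hypotheses of the theorem.

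The main obstacle lies not in this reduction but in the pieces it invokes: certifying that the Hamiltonian obtained from $\mathcal{M}$ really is translationally invariant, nearest-neighbour, and of local dimension $\LocalDim$, and that its dynamics faithfully --- and \emph{only} on legal configurations --- realise the intended three-phase computation with the $M$/$A$-cell pattern self-enforcing; this is the standard but delicate core of TM-to-Hamiltonian arguments. Some care is also needed to keep $A^*$, $\epsilon$, and $c$ genuine instance-independent constants (which succeeds because $\lvert\barAN-A^*\rvert$ is small uniformly in $T_h$ once $N\geq 3$) and to handle the degenerate non-halting spectrum in the promise check.
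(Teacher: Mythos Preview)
Your proposal is correct and follows essentially the same approach as the paper: reduce $\FSHalt$ via the TM construction of \cref{Sec:Turing_Machine_Description}, the TM-to-Hamiltonian mapping of \cref{Sec:TM-to_Hamiltonian}, and the separation given by \cref{lem:halting_separation}, then invoke the eigenvalue gap to verify the promise. If anything, you are more explicit than the paper---you fix concrete instance-independent values $A^*=\tfrac{1-\alpha}{2}$, $\epsilon=\tfrac{1-\alpha}{4}$, $c=\tfrac{3}{2}$ (where the paper simply asserts separated values for small enough constant $\alpha$), and you handle the cycle degeneracies in the non-halting spectrum by re-choosing the eigenbasis within each two-dimensional eigenspace, a point the paper's \cref{Lemma:Promise_Satisfied} glosses over by treating only the path (halting) eigenvalues.
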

\begin{proof}
    We consider the Hamiltonian generated from the TM-to-Hamiltonian mapping, where the TM is chosen to be that from \cref{Sec:Turing_Machine_Description}.
    We then choose an initial state $\ket{\psi_0} = \ket{q_0}\otimes \ket{y}$, where $y$ is a string $y\in \Gamma^{\times N-1}$.
    In particular, we choose $y$ to have a fraction $\alpha$ M cells, where we are free to choose $\alpha$.
    We choose the $M$ first $n$ cells to be a bit string of length $n$, $x\in \{0,1\}^{n}$.
    
    We then choose the first TM encoded to run the $\FSHalt$ problem.
    The action of the final TM only occur if the initial TM halts. 
    We see from \cref{lem:halting_separation} that the time-averaged expectation value is:
    \begin{align}
   	\barAN&=\begin{cases}
   		\frac{1-\alpha}{2}\left[1-\frac{2}{2T_h+N+1}\right] & \text{halting}\\
   		0 & \text{non-halting}.
   	\end{cases}
   \end{align}
    By choosing $\alpha = \Theta(1)$ to be a sufficiently small constant we get, for sufficiently large $N$, separated values of $\barAN$ in the halting and non-halting cases respectively.
    Since $\FSHalt$ \  is $\PSPACE$-hard, then determining which case will occur for $\barAN$ is also $\PSPACE$-hard.
    We show later in \cref{Lemma:Promise_Satisfied} that the promise that $|\lambda_i-\lambda_j|>d^{-\poly{m}}$ is satisfied by this Hamiltonian and input state.
\end{proof}

 To find the local Hilbert space dimension of the Hamiltonian, we simply need to add up the necessary Hilbert space size for the TMs to function.
 We realise that the total Hilbert space size will be the size of all the alphabet symbols plus the size of all the TM head state symbols.
 Thus we expect the local Hilbert space dimension to be:
 \begin{align}
     d &= 2\times( |Q_u|+|Q_{rev}|) + |\{r\}| + |\Gamma_u|+ |\{a_1,a_2\}| \nonumber\\
     &= \LocalDim.
 \end{align}

\noindent Finally we verify that the Hamiltonian and initial state we have used for our hardness construction satisfies the promise of $\FSRelax$.
 \begin{lemma}[Promise is Satisfied]\label{Lemma:Promise_Satisfied}
    Let $\ket{w_0}$ be an initial basis state, and let $H$ be a Hamiltonian from a TM-to-Hamiltonian mapping on $N$ qudits.
    Then:
    \begin{align}
        \min_{ij}|\lambda_i(\Heff) - \lambda_j(\Heff)| = \Omega\left( d^{-2N} \right).
    \end{align}
\end{lemma}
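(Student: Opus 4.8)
The plan is to read off the spectrum of $\Heff$ from \cref{Eq:Energies} and bound the gaps of the resulting cosine sequences. Because the head is stored on the tape, a configuration is just an assignment of one of $d$ symbols to each of the $N$ sites, so the number of reachable configurations --- and hence the halting time or loop length $T$ --- satisfies $T \le d^N$ (in any case $T = \bigO{d^N}$). In the halting case the eigenvalues are $\lambda_j(\Heff) = 2\cos\theta_j$ with $\theta_j = \tfrac{(j+1)\pi}{T+1}$, $j \in \{0,\dots,T-1\}$, all lying in the open interval $(0,\pi)$ on which $\cos$ is strictly monotone, so they are pairwise distinct; in the non-halting case they are $2\cos\tfrac{2\pi j}{T}$, which carries the trivial degeneracy $j \leftrightarrow T-j$, but on that block $\AN$ annihilates the entire subspace, so only separations between distinct eigenvalues are relevant. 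In both cases it therefore suffices to lower bound the minimum separation of the numbers $2\cos\phi$ with $\phi$ ranging over an arithmetic progression of common difference $\Delta = \Theta(1/T)$ inside $[0,\pi]$.

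First I would reduce to consecutive pairs: since $\phi \mapsto 2\cos\phi$ is monotone on $[0,\pi]$, for progression points $\phi < \phi'$ the quantity $2\cos\phi - 2\cos\phi'$ telescopes into a sum of positive consecutive differences and hence dominates each of them, so $\min_{i\ne j}|\lambda_i-\lambda_j|$ equals the smallest consecutive gap. Next I would expand a consecutive gap via the product-to-sum identity, $|2\cos\phi - 2\cos(\phi+\Delta)| = 4\,\bigl|\sin(\phi+\tfrac{\Delta}{2})\bigr|\,\sin\tfrac{\Delta}{2}$. The factor $\sin\tfrac{\Delta}{2} = \Theta(\Delta) = \Theta(1/T)$ is benign; the delicate factor is $|\sin(\phi+\tfrac{\Delta}{2})|$, which is smallest when $\phi$ sits at either end of the progression, i.e.\ $\phi + \tfrac{\Delta}{2} = \Theta(1/T)$ or $\pi - \Theta(1/T)$, where Jordan's inequality $\sin x \ge \tfrac{2}{\pi}x$ on $[0,\tfrac{\pi}{2}]$ gives $|\sin(\phi+\tfrac{\Delta}{2})| = \Theta(1/T)$ as well. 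Multiplying the two factors, the minimum gap is $\Theta(1/T^2) \ge c/(d^N+1)^2 = \Omega(d^{-2N})$, which is the claim.

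The only genuine obstacle is the endpoint estimate in the previous paragraph: one has to check that the worst case really is the pair of eigenvalues nearest the band edges $\pm 2$, and that even there the gap shrinks only \emph{quadratically} in $T$ --- this is precisely what produces the exponent $2N$ rather than something smaller like $N$. Everything else (the telescoping argument, the product-to-sum identity, and Jordan's inequality) is routine. Finally, I would note that the same $1/T^2 = \Omega(d^{-2N})$ bound shows that the promise parameter in \cref{def:fsrelax} may be taken with $\poly{m}$ on the order of $N$, so the Hamiltonian and initial state constructed in \cref{thm:fsrelax_pspace_hardness} do satisfy the promise of $\FSRelax$.
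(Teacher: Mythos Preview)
Your proof is correct and follows essentially the same route as the paper's: reduce to consecutive gaps of the cosine sequence from \cref{Eq:Energies}, apply the product-to-sum identity, and lower bound the resulting sines linearly to obtain $\Omega(T^{-2}) = \Omega(d^{-2N})$ via $T = \bigO{d^N}$. You are in fact somewhat more careful than the paper --- you justify the reduction to consecutive pairs via monotonicity, explicitly invoke Jordan's inequality for the endpoint estimate, and address the degeneracy in the non-halting loop case --- but the core argument is identical.
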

\begin{proof}
    From \cref{Eq:Energies}, we see that the minimum difference in energies occurs when for $j=i+1$, giving:
    \begin{align}
        |\lambda_{j+1}(\Heff) - \lambda_j(\Heff)| &=  2\left(\cos\left(\frac{(j+1)\pi}{T+1}\right)-\cos\left(\frac{j\pi}{T+1}\right)\right)\nonumber\\
        &= 4\left\vert \sin\frac{\pi}{2(T+1)}\sin\frac{(j+1)\pi}{2(T+1)}\right\vert\nonumber\\
        &\geq \frac{\pi^2}{4(T+1)^2}\nonumber\\
        &= \Omega(T^{-2}) = \Omega(d^{-2N}).
    \end{align}
\end{proof}

\section{Containment of \texorpdfstring{$\FSRelax$ in $\PSPACE$}{FSRelax in PSPACE}}
\label{sec:containment}
We show that $\FSRelax \in \PSPACE$ by demonstrating a $\poly{N}$-space quantum algorithm to estimate $\barAN$. Combined with the result that $\BQPSPACE \subseteq \PSPACE$  \cite{Watrous_1999}, and using the fact that $\FSTherm$ is an instance of $\FSRelax$, this shows that $\FSTherm \in \PSPACE$.
\begin{lemma}\label{lemma:sqrtblock}
    Let $\AN$ be defined as in \cref{eq:ANdef}. 
    For any $\epsilon \in [0,1/4],$ there exists a quantum circuit to prepare a $(4,N\lceil \mathrm{log}_2d\rceil +5,\epsilon)$-block encoding of $\sqrt{1-\AN/2dN}$
    using $\bigO*{\poly{\log{N}}\poly{\log{\epsilon^{-1}}}}$ 
    ancillas, with a classical description that can be computed in time $\bigO{\poly{N}\poly{\log{\epsilon^{-1}}}}.$
\end{lemma}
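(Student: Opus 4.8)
The plan is to assemble the claimed block encoding from two of the primitives set up in the preliminaries: a sparse-matrix block encoding of $\AN$ (\cref{lemma:sparseblockenc}), followed by the polynomial eigenvalue transformation of \cref{lemma:polyblock} applied with the square-root polynomial of \cref{lemma:polyapprox_sqrt}. The normalisation by $2dN$ in the statement is exactly what this composition produces: \cref{lemma:sparseblockenc} encodes an $s$-sparse matrix with $\alpha=s$, so the block of the resulting unitary is $\AN/s$; taking $s=dN$ and then applying a polynomial approximation of $y\mapsto\sqrt{1-y/2}$ yields $\sqrt{1-\AN/(2dN)}$.

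First I would set up the sparse encoding of $\AN$. Since rescaling $A$ by a constant rescales $A^*,\epsilon,c$ and $\barAN$ by the same constant and leaves $\FSRelax$ unchanged, we may assume $\norm{A_i}\le 1$ (in the hardness construction $A=\ketbra{a_2}{a_2}$ already has unit norm). Viewing $\AN$ as an operator on $N\lceil\mathrm{log}_2 d\rceil$ qubits --- padding each qudit to dimension $2^{\lceil\mathrm{log}_2 d\rceil}$ and extending the $A_i$ by zero on the padded levels --- one checks that in the product basis $A_i$ connects each basis state only to the at most $d-1$ states differing in coordinate $i$, so $\AN=\tfrac1N\sum_i A_i$ connects each basis state to itself and to at most $N(d-1)$ others; hence $\AN$ is $dN$-sparse, and every entry of $\AN$ is an average of at most $N$ entries of the (zero-padded) $A_i$, so has magnitude at most $1$. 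Given a row index and an index $k\le dN$, the position and value of the $k$-th nonzero entry of that row are obtained by decoding $k$ into a coordinate and a target local level and reading off a fixed $d\times d$ matrix, a classical $\poly{N}$-time computation. Then \cref{lemma:sparseblockenc} with $w=N\lceil\mathrm{log}_2 d\rceil$, $s=dN$, and an internal accuracy $\varepsilon_1$ to be fixed below produces a $(dN,\,N\lceil\mathrm{log}_2 d\rceil+3,\,\varepsilon_1)$-block encoding $U$ of $\AN$, using $\bigO{\poly{\log{N}}\poly{\log{\varepsilon_1^{-1}}}}$ ancillas, with a classical description computable in time $\bigO{\poly{N}\poly{\log{\varepsilon_1^{-1}}}}$.

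Next I would apply the square root by \cref{lemma:polyblock}. Let $q^{(\epsilon')}(y):=P^{(\epsilon')}_{\mathrm{sqrt}}(-y)$ with $P^{(\epsilon')}_{\mathrm{sqrt}}$ from \cref{lemma:polyapprox_sqrt}; then $q^{(\epsilon')}$ has degree $D=\bigO{\log{{\epsilon'}^{-1}}}$ and $\sup_{y\in[-1,1]}\lvert q^{(\epsilon')}(y)-\sqrt{1-y/2}\rvert\le\epsilon'$, and we may take it real-coefficiented by passing to its real part. The one step that is not purely mechanical is that $q^{(\epsilon')}$ can exceed $1/2$ in magnitude on $[-1,1]$ --- there it is at most $\sqrt{3/2}+\epsilon'$ --- which violates the hypothesis of \cref{lemma:polyblock}; so I would feed that lemma $\tfrac14 q^{(\epsilon')}$ instead, which does satisfy $\sup_{[-1,1]}\lvert\tfrac14 q^{(\epsilon')}\rvert\le\tfrac12$. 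This produces (with an efficiently computable classical description and $\bigO{D}$ uses of $U$) a $(1,\,N\lceil\mathrm{log}_2 d\rceil+5,\,\eta)$-block encoding of $\tfrac14\,q^{(\epsilon')}(\AN/(dN))$, where $\eta=4D\sqrt{\varepsilon_1/(dN)}+\delta$ and $\delta>0$ is free; absorbing the $\tfrac14$ into the normalisation --- which is precisely why the statement is a $(4,\cdot,\cdot)$- rather than a $(1,\cdot,\cdot)$-encoding --- this is a $(4,\,N\lceil\mathrm{log}_2 d\rceil+5,\,4\eta)$-block encoding of $q^{(\epsilon')}(\AN/(dN))$. Since $q^{(\epsilon')}$ and $\sqrt{1-\cdot/2}$ agree to within $\epsilon'$ on $[-1,1]$, which contains the spectrum of $\AN/(dN)$ (as $\norm{\AN/(dN)}\le 1$), functional calculus gives $\norm{q^{(\epsilon')}(\AN/(dN))-\sqrt{1-\AN/(2dN)}}\le\epsilon'$, and the triangle inequality in \cref{def:block_encoding} upgrades this to a $(4,\,N\lceil\mathrm{log}_2 d\rceil+5,\,4\eta+\epsilon')$-block encoding of $\sqrt{1-\AN/(2dN)}$.

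It then remains to fix parameters: take $\epsilon'=\epsilon/2$, $\delta=\epsilon/16$, and $\varepsilon_1\le\tfrac14$ small enough that $4D\sqrt{\varepsilon_1/(dN)}\le\epsilon/16$ (a polynomial lower bound on $\varepsilon_1$ suffices), so that the total error $4\eta+\epsilon'\le\epsilon$; the hypothesis $\epsilon\le\tfrac14$ comfortably accommodates these choices. Then $D=\bigO{\log{\epsilon^{-1}}}$ and $\log{\varepsilon_1^{-1}}=\bigO{\log{N}+\log{\epsilon^{-1}}}$, so composing the resource counts from \cref{lemma:sparseblockenc,lemma:polyblock} gives $\bigO{\poly{\log{N}}\poly{\log{\epsilon^{-1}}}}$ ancillas and a classical description computable in time $\bigO{\poly{N}\poly{\log{\epsilon^{-1}}}}$, as required. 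I expect essentially all the remaining work to be this parameter bookkeeping through the cited lemmas; the only two places where a naive attempt would stumble are keeping $\AN$ exactly $dN$-sparse through the qudit-to-qubit padding (so that the normalisation, and hence the function actually encoded, comes out correct) and the $\tfrac14$ rescaling needed to invoke \cref{lemma:polyblock}.
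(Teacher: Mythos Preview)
Your proposal is correct and follows essentially the same approach as the paper: block-encode $\AN$ via \cref{lemma:sparseblockenc} with $s=dN$, then apply \cref{lemma:polyblock} to the rescaled square-root polynomial $\tfrac14 P_{\mathrm{sqrt}}^{(\cdot)}(-x)$ to land on a $(1,\cdot,\cdot)$-encoding of $\tfrac14\sqrt{1-\AN/(2dN)}$, which is a $(4,\cdot,\cdot)$-encoding of the target. You are in fact slightly more explicit than the paper about the sign flip $x\mapsto -x$, the qudit-to-qubit padding needed for the sparsity count, and the $\tfrac14$ rescaling that makes the polynomial fit the $\sup\le\tfrac12$ hypothesis of \cref{lemma:polyblock}; the paper simply asserts the existence of a suitable $p$ with $|p|\le\tfrac12$ and folds the polynomial-approximation error into its $\epsilon_1$, whereas you separate it out via a functional-calculus triangle inequality --- but this is only bookkeeping, and the parameter choices and final resource bounds match.
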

\begin{proof}
    We begin by noting that $\AN$ is a $dN$-sparse matrix, and that all its entries are less than or equal to $1$ in magnitude. Thus, by \cref{lemma:sparseblockenc}, we have that we can produce a $(dN, \lceil \mathrm{log}_2(d)\rceil N+3, \epsilon_2)$-block encoding of $\AN$ with $\bigO*{\poly*{\mathrm{log}\left(\frac{d^2N^2}{\epsilon_2}\right)}}$ ancillas. 
    
    By \cref{lemma:polyapprox_sqrt}, for all $\epsilon_1\leq 1/4,$ there exists a polynomial $p(x)$ of degree $\bigO{\log{\epsilon_1^{-1}}}$ such that 
    \begin{equation}
    \sup_{x\in [-1,1]}\left\vert\frac{1}{4}\sqrt{1-x/2}-p(x)\right\vert\leq \epsilon_1,
    \end{equation}
    and $\sup_{x\in [-1, 1]}\vert p(x)\vert \leq 1/2.$ Thus, by \cref{lem:approx_poly_block_encoding}, we can prepare a $\left(1, \lceil \mathrm{log}_2(d)\rceil N\!+\!5, \bigO*{\log{\epsilon_1^{-1}}\sqrt{\epsilon_2/dN}\!+\!\delta}\right)$-block encoding of $p\left(\frac{\AN}{dN}\right)$ with $\bigO*{\log{\epsilon_1^{-1}}\poly*{\mathrm{log}\left(\frac{d^2N^2}{\epsilon_2}\right)}}$ ancillas, $\bigO{\log{\epsilon_1^{-1}}(\poly{N}+\poly{\log{\epsilon_2^{-1}}})}$ fundamental gates, and a classical description that can be computed in time 
    \begin{equation}
    \bigO*{\log{\epsilon_1^{-1}}\left(\poly{N}+\poly*{\log{\epsilon_2^{-1}}}\right)+\poly*{\log{\epsilon_1^{-1}}}+\poly{\log{\delta^{-1}}}}.
    \end{equation}

    Thus, to obtain the desired result, it suffices to take $\delta, \epsilon_1, \log{\epsilon_1^{-1}}\sqrt{\epsilon_2/dN}=\bigtheta{\epsilon}.$ To do this, we see that we can take $\delta,\epsilon_1=\bigtheta{\epsilon},$ and $\epsilon_2 = \bigtheta*{\left(\frac{\sqrt{N}\epsilon}{\log{\epsilon^{-1}}}\right)^2}.$
    
\end{proof}
We use this block encoding circuit to estimate $\barAN$.
\begin{lemma}
\label{lem:long_time_qpoly}
For any $0<\epsilon\leq 1/4,$ there is a quantum algorithm that takes space $\bigO*{\epsilon^{-2}\poly{\log{\epsilon^{-1}}}\poly{N}}$ 
to return $\Gamma$ such that
\begin{equation}
    |\Gamma - \barAN| \leq \epsilon
\end{equation} with probability of at least $2/3.$
\begin{proof}

Note that we can uniquely write our initial state as $\ket{\psi_0}=\sum_{i}c_i\ket{\lambda_i}$ with $\ket{\lambda_i}$ a $\lambda_i$-eigenvector of $H$ such that $\lambda_i\neq \lambda_j$ for all $i\neq j$. Further, by the promise condition, we have that there exists 
$m=\poly{N}$ that satisfies $\min\left\{d^{-N},\inf_{i\neq j}\vert \lambda_i-\lambda_j\vert\right\} > 2^{-m}$ for all $i$ for which $c_i\neq 0.$
We begin by noting that we can implement a unitary $\tilde{U}_{\epsilon_1}$ such that $\Vert \tilde{U}_{\epsilon_1}-e^{iH}\Vert \leq \epsilon_1 2^{-5(m+1)}$ in polynomial space with an efficient classical description via standard Trotterization techniques \cite{Childs_2021}.
If we perform phase estimation on $\tilde{U}_{\epsilon_1}$ with $\mathcal{O}(m)$ bits of phase precision, we would then get an approximation to the phase estimation circuit on $e^{iH}$ with (operator norm) error at most $\epsilon_1.$

Let $\ket{\tilde{\lambda}_i^{(m,\epsilon_22^{-(1+m)})}}$ be the output in the phase register of running quantum phase estimation of $e^{iH}$ on $\ket{\lambda_i},$ desired to be accurate to $m$-bits of precision with error parameter $\epsilon_2 2^{-(m+1)}.$ By \cref{lemma:phaseestspacecomp},  we have that this can be achieved by a circuit with $3(m+1)+ \lceil 2\mathrm{log}_2 \epsilon_2^{-1}\rceil$ total qubits. Despite the required exponential depth of the phase estimation circuit, it has an efficient classical description due to its repeated structure - so a description of each layer of gates can be computed in polynomial time classically.   
Then, by \cref{cor:QPEOrtho} we have:
\begin{equation}
    \left\vert\braket{\tilde{\lambda}_i^{(m,\epsilon_2 2^{-(m+1)})}}{\tilde{\lambda}_j^{(m,\epsilon_2 2^{-(m+1)})}}- \delta_{i,j}\right\vert\leq  \epsilon_2 2^{-m},
\end{equation}
and hence,
\begin{equation}
    \left\vert \sum_{i,j=0}^{\vert\sigma(H)\vert-1} c_i c_j^* \bra{\lambda_j} \AN\ket{\lambda_i}\braket{\tilde{\lambda}_i^{(m,\epsilon_2 2^{-(m+1)})}}{\tilde{\lambda}_j^{(m,\epsilon_2 2^{-(m+1)})}} -\sum_{i=0}^{\vert\sigma(H)\vert-1} \vert c_i\vert^2 \bra{\lambda_i} \AN
    \ket{\lambda_i} \right\vert 
    \leq
    \epsilon_2. 
\end{equation} 
The long time average of $\AN$ for our initial state $\ket{\psi_0}$
can be  expressed as: 
\begin{align}
    \barAN &= \lim_{T\rightarrow \infty}\frac{1}{T}\int_0^T dt \sum_{ij}e^{-i(\lambda_i-\lambda_j)t}c_i^*c_j\bra{\lambda_i}\AN\ket{\lambda_j} \nonumber\\
    &= \sum_{i=0}^{\vert \sigma(H)\vert-1} \vert c_i\vert^2 \bra{\lambda_i}\AN\ket{\lambda_i}.
\end{align}
    We now describe a quantum $\bigO{\poly{N}}$ space protocol to estimate $\barAN$. An outline of our algorithm is shown below:

\begin{align}
        \ket{0^a}
\left(\ket{\psi_0}\ket{0^{3(m+1)+\lceil 2\mathrm{log}_2\epsilon_2^{-1}\rceil}}
    \right)
    =&
    \ket{0^a}
    \left(\sum c_i \ket{\lambda_i}\ket{0^{3(m+1)+\lceil 2\mathrm{log}_2\epsilon_2^{-1}\rceil}}
    \right) 
    \nonumber\\
    \xrightarrow{\text{Phase estimation of } \tilde{U}_{\epsilon_2}}
    &\ket{0^a}
    \left(\sum c_i \ket{\lambda_i}\ket{\tilde{\lambda}_{i}^{(m,\epsilon_22^{-(m+1)})}}
    +\ket{\zeta}\right)
    \nonumber\\
    \xrightarrow{\text{Block encoding $\frac{1}{4}\sqrt{1-\AN/2dN}$}}
    &\ket{0^a}
    \left(\sum  \frac{c_i}{4}\sqrt{1-\AN/2dN}\ket{\lambda_i}\ket{\tilde{\lambda}_{i}^{(m,\epsilon_22^{-(m+1)})}}
    + \ket{\zeta'}\right) + \ket{\xi} \nonumber\\
    \xrightarrow{\text{Measure first register}}
     &\mathrm{Pr}[0^a]\approx \frac{1}{16}\sum_i \vert c_i\vert^2\bra{\lambda_i}(1-\AN/2dN)\ket{\lambda_i} = \frac{1-\barAN/2dN}{16}. 
\end{align}

First, we prepare the state 
$\ket{0^a}
\otimes \left(\ket{\psi_0}\ket{0^{3(m+1)+\lceil 2\mathrm{log}_2\epsilon_2^{-1}\rceil}}\right)
$. 
The first register has $a = \lceil N\mathrm{log}_2d\rceil+5=\bigO{N}$ qubits. 
Since $m = \bigO{\poly{N}}$, this state can be prepared in $\bigO{\poly{N}}$ space. For convenience, we use a qubit-based algorithm rather than a qudit based algorithm, where each qudit of the original state is replaced with $\lceil{\log d} \rceil$ qubits. We designate the first $a$ qubits, which will act as a control, as register 1, and the remaining register as register $2.$ 

Next, we perform phase estimation on register 2
for the operator $\tilde{U}_{\epsilon_2}$
, taking it to $m$ bits of precision and with $2$-norm error in the approximation register of $\epsilon_22^{-(m+1)}$.
The resulting state at the end of this step is:
\begin{equation}
    \ket{0^a}\left(\sum c_i \ket{\lambda_i}\ket{\tilde{\lambda}_i^{(m,\epsilon_22^{-(m+1)})}}
    +\ket\zeta\right)
    ,
\end{equation}
where $\ket{\tilde{\lambda}_i^{(m,\epsilon_22^{-(m+1)})}}$ is as above, and $\ket\zeta,$ the error from performing the QPE on $\tilde{U}_{\epsilon_1}$ instead of $e^{iH},$ satisfies $\Vert\ket\zeta\Vert\leq \epsilon_1.$ We are once again able to give an efficient classical description of this phase estimation. 

We now apply a block encoding of $\sqrt{1-\AN/2dN}
/4$ to our state. By \cref{lemma:sqrtblock}, we can do so to $\epsilon_3$ error using $\bigO{\poly{\log{N}}\poly{\log{\epsilon_3^{-1}}}}$ ancillas. 
The resulting state is:
\begin{equation}
   \ket{0^a}
   \left(\sum \frac{c_i}{4} \sqrt{1-\AN/2dN}
   \ket{\lambda_i}\ket{\tilde{\lambda}_{i}^{(m,\epsilon_22^{-(m+1)})}}
   +\ket{\zeta'}\right) 
    + \ket{\xi}
\end{equation}
Where $(\bra{0^a}
\otimes I)\ket{\xi} =0,$ and $\Vert \ket{\zeta'}\Vert \leq \epsilon_1+\epsilon_3.$

Thus, we see that if we measure the first register, we will obtain $0^a$ with probability 
\begin{align}
    \mathrm{Pr}[0^a] &= \sum_{i,j}\frac{c_i^*c_j}{16} \bra{\lambda_i}\left(1-\AN/2dN\right)\ket{\lambda_j}\braket{\tilde{\lambda}_i^{(m,\epsilon_22^{-(m+1)})}}{\tilde{\lambda}_j^{(m,\epsilon_22^{-(m+1)})}}\nonumber\\&\qquad + \sum_{i} \left[\frac{c_i}{4} \left(\bra{\zeta'}\sqrt{1-\AN/2dN}\otimes I\right)\ket{\lambda_i}\ket{\tilde{\lambda}_i^{(m,\epsilon_22^{-(m+1)}}}\!+\!\frac{c_i^*}{4} \bra{\lambda_i}\bra{\tilde{\lambda}_i^{(m,\epsilon_22^{-(m+1)}}}\left(\sqrt{\AN}\!\otimes \!I\ket{\zeta'}\right)\!\right]\!,
\end{align}
and hence,
\begin{equation}
    \left\vert \mathrm{Pr}[0^a]-\frac{1-\frac{\barAN}{2dN}}{16}\right\vert\leq \frac{\epsilon_2}{16}+\frac{1}{2}(\epsilon_1+\epsilon_3).
\end{equation}

Thus, by Hoeffding's inequality, for any $\epsilon_4>0,$ it suffices to repeat this measurement $\lceil \epsilon_4^{-2}\rceil$ times and take the sample average to get an $(\epsilon_2/16+(\epsilon_1+\epsilon_3)/2+\epsilon_4)$-estimate of $\frac{1-\frac{\barAN}{2dN}}{16}$ with probability at least $2/3.$ To get overall error of $\epsilon$ in the estimate of $\vert \barAN\vert$ with that probability, it suffices to take $\epsilon_2=\epsilon/8dN$, $\epsilon_4=\epsilon/128dN,$ and $\epsilon_1,\epsilon_3=\epsilon/64dN.$ Thus, we see that to repeat this measurement, we need in total \begin{align}\bigO*{\epsilon^{-2}\left(\poly{N}+\poly{\log{N\epsilon^{-1}}}+\poly{\log{N}}\poly{\log{N\epsilon^{-1}}}\right)} &=\bigO{\epsilon^{-2}\poly{\log{\epsilon^{-1}}}\poly{N}}\end{align} space. 
\end{proof}    
\end{lemma}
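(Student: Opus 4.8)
The plan is to build a quantum algorithm that estimates $\barAN$ to additive error $\epsilon$ with probability $\geq 2/3$ using $\bigO{\epsilon^{-2}\poly{\log{\epsilon^{-1}}}\poly{N}}$ space (hence $\poly{N}$ for constant $\epsilon$); invoking $\BQPSPACE = \PSPACE$ then places $\FSRelax$ — and, since it is a special case, $\FSTherm$ — in $\PSPACE$. Everything rests on the elementary identity $\barAN = \sum_i \vert c_i\vert^2 \bra{\lambda_i}\AN\ket{\lambda_i}$, obtained by writing $\ket{\psi_0} = \sum_i c_i\ket{\lambda_i}$, expanding $e^{\pm iHt}$ in the eigenbasis, and using $\tfrac1T\int_0^T e^{i(\lambda_i-\lambda_j)t}\,dt \to \delta_{\lambda_i,\lambda_j}$ together with the promise that the supported eigenvalues are pairwise distinct with gaps $\geq 2^{-m}$ for some $m = \poly{N}$. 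So the task is to evaluate the \emph{diagonal} of $\AN$ in the energy eigenbasis, weighted by $\vert c_i\vert^2$, without ever diagonalizing $H$.

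First I would tag each eigenstate with an approximation of its eigenphase. Using a standard product formula I can implement $\tilde U_{\epsilon_1}$ with $\norm{\tilde U_{\epsilon_1} - e^{iH}} \leq \epsilon_1 2^{-5(m+1)}$ in polynomial space; the circuit is exponentially deep but has a uniform, layer-by-layer classical description computable in $\poly{N}$ time, so it can be streamed. Running quantum phase estimation of $\tilde U_{\epsilon_1}$ on $\ket{\psi_0}$ to $\bigO{m}$ bits of precision with error parameter $\epsilon_2 2^{-(m+1)}$ uses only $\bigO{m} = \poly{N}$ ancillas by \cref{lemma:phaseestspacecomp}, and the circuit again has an efficient classical description due to its repeated structure. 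Because the eigenphases are $2^{-m}$-separated, \cref{cor:QPEOrtho} guarantees that the phase registers attached to distinct eigenstates are nearly orthogonal (overlap $\leq \epsilon_2 2^{-m}$); this is exactly the dephasing needed to kill the off-diagonal terms, so an observable evaluated on the tagged state reproduces $\sum_i \vert c_i\vert^2 \bra{\lambda_i}\AN\ket{\lambda_i}$ up to $\bigO{\epsilon_2}$.

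Next I would turn $\langle \AN\rangle$ into a measurement probability. Since $\AN$ is $dN$-sparse with entries of magnitude at most $1$, \cref{lemma:sqrtblock} provides a block encoding of $\tfrac14\sqrt{1-\AN/2dN}$ to error $\epsilon_3$ using $\bigO{\poly{\log{N}}\poly{\log{\epsilon_3^{-1}}}}$ ancillas; the square root is used so that the squaring implicit in a measurement probability returns a function linear in $\AN$, and the affine shift $1-\AN/2dN$ keeps the argument in the range where the square-root polynomial approximation (\cref{lemma:polyapprox_sqrt}) is valid. Applying this block encoding to the tagged state and then measuring the control register together with the block-encoding ancillas, the all-zeros outcome occurs with probability $\frac{1-\barAN/2dN}{16}$ up to an error $\bigO{\epsilon_1 + \epsilon_2 + \epsilon_3}$. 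Estimating this probability to accuracy $\epsilon_4$ by $\bigO{\epsilon_4^{-2}}$ independent repetitions and a sample mean (Hoeffding) succeeds with probability $\geq 2/3$; since the target probability carries a factor $1/(32\,dN)$, I set $\epsilon_1,\epsilon_2,\epsilon_3,\epsilon_4 = \bigtheta{\epsilon/dN}$, which delivers an $\epsilon$-estimate of $\barAN$. The per-run space is the $\poly{N}$-qubit state register plus $\poly{\log{N\epsilon^{-1}}}$ ancillas for phase estimation and the block encoding, and accumulating the $\bigO{\epsilon^{-2}d^2N^2}$ samples keeps the total within $\bigO{\epsilon^{-2}\poly{\log{\epsilon^{-1}}}\poly{N}}$.

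The main obstacle is reconciling the exponentially deep phase-estimation circuit with the polynomial space bound: one must argue that the controlled powers $\tilde U_{\epsilon_1}^{2^k}$ can be executed gate-by-gate with only $\poly{N}$-size classical bookkeeping, never materializing the whole circuit, and that the Trotter error stays below the $2^{-5(m+1)}$ threshold needed for the phase estimates to behave. The second delicate point is conceptual: the QPE-tagging implements a faithful dephasing only because distinct eigenphases are $\geq 2^{-m}$ apart at $m = \poly{N}$ bits of precision — this is precisely where the inverse-exponential-gap promise is indispensable, and without it the cross terms would not be suppressed. The remaining work — propagating the additive errors through the sub-normalization factors ($\tfrac14$, $\tfrac1{16}$, the $1/(2dN)$ rescaling) and through Hoeffding to extract the clean $\epsilon^{-2}$ dependence — is routine bookkeeping.
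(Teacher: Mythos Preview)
Your proposal is correct and follows essentially the same approach as the paper: tag eigenstates via phase estimation of a Trotterized $e^{iH}$ to $\bigO{m}$ bits so that \cref{cor:QPEOrtho} dephases the cross terms, then apply the block encoding of $\tfrac14\sqrt{1-\AN/2dN}$ from \cref{lemma:sqrtblock}, measure the ancilla register to obtain a probability $\approx (1-\barAN/2dN)/16$, and estimate that probability by Hoeffding with $\epsilon_i=\bigtheta{\epsilon/dN}$. The only minor slip is in your closing space accounting, where you write ``$\poly{\log{N\epsilon^{-1}}}$ ancillas for phase estimation'' --- phase estimation actually costs $\bigO{m}=\poly{N}$ ancillas (as you correctly noted earlier), but this is already absorbed by the $\poly{N}$ factor in the final bound.
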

Finally, this algorithm allows us to decide $\FSRelax$. Hence, we have
\begin{lemma}
\label{lem:fsrelax_pspace_containment}
$\FSRelax \in \PSPACE$.
\begin{proof}
\cref{lem:long_time_qpoly} allows us to construct a probabilistic quantum polynomial space algorithm to decide $\FSRelax$. The algorithm in \cref{lem:long_time_qpoly} relies on postselection.
    We can use \cref{lem:long_time_qpoly} to probabilistically compute $\Gamma$ such that $|\Gamma - \barAN| \leq \epsilon'$ for some $\epsilon' < \epsilon$. In the event of the correct postselection outcomes occurring, since we have the promise that either $|\barAN - A^*| \leq \epsilon$ or $|\barAN - A^*| \geq c\epsilon$, one of the two following cases holds.
    \begin{enumerate}
        \item If $|\Gamma - A^*| \leq \epsilon + \epsilon'$ then $|\barAN - A^*| \leq \epsilon.$
        \item Otherwise,if $|\Gamma - A^*| \geq c \epsilon - \epsilon'$, then $|\barAN - A^*| \geq c\epsilon.$
    \end{enumerate}
    Hence if our algorithm postselects correctly, $\FSRelax$ can be decided by a polynomial space quantum algorithm that computes $|\Gamma - A^*|$.
    On the other hand, if our algorithm does not postselect correctly, we can simply return YES or NO at random. For YES instances of $\FSRelax$, we accept with probability strictly greater than $1/2$, while for NO instances, we accept with probability strictly less than $1/2$. Hence, $\FSRelax \in \PrQSPACE$. 
    By \textcite{Watrous_1999}, $\PrQSPACE \subseteq \PSPACE$. Therefore, $\FSRelax \in PSPACE$.
\end{proof}
\end{lemma}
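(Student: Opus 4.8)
The plan is to convert the estimation routine of \cref{lem:long_time_qpoly} into a decision procedure and then place that procedure inside a space-bounded quantum class that is known to collapse to $\PSPACE$. First I would fix the target precision. Given the promise of $\FSRelax$ (\cref{def:fsrelax}), on each instance either $|\barAN - A^*|\le\epsilon$ or $|\barAN - A^*|\ge c\epsilon$, so set the error parameter of \cref{lem:long_time_qpoly} to the constant $\epsilon' := (c-1)\epsilon/2 > 0$. Then, from an estimate $\Gamma$ with $|\Gamma - \barAN|\le\epsilon'$, a YES instance yields $|\Gamma - A^*|\le\epsilon+\epsilon' < (c+1)\epsilon/2$ while a NO instance yields $|\Gamma - A^*|\ge c\epsilon-\epsilon' > (c+1)\epsilon/2$; hence the deterministic test ``accept iff $|\Gamma - A^*|\le (c+1)\epsilon/2$'' is correct whenever the estimate is good, which by \cref{lem:long_time_qpoly} happens with probability at least $2/3$, all within $\poly(N)$ space.

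The second step is to deal with the fact that the routine of \cref{lem:long_time_qpoly} produces the estimate only in the branch where the control register is measured to be $0^a$ (the heralding/postselection event), which occurs with probability $\Pr[0^a]\approx (1-\barAN/2dN)/16 > 0$; since conditioning is not available in a space-bounded machine, I would instead target $\PrQSPACE$. Concretely, design the circuit so that in the heralded branch it carries out the amplitude-estimation substeps and outputs the decision bit above, and in the non-heralded branch it outputs accept or reject by measuring a fresh $\ket{+}$ ancilla, i.e.\ accepts with probability exactly $1/2$. The overall acceptance probability is then $\Pr[0^a]\, p_{\mathrm{dec}} + (1-\Pr[0^a])\tfrac{1}{2} = \tfrac{1}{2} + \Pr[0^a]\,(p_{\mathrm{dec}}-\tfrac{1}{2})$, where $p_{\mathrm{dec}}\ge 2/3$ on YES instances and $p_{\mathrm{dec}}\le 1/3$ on NO instances; since $\Pr[0^a] > 0$ this is strictly above $\tfrac{1}{2}$ on YES and at most $\tfrac{1}{2}$ on NO, so $\FSRelax\in\PrQSPACE$, and invoking $\PrQSPACE\subseteq\PSPACE$ \cite{Watrous_1999} finishes the argument. (One could instead amplify the success probability and argue directly that $\FSRelax\in\BQPSPACE=\PSPACE$, but routing through $\PrQSPACE$ avoids having to take a majority vote over many postselected runs.)

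I expect the main difficulty to be bookkeeping rather than anything conceptual: one must verify that every exponential-depth ingredient — the Trotterized $\tilde U_{\epsilon_1}$, the $O(m)$-bit phase estimation with $m=\poly(N)$, and the square-root block encoding of \cref{lemma:sqrtblock} built from \cref{lemma:sparseblockenc,lemma:polyblock,lemma:phaseestspacecomp} — genuinely executes in polynomial space, which is exactly what the ``efficient classical description of each gate layer'' clauses of those lemmas provide, and that the inverse-exponential spectral-gap promise in \cref{def:fsrelax} is what forces $m=\poly(N)$ phase bits to suffice for $\Pr[0^a]$ to encode $\barAN$ to the required accuracy (via \cref{cor:QPEOrtho}). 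Secondarily, one should confirm that the per-shot error from approximating $e^{iH}$ and from the block encoding combines additively with the sampling error of amplitude estimation precisely as in \cref{lem:long_time_qpoly}, so that the constant separation $(c-1)\epsilon/2$ is never eroded.
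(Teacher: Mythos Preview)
Your proposal is correct and follows essentially the same route as the paper: invoke \cref{lem:long_time_qpoly} for a constant-precision estimate of $\barAN$, threshold $|\Gamma-A^*|$ to decide, output a uniformly random bit on the unheralded branch, and conclude via $\PrQSPACE \subseteq \PSPACE$ \cite{Watrous_1999}. Your version is simply more explicit than the paper's --- you pin down the threshold $(c+1)\epsilon/2$, write out the acceptance-probability formula $\tfrac12 + \Pr[0^a](p_{\mathrm{dec}}-\tfrac12)$, and note the alternative $\BQPSPACE$ route --- but these are refinements of the same argument rather than a different strategy.
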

\noindent Finally, our main result \cref{thm:fsrelax_pspace_complete} follows immediately from \cref{lem:fsrelax_pspace_containment} and \cref{thm:fsrelax_pspace_hardness}.

\section{\texorpdfstring{$\FSTherm$ and $\FSThermG$ are contained in $\PSPACE$}{FSTherm(MC) and FSTherm(Gibbs) are contained in PSPACE}}
\label{sec:fstherm_containment}
In this section we'll show that $\FSTherm$ and $\FSThermG$  are contained in $\PSPACE$. We will first show a polynomial quantum space algorithm to compute $\tr{\AN \rhoMC}$. Along with the polynomial space algorithm to determine $\FSRelax$, this yields a polynomial quantum space algorithm to decide $\FSTherm$ by reducing any instance of $\FSTherm$ to an instance of $\FSRelax$. Similarly, using a polynomial space quantum algorithm to compute $\tr{\AN \rhog}$, we show that $\FSThermG \in \PSPACE$.  Throughout this section, we restrict to systems of qubits rather than qudits, in order to make use of qubit block-encoding results. However, this containment result is general, since a system of a $k$-local Hamiltonian on $N$ qudits of local dimension $d$ can be described as a system of a $k\log{\lceil d \rceil}$-local Hamiltonian on $N\log{\lceil d \rceil}$ qubits.
 
First, we prove some preliminary lemmas about block-encodings, which we will use to prepare an approximation of the microcanonical state.
 \begin{proposition}
 \label{prop:poly_blockencoding}
     Let $H$ be a Hermitian matrix on $N$ qubits, and $\alpha \geq \norm{H}$. Let the eigenvectors and eigenvalues of $H$ be $\{\ket{\lambda_i}, \lambda_i\}$. For a polynomial $p$, let $U$ be an $(N+m)$-qubit, $(\alpha, m, 0)$-block encoding of $p(H/\alpha)$. 
     Let $\ket{\psi} \coloneqq \frac{1}{\sqrt{2^N}} \sum_i  \ket{\lambda_i}_a\ket{\lambda_i}_b$, where $N$ qubits form the main computational register $a$, and $N$ qubits form an auxiliary register $b$.
     Then, $U$ can be used to prepare the state
     \begin{equation}
         \ket{\psip(p)} \coloneqq  \frac{p(H/\alpha)_a\ket{\psi}}{\norm{p(H/\alpha)_a\ket{\psi}}}.
     \end{equation}
     such that
     \begin{equation}
         \rhop(p) \coloneqq  \ptr{b}{\ketbra{\psip(p)}{\psip(p)}} = \sum_i \frac{(p(\lambda_i/\alpha))^2}{\sum_j (p(\lambda_i/\alpha))^2} \ketbra{\lambda_i}{\lambda_i}
     \end{equation}
 \end{proposition}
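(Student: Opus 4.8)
The plan is to apply $U$ to the main register $a$ of $\ket{\psi}$ together with $m$ fresh ancillas in $\ket{0^m}$, leave the auxiliary register $b$ untouched, and then postselect the ancillas on $\ket{0^m}$; the heralded output state is exactly $\ket{\psip(p)}$, and tracing out $b$ gives $\rhop(p)$ by orthonormality of $\{\ket{\lambda_i}\}$. First I would record the identity that drives the computation: for any operator $M$ on register $a$ we have $M_a\ket{\psi}=\frac{1}{\sqrt{2^N}}\sum_i (M\ket{\lambda_i})_a\ket{\lambda_i}_b$, so since $\{\ket{\lambda_i}\}$ diagonalizes $H$,
\begin{equation}
    p(H/\alpha)_a\ket{\psi}=\frac{1}{\sqrt{2^N}}\sum_i p(\lambda_i/\alpha)\,\ket{\lambda_i}_a\ket{\lambda_i}_b,\qquad \norm{p(H/\alpha)_a\ket{\psi}}^2=\frac{1}{2^N}\sum_i (p(\lambda_i/\alpha))^2 .
\end{equation}
We may assume this norm is nonzero (equivalently, $p$ does not vanish on all of $\sigma(H)$), since otherwise $\rhop(p)$ is undefined. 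I would also remark in passing that $\ket{\psi}$ is, up to a local unitary on register $b$, the canonical maximally entangled state $2^{-N/2}\sum_x\ket{x}_a\ket{x}_b$ and hence is itself preparable with $O(N)$ elementary gates — but this fact is not needed below, as the argument uses only $\braket{\lambda_i}{\lambda_j}=\delta_{ij}$ on both registers.

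Next, by \cref{def:block_encoding} with $\epsilon=0$, $(I_a\otimes\bra{0^m})\,U\,(I_a\otimes\ket{0^m})=\frac{1}{\alpha}p(H/\alpha)$. Applying $U$ to registers $(a,\mathrm{anc})$ of $\ket{\psi}_{ab}\otimes\ket{0^m}_{\mathrm{anc}}$ and splitting the result according to whether the ancilla register is in $\ket{0^m}$, the displayed identity yields
\begin{equation}
    U_{a,\mathrm{anc}}\left(\ket{\psi}_{ab}\ket{0^m}_{\mathrm{anc}}\right)=\frac{1}{\alpha\sqrt{2^N}}\sum_i p(\lambda_i/\alpha)\,\ket{\lambda_i}_a\ket{\lambda_i}_b\ket{0^m}_{\mathrm{anc}}+\ket{\Xi},
\end{equation}
with $\ket{\Xi}$ having no component in which the ancilla register equals $\ket{0^m}$. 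Measuring the ancilla register and postselecting on $\ket{0^m}$ — which succeeds with probability $\alpha^{-2}\norm{p(H/\alpha)_a\ket{\psi}}^2>0$ — and renormalizing produces exactly $\ket{\psip(p)}=p(H/\alpha)_a\ket{\psi}/\norm{p(H/\alpha)_a\ket{\psi}}$.

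Finally I would trace out register $b$: since $\{\ket{\lambda_i}_b\}$ is orthonormal, all cross terms vanish, giving
\begin{equation}
    \ptr{b}{\ketbra{\psip(p)}{\psip(p)}}=\sum_i\frac{(p(\lambda_i/\alpha))^2}{\sum_j (p(\lambda_j/\alpha))^2}\,\ketbra{\lambda_i}{\lambda_i},
\end{equation}
which is the claimed $\rhop(p)$. I do not expect a real obstacle here: this is the standard construction of a purified ``density matrix'' obtained by applying a function of $H$ to one half of a maximally entangled state. The only points needing mild care are keeping track of which register the block-encoding ancilla sits in, the harmless assumption that $p$ does not vanish on all of $\sigma(H)$ so the normalization is nonzero, and — if one wants a deterministic rather than heralded preparation — boosting the postselection success probability by amplitude amplification; for the space-bounded ($\PrQSPACE$) settings in which this proposition is applied, postselection is enough.
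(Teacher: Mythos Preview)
Your proposal is correct and follows essentially the same approach as the paper: apply the block encoding to register $a$ with fresh ancillas, postselect on $\ket{0^m}$, observe that the heralded state is $p(H/\alpha)_a\ket{\psi}$ normalized, and trace out $b$ using the orthonormality of $\{\ket{\lambda_i}\}$. You are in fact slightly more careful than the paper, which silently drops the $1/\alpha$ factor coming from \cref{def:block_encoding}; as you note, this factor cancels upon normalization, so neither presentation is in error.
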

 \begin{proof}
     Let $c$ be the register of the $m$ block encoding ancilla qubits. $U$ acts non-trivially on registers $a$ and $c$.
    By \cref{def:block_encoding}, $U$ prepares the state
    \begin{equation}
        U\ket{\psi} = p(H/\alpha)_a \ket{\psi}\ket{0^m}_c + \ket{\phi^{\perp}}, 
    \end{equation}
    where $\bra{0^m}_c\ket{\phi^{\perp}} = 0$, and 
    \begin{align}
        p(H/\alpha)_a \ket{\psi} &= \frac{1}{\sqrt{2^n}} \sum_i \frac{1}{\sqrt{2^N}} p(H/\alpha)_a \ket{\lambda_i}_a\ket{\lambda_i}_b \nonumber\\
        &= \frac{1}{\sqrt{2^N}} \sum_i  p(\lambda_i/\alpha) \ket{\lambda_i}_a\ket{\lambda_i}_b.
    \end{align}
    The state obtained after postselecting on the correct block-encoding being performed is
    \begin{align}
        \ket{\psip(p)} &\coloneqq  \frac{p(H'/\alpha)_a\ket{\psi}}{\norm{p(H'/\alpha)_a\ket{\psi}}} \nonumber\\
        &= \frac{1}{\sqrt{\sum_j (p(\lambda_i/\alpha))^2}} \sum_i p(\lambda_i/\alpha) \ket{\lambda_i}_a\ket{\lambda_i}_b.
    \end{align}
    It follows that
    \begin{equation}
         \rho_p \coloneqq  \ptr{b}{\ketbra{\psip(p)}{\psip(p)}} = \sum_i \frac{(p(\lambda_i/\alpha))^2}{\sum_j (p(\lambda_i/\alpha))^2} \ketbra{\lambda_i}{\lambda_i}.
     \end{equation}
 \end{proof}
 We now show that there exists a suitable choice of polynomial $p$ such that $\rho_p$ approximates the microcanonical state.
\begin{lemma}
    \label{lem:polyapprox_microcanonical}
    Let $H$ be a Hermitian matrix on $N$ qubits, and $\alpha \geq \norm{H}$. Let $\pgauss{w/\alpha}{\eta}$ be defined as in \cref{lemma:polyapprox_gaussian}. Then, for any $\epsilon_p \in (0, 1]$, with $\eta = \frac{\epsilon_p e^{-\pi \alpha^2/w^2}}{9}$, $\pgauss{w/\alpha}{\eta}$ has degree $\bigO{\frac{\alpha}{w}(\log{1/\epsilon} + \pi \alpha^2/w^2)}$ and
    \begin{equation}
        \norm{\rhop(\pgauss{w/\alpha}{\eta}) - \rhoMC(H,0)}_1 \leq  \epsilon_p.
    \end{equation}
\end{lemma}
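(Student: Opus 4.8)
The plan is to extract the eigenvalue weights of $\rhop(\pgauss{w/\alpha}{\eta})$ from \cref{prop:poly_blockencoding}, compare them termwise against the microcanonical weights, and reduce the claim to a bound on two probability vectors. Instantiating \cref{prop:poly_blockencoding} with $p=\pgauss{w/\alpha}{\eta}$ gives
\begin{equation}
  \rhop(\pgauss{w/\alpha}{\eta}) \;=\; \sum_i q_i \ketbra{\lambda_i}{\lambda_i},\qquad q_i \;:=\; \frac{\pgauss{w/\alpha}{\eta}(\lambda_i/\alpha)^2}{\sum_j \pgauss{w/\alpha}{\eta}(\lambda_j/\alpha)^2},
\end{equation}
a state diagonal in the eigenbasis of $H$. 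Writing $g_i := e^{-\pi\lambda_i^2/(2w^2)}$, one has $\rhoMC(H,0)=\sum_i p_i\ketbra{\lambda_i}{\lambda_i}$ with $p_i := g_i^2/\sum_j g_j^2$ (the Gaussian window $f(x)=e^{-\pi x^2}$ assigns weight $e^{-\pi\lambda_i^2/w^2}=g_i^2$). Both operators are simultaneously diagonalized, so $\norm{\rhop(\pgauss{w/\alpha}{\eta}) - \rhoMC(H,0)}_1 = \sum_i\vert q_i - p_i\vert$.

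Next I would control the per-weight error. Since $\alpha\geq\norm{H}$, \emph{every} eigenvalue satisfies $\lambda_i/\alpha\in[-1,1]$, so \cref{lemma:polyapprox_gaussian} applied with window parameter $w/\alpha$ and accuracy $\eta$ gives $\vert\pgauss{w/\alpha}{\eta}(\lambda_i/\alpha) - g_i\vert\leq\eta$ for all $i$ (the target Gaussian $e^{-\pi x^2/(2(w/\alpha)^2)}$ equals $g_i$ at $x=\lambda_i/\alpha$). The same containment $\vert\lambda_i\vert\leq\alpha$ furnishes a uniform \emph{lower} bound $g_i\geq g_0 := e^{-\pi\alpha^2/(2w^2)}$, so the polynomial has small \emph{relative} error on each weight: $\vert\pgauss{w/\alpha}{\eta}(\lambda_i/\alpha)/g_i - 1\vert\leq\eta/g_0$.

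The core of the argument is the error propagation through the normalization. Put $r_i := (\pgauss{w/\alpha}{\eta}(\lambda_i/\alpha)/g_i)^2$, so that $\pgauss{w/\alpha}{\eta}(\lambda_i/\alpha)^2 = g_i^2 r_i$; dividing numerator and denominator of $q_i$ by $\sum_j g_j^2$ gives $q_i = p_i r_i/\bar r$ with $\bar r := \sum_j p_j r_j$, hence $\vert\bar r - 1\vert\leq\mu := \max_i\vert r_i - 1\vert$ and
\begin{equation}
  \sum_i\vert q_i - p_i\vert \;=\; \frac{1}{\bar r}\sum_i p_i\,\vert r_i - \bar r\vert \;\leq\; \frac{2\mu}{1-\mu}.
\end{equation}
With $\eta = \epsilon_p e^{-\pi\alpha^2/w^2}/9 = \epsilon_p g_0^2/9$ we have $\eta/g_0 = \epsilon_p g_0/9 \leq 1/9$, so $\mu\leq(\eta/g_0)(\eta/g_0 + 2)\leq\tfrac{19}{81}\epsilon_p < \tfrac14$, and therefore $\tfrac{2\mu}{1-\mu}\leq\epsilon_p$. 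The degree claim then follows by substituting the value of $\eta$ into the degree bound of \cref{lemma:polyapprox_gaussian}, noting $\log{\eta^{-1}} = \log{9/\epsilon_p} + \pi\alpha^2/w^2$, and simplifying (e.g.\ $\max\{\alpha/w,\sqrt{\log{\eta^{-1}}}\}\leq\alpha/w + \sqrt{\log{9/\epsilon_p}}$).

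The main obstacle is precisely this last error-propagation step, and the natural first attempt fails: a purely additive bound on $\sum_i\vert\pgauss{w/\alpha}{\eta}(\lambda_i/\alpha)^2 - g_i^2\vert$ picks up a factor equal to the Hilbert-space dimension ($2^N$ eigenvalue terms, each contributing $\bigO{\eta}$), which is irreconcilable with the $N$-independent choice of $\eta$ in the statement. The remedy is to track the multiplicative ratios $r_i$ rather than additive differences: because $\alpha\geq\norm{H}$ confines \emph{every} eigenvalue to $[-\alpha,\alpha]$, where the Gaussian weight is bounded below by $g_0$, the $r_i$ are all uniformly within $\bigO{\eta/g_0}$ of $1$, and the dimension factor cancels between the numerator and denominator of the normalized weights. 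Verifying that the constant $9$ leaves enough room in the $2\mu/(1-\mu)$ estimate is the only genuinely delicate computation.
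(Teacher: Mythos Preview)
Your argument is correct, but it takes a genuinely different route from the paper's. The paper proceeds by the very additive strategy you dismiss as ``irreconcilable'': it bounds $\bigl|\,\pgauss{w/\alpha}{\eta}(\lambda_i/\alpha)^2 - e^{-\pi\lambda_i^2/w^2}\,\bigr|\leq \eta(2+\eta)=:\epsilon_\eta$ termwise, accepts the dimension factor $|\Delta|\leq \epsilon_\eta 2^N$, and then cancels it against the matching lower bound $S=\sum_i e^{-\pi\lambda_i^2/w^2}\geq e^{-\pi\alpha^2/w^2}\,2^N$ to obtain $\|\rhop-\rhoMC\|_1\leq 2\epsilon_\eta/(e^{-\pi\alpha^2/w^2}-\epsilon_\eta)$. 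So the additive approach does not fail; the $2^N$ appears symmetrically in numerator and denominator and drops out.

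Your multiplicative reformulation via the ratios $r_i=(\pgauss{w/\alpha}{\eta}(\lambda_i/\alpha)/g_i)^2$ and the identity $q_i=p_ir_i/\bar r$ is a clean alternative that sidesteps the dimension factor entirely, and the $2\mu/(1-\mu)$ estimate with $\mu\leq(\eta/g_0)(\eta/g_0+2)$ is a tidy way to package the normalization error. The two arguments are essentially dual: the paper lower-bounds the unnormalized partition sum, while you lower-bound each individual Gaussian weight by $g_0$ and work with relative errors throughout. Your version is arguably more transparent about why no dimension dependence survives, at the cost of a slightly more delicate constant check; the paper's version is more direct once one notices the $2^N$ in $S$.
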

\begin{proof}
    First, we note that by \cref{lemma:polyapprox_gaussian} there exists a polynomial approximation $p(x)$ of the Gaussian function such that
    \begin{equation}
        |\pgauss{w/\alpha}{\eta} - e^{-\pi \alpha^2 x^2/(2w^2)}| \leq \, \eta\ \ \forall \, x \in [-1,1]
    \end{equation}
    where $\pgauss{w/\alpha}{\eta}$ has degree $\deg(\pgauss{w/\alpha}{\eta}) = \bigO{\frac{\alpha}{w}\log{1/\eta}}$.
    Note that
    \begin{equation}
         \rhop(\pgauss{w/\alpha}{\eta}) = \sum_i \frac{(\pgauss{w/\alpha}{\eta}(\lambda_i/\alpha))^2}{\sum_j (\pgauss{w/\alpha}{\eta}(\lambda_i/\alpha))^2} \ketbra{\lambda_i}{\lambda_i}, \,\, \rhoMC(H,0) = \sum_i \frac{e^{-\pi \lambda_i^2/w^2}}{\sum_j e^{-\pi \lambda_j^2/w^2}} \ketbra{\lambda_i}{\lambda_i}
    \end{equation}
    Let $S \coloneqq \sum_i e^{-\pi \lambda_i^2/w^2}$ and $\Delta \coloneqq \sum_i (p(\lambda_i/\alpha))^2 - e^{-\pi \lambda_i^2/w^2}$.
    Then,
    \begin{align} 
        \lvert (p(\lambda_i/\alpha))^2 - e^{-\pi \lambda_i^2/w^2} \rvert & \leq \eta \lvert p(\lambda_i/\alpha) + e^{-\pi \lambda_i^2/(2w^2)} \rvert\nonumber\\
        &\leq \eta(2+\eta) \nonumber\\
        \implies |\Delta| &\leq \eta(2+\eta)2^N.
    \end{align}
    Let $\epsilon_{\eta} \coloneqq \eta(2+\eta)$. We thus have
    \begin{align}
        \norm{\rhop(\pgauss{w/\alpha}{\eta})-\rhoMC(H,0)}_1 &= \norm{\sum_i\left(\frac{(p(\lambda_i/\alpha))^2}{S + \Delta} - \frac{e^{-\pi\lambda_i^2/w^2}}{S} \right) \ketbra{\lambda_i}{\lambda_i}}_1 \nonumber\\
        &= \sum_i \left|\frac{S(p(\lambda_i/\alpha))^2}{S(S + \Delta)} - \frac{(S+\Delta)e^{-\pi \lambda_i^2/w^2}}{S(S+\Delta)}  \right| \nonumber\\
        & \leq \sum_i \left(  \frac{\epsilon_{\eta}}{|S + \Delta|} + \left|\frac{\Delta e^{-\pi \lambda_i^2/w^2}}{S(S + \Delta)} \right|\right) \nonumber\\
        & \leq \frac{\epsilon_{\eta} 2^N + |\Delta|}{|S + \Delta|} \nonumber\\
        & \leq \frac{2\epsilon_{\eta} 2^N}{|S + \Delta|}.
    \end{align}
Since $\lambda_i/\alpha \in [-1,1], S \geq e^{-\pi \alpha^2/w^2} 2^N$. Therefore,
    \begin{equation}
    \label{eq:rho_approx}
        \norm{\rhop(\pgauss{w/\alpha}{\eta})-\rhoMC(H,0)}_1  \leq \frac{2\epsilon_{\eta}}{e^{-\pi\alpha^2/w^2} - \epsilon_{\eta}}, \\
    \end{equation}
    Selecting $\eta \leq \frac{\epsilon_p e^{-\pi \alpha^2/w^2}}{9} < 1$, we have
    \begin{align}
        &\epsilon_{\eta} \leq \frac{\epsilon e^{-\pi \alpha^2/w^2}}{3} \nonumber\\
        \implies &\norm{\rhop(\pgauss{w/\alpha}{\eta})-\rhoMC(H,0)}_1 \leq \epsilon_p.
    \end{align}
    The degree of $p$ is $\bigO{\frac{\alpha}{w}\log{1/\eta}} = \bigO{\frac{\alpha}{w}(\log{1/\epsilon} + \pi \alpha^2/w^2)}$, which completes the proof.
\end{proof}    
\begin{lemma}
\label{lem:approx_poly_block_encoding}
    Let $H$ be a $k$-local Hamiltonian on $N$ qubits, $\alpha \leq \norm{H}$, and let the eigenvectors and eigenvalues of $H$ be $\{\ket{\lambda_i}, \lambda_i\}$. For any $\epsilon_p \in (0,1]$, let $\pgauss{w/\alpha}{\eta}$ be chosen as in \cref{lem:polyapprox_microcanonical}. Let $U$ be an $(N+m)$-qubit, $(\alpha, m, \epsilon')$-block encoding of $\pgauss{w/\alpha}{\eta}(H/\alpha)$. 
     Then, a single application of $U$ (postselecting on the correct block-encoding outcome) can be used to prepare a state $\rho'$ such that
    \begin{equation}
        \norm{\rho' - \rhoMC(H,0)}_1 \leq \sqrt{8\epsilon'} + \epsilon_p.
    \end{equation}
\end{lemma}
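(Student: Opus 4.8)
The plan is to compare the prepared state $\rho'$ to the state that an \emph{exact} block encoding of $p(H/\alpha)$, where $p:=\pgauss{w/\alpha}{\eta}$, would produce, and then invoke the two preceding results. Let $\ket{\psi}:=\frac{1}{\sqrt{2^N}}\sum_i\ket{\lambda_i}_a\ket{\lambda_i}_b$ be the maximally-entangled state of \cref{prop:poly_blockencoding}, and let $\rho_p:=\rhop(\pgauss{w/\alpha}{\eta})$ be the associated reduced state from that proposition; that is, $\rho_p=\sum_i\frac{(p(\lambda_i/\alpha))^2}{\sum_j(p(\lambda_j/\alpha))^2}\ketbra{\lambda_i}{\lambda_i}$, with purification $\ket{\psi_p}:=p(H/\alpha)_a\ket{\psi}/\norm{p(H/\alpha)_a\ket{\psi}}$. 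By \cref{lem:polyapprox_microcanonical} we already have $\norm{\rho_p-\rhoMC(H,0)}_1\le\epsilon_p$, so by the triangle inequality it suffices to produce $\rho'$ with $\norm{\rho'-\rho_p}_1\le\sqrt{8\epsilon'}$.

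To produce $\rho'$, apply $U$ — which acts on register $a$ and the $m$-qubit block-encoding ancilla $c$, trivially on $b$ — to $\ket{\psi}_{ab}\ket{0^m}_c$, and post-select on the ancilla register returning $\ket{0^m}$. By \cref{def:block_encoding} and \cref{eq:block_encoding_state}, the resulting post-selected vector on $ab$ is proportional to $\widetilde A_a\ket{\psi}$, where $\widetilde A:=\alpha\,(I\otimes\bra{0^m}_c)U(I\otimes\ket{0^m}_c)$ satisfies $\norm{\widetilde A-p(H/\alpha)}\le\epsilon'$, and hence $\norm{\widetilde A_a\ket{\psi}-p(H/\alpha)_a\ket{\psi}}\le\epsilon'$. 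Set $\ket{\psi'}:=\widetilde A_a\ket{\psi}/\norm{\widetilde A_a\ket{\psi}}$ and $\rho':=\ptr{b}{\ketbra{\psi'}{\psi'}}$. I would now chain three elementary estimates: (i) for any vectors $u,v$, $\norm{u/\norm u-v/\norm v}\le 2\norm{u-v}/\max(\norm u,\norm v)$, which with $u=\widetilde A_a\ket\psi$, $v=p(H/\alpha)_a\ket\psi$ gives $\norm{\ket{\psi'}-\ket{\psi_p}}\le 2\epsilon'/\norm{p(H/\alpha)_a\ket\psi}$; (ii) for normalised pure states, $\norm{\ketbra{\psi'}{\psi'}-\ketbra{\psi_p}{\psi_p}}_1=2\sqrt{1-|\braket{\psi'}{\psi_p}|^2}\le 2\norm{\ket{\psi'}-\ket{\psi_p}}$; (iii) the partial trace over $b$ is a contraction for $\norm{\cdot}_1$. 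Composing them yields $\norm{\rho'-\rho_p}_1\le 4\epsilon'/\norm{p(H/\alpha)_a\ket\psi}$.

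The remaining ingredient is a lower bound on the post-selection amplitude $\norm{p(H/\alpha)_a\ket\psi}$, and here I would reuse the estimate from the proof of \cref{lem:polyapprox_microcanonical}. Since every $\lambda_i/\alpha\in[-1,1]$ and $p$ is an $\eta$-uniform approximation of $e^{-\pi\alpha^2 x^2/(2w^2)}$, we get $\norm{p(H/\alpha)_a\ket\psi}^2=\frac{1}{2^N}\sum_i(p(\lambda_i/\alpha))^2\ge e^{-\pi\alpha^2/w^2}-\epsilon_{\eta}$, where $\epsilon_{\eta}:=\eta(2+\eta)$; with the choice of $\eta$ from \cref{lem:polyapprox_microcanonical} this is at least $\tfrac23 e^{-\pi\alpha^2/w^2}$. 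In the regime relevant to the containment result — the downstream algorithm takes $\epsilon'$ polynomially small, in particular $\epsilon'\le\tfrac12(e^{-\pi\alpha^2/w^2}-\epsilon_{\eta})$ — we then have $\norm{p(H/\alpha)_a\ket\psi}\ge\sqrt{2\epsilon'}$, so $\norm{\rho'-\rho_p}_1\le 4\epsilon'/\sqrt{2\epsilon'}=\sqrt{8\epsilon'}$, and combining with $\norm{\rho_p-\rhoMC(H,0)}_1\le\epsilon_p$ completes the proof.

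The \emph{main obstacle} is the interplay of steps (i)--(iii) with this amplitude bound: a block encoding controls the \emph{operator} $p(H/\alpha)$ only up to $\epsilon'$, and after post-selection that error is magnified by $1/\norm{p(H/\alpha)_a\ket\psi}$, which can be as small as $\Theta(e^{-\pi\alpha^2/(2w^2)})$. Keeping this magnification under control is exactly why this lemma must be paired with the Gaussian-normalisation bound of \cref{lem:polyapprox_microcanonical} and with a sufficiently small $\epsilon'$ — and an exponentially small $\epsilon'$ is affordable for a $\PSPACE$ containment, so nothing is lost. Everything else — keeping track of which register each operator acts on, and the three standard norm inequalities above — is routine bookkeeping.
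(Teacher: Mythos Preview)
Your scaffolding is the paper's: apply $U$ to the maximally-entangled $\ket{\psi}$, post-select on $\ket{0^m}$ in the ancilla, compare the resulting $\rho'$ to $\rho_p=\rhop(p)$ from \cref{prop:poly_blockencoding}, and then to $\rhoMC(H,0)$ via \cref{lem:polyapprox_microcanonical} and the triangle inequality, using that partial trace contracts $\norm{\cdot}_1$. Where you diverge is in turning the block-encoding error $\epsilon'$ into a bound on the \emph{normalised} post-selected vectors. The paper does this in one stroke by citing Proposition~9 of Childs--Kothari--Somma to get $\norm{\ket{\psi_p}-\ket{\phi}}\le\epsilon'$ directly, and then converts via the fidelity--trace-distance relation $2\sqrt{2\epsilon'-\epsilon'^2}\le\sqrt{8\epsilon'}$, with no reference to the post-selection amplitude. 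You instead derive the amplitude-dependent bound $\norm{\ket{\psi'}-\ket{\psi_p}}\le 2\epsilon'/\norm{p(H/\alpha)\ket{\psi}}$ from the elementary inequality $\norm{u/\norm u-v/\norm v}\le 2\norm{u-v}/\max(\norm u,\norm v)$, lower-bound the denominator using the Gaussian estimates of \cref{lem:polyapprox_microcanonical}, and then impose an \emph{additional} hypothesis $\epsilon'\le\tfrac12(e^{-\pi\alpha^2/w^2}-\epsilon_\eta)$, not present in the lemma, to recover exactly $\sqrt{8\epsilon'}$. Your route is more self-contained and makes the amplitude blow-up explicit --- which you rightly flag as the crux --- but it establishes the inequality only under that extra side condition; this is harmless for the downstream use in \cref{lem:rhoMC_sampling}, where $\epsilon'$ is indeed chosen that small, but it is not quite the lemma as stated. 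The paper's outsourced normalisation step delivers the bound unconditionally.
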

\begin{proof}
   As in \cref{lem:polyapprox_microcanonical}, let $\ket{\psi} \coloneqq \sum_i \frac{1}{\sqrt{2^N}} \ket{\lambda_i}_a\ket{\lambda_i}_b$, where $N$ qudits form the main computational register $a$, $N$ qudits form an auxiliary register $b$, and $\ket{\lambda_i}$ are the eigenstates of $H/\alpha$. $\ket{\psi}$ is the maximally entangled state represented in the basis of eigenstates of $H/\alpha$. We also define a register $c$ for the $n_c$ block encoding ancilla qudits. $U$ acts on all three of these registers and by definition of the block-encoding,
    \begin{equation}
        \norm{\bra{0^{n_c}}_c U \ket{0^{n_c}}_c - p(H/\alpha)} \leq \epsilon'.
    \end{equation}
    Let
    \begin{equation}
        \ket{\psi_p} \coloneqq  \frac{p(H/\alpha)_a\ket{\psi}}{\norm{p(H/\alpha)_a\ket{\psi}}}.
    \end{equation}
    Then, as in \cref{prop:poly_blockencoding}
    \begin{equation}
        \rhop = \ptr{b}{\ketbra{\psi_p}{\psi_p}}.
    \end{equation}
    Let $M \coloneqq \bra{0^{n_c}}_c U \ket{0^{n_c}}_c$. Then, let the state obtained after postselecting on the correct block-encoding being performed be
    \begin{equation}
        \ket{\phi} = \frac{M\ket{\psi}}{\norm{M\ket{\psi}}}.
    \end{equation}
    Further, let
    \begin{equation}
        \rho' \coloneqq \ptr{b}{\ketbra{\phi}{\phi}}.
    \end{equation}
    Then, by Proposition 9 of \textcite{Childs_Kothari_Somma_2017},
    \begin{equation}
        \norm{\ket{\psi_p} - \ket{\phi}} \leq \epsilon',
    \end{equation}
    By monotonicity of the trace distance under the partial trace,
    \begin{equation}
        \norm{\ptr{b}{\ketbra{\phi}{\phi}} - \ptr{b}{\ketbra{\psi_p}{\psi_p}}}_1 \leq \norm{ \ketbra{\phi}{\phi} - \ketbra{\psi_p}{\psi_p}}_1.
    \end{equation}
    Using this, and the relation between the fidelity and the trace distance, we obtain
    \begin{align}
        \norm{\rho' - \rhop}_1 &\leq 2\sqrt{2\epsilon'-\epsilon'^2} \nonumber\\
        &\leq \sqrt{8\epsilon'}.
    \end{align}
    Combining this with \cref{lem:polyapprox_microcanonical} using the triangle inequality, we obtain
    \begin{equation}
        \norm{\rho' - \rhoMC(H,0)}_1 \leq \sqrt{8\epsilon'} + \epsilon_p.
    \end{equation}
    \end{proof}
    
    Finally, we have the required ingredients to prepare the microcanonical state. 
    \begin{lemma}
\label{lem:rhoMC_sampling}
    For any $k$-local Hamiltonian $H$ on $N$ qubits, and energy $E$, there is a quantum algorithm using $\bigO{\poly{N, \log{1/\epsilon}}}$ space and $\bigO{\poly{N,\log{1/\epsilon}}\exp(((\norm{H} + |E|)/w)^2)}$ expected time that prepares a state $\rho$ such that
    \begin{equation}
        \norm{\rho - \rhoMC(H,E)}_1 \leq \epsilon.
    \end{equation}
\end{lemma}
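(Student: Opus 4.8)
The plan is to prepare $\rhoMC(H,E)$ as the reduced state, on half $a$ of a maximally entangled pair, of the state obtained by applying a block encoding of a Gaussian of $H$ to the other half $b$ and postselecting on the block-encoding ancillas. All three ingredients are already in hand: \cref{prop:poly_blockencoding} shows that a block encoding of $p(H/\alpha)$ applied to $\tfrac{1}{\sqrt{2^N}}\sum_i\ket{\lambda_i}_a\ket{\lambda_i}_b$, followed by postselection, yields the eigenbasis-diagonal state with weights proportional to $\big(p(\lambda_i/\alpha)\big)^2$; \cref{lem:polyapprox_microcanonical} shows that taking $p=\pgauss{w/\alpha}{\eta}$ makes this state $\epsilon_p$-close in trace distance to $\rhoMC(H,0)$; and \cref{lem:approx_poly_block_encoding} shows that an $\epsilon'$-imprecise such block encoding still produces a state $\big(\sqrt{8\epsilon'}+\epsilon_p\big)$-close to $\rhoMC(H,0)$. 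What remains is to turn the given local Hamiltonian into such a (sparse) block encoding, fix the interlocking precisions, and lower bound the postselection probability to extract the runtime.

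\emph{Reduction to $E=0$.} Since $\rhoMC(H,E)=\rhoMC(H-EI,0)$ --- subtracting $EI$ shifts the spectrum by $-E$, recentring the Gaussian window at $0$, while fixing the eigenvectors --- I would work with $H'\coloneqq H-EI$. This is an explicit $k$-local Hamiltonian: it has sparsity $s=\bigO{\poly{N}}$, the nonzero entries in any row (and their values) can be listed in $\poly{N}$ time, and all entries are bounded in magnitude by an efficiently computable $\gamma=\bigO{\poly{N}}$ (using $\norm{h_i}=\bigO{1}$, $m=\poly{N}$, and $|E|\le\norm{H}$). I set $\alpha\coloneqq\max\{s\gamma,\,4\}$; then $\alpha=\bigO{\poly{N}}$ is an efficiently computable upper bound on $\norm{H'}\le\norm{H}+|E|$ (since $\norm{H'}\le\max_j\sum_k|H'_{jk}|\le s\gamma$), so the spectrum of $H'/\alpha$ lies in $[-1,1]$.

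\emph{Building the block encoding.} I would fix the error budget $\epsilon_p\coloneqq\epsilon/2$ and $\epsilon'\coloneqq\epsilon^2/128$ (so $\sqrt{8\epsilon'}\le\epsilon/4$), and take $\eta$ as prescribed in \cref{lem:polyapprox_microcanonical}, so that $p_0\coloneqq\pgauss{w/\alpha}{\eta}$ has degree $D=\bigO*{\tfrac{\alpha}{w}\big(\log{\epsilon^{-1}}+\alpha^2/w^2\big)}=\poly{N,\log{\epsilon^{-1}}}$ (a polynomial in $N$ and $\log{\epsilon^{-1}}$ for fixed $w$) and $\rhop(p_0)$ is $\epsilon_p$-close to $\rhoMC(H',0)$. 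Since $\eta\le 1$ and $\alpha\ge 4$, the rescaled polynomial $q\coloneqq p_0/\alpha$ satisfies $\sup_{x\in[-1,1]}|q(x)|\le 1/2$, and as it is a positive constant multiple of $p_0$ we have $\rhop(q)=\rhop(p_0)$. Applying \cref{lemma:sparseblockenc} to $H'/\gamma$ I get an $(\alpha,N+3,\gamma\varepsilon_0)$-block encoding $U_0$ of $H'$ using $\poly{N,\log{\epsilon^{-1}}}$ gates and space with an efficiently computable description; feeding $U_0$ and $q$ into \cref{lemma:polyblock} with $\delta\coloneqq\epsilon'/2$ and $\varepsilon_0\coloneqq\bigtheta{\alpha(\epsilon')^2/(D^2\gamma)}$ --- chosen so that $4D\sqrt{\gamma\varepsilon_0/\alpha}\le\epsilon'/2$, and note $\log{\varepsilon_0^{-1}}=\poly{N,\log{\epsilon^{-1}}}$ --- produces a $(1,N+5,\epsilon')$-block encoding $\tilde U$ of $q(H'/\alpha)$, equivalently an $(\alpha,N+5,\epsilon')$-block encoding of $\pgauss{w/\alpha}{\eta}(H'/\alpha)$, again with $\poly{N,\log{\epsilon^{-1}}}$ gates, space, and classical preprocessing.

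\emph{Preparation, postselection, and runtime.} I would prepare the maximally entangled state $\ket{\psi}=\tfrac{1}{\sqrt{2^N}}\sum_{x}\ket{x}_a\ket{x}_b$ ($N$ Bell pairs, $\poly{N}$ space), apply $\tilde U$ to register $a$ together with its ancillas, and postselect on those ancillas reading $\ket{0^{N+5}}$. This is exactly the hypothesis of \cref{lem:approx_poly_block_encoding}, so conditioned on success the reduced state $\rho$ on $a$ obeys $\norm{\rho-\rhoMC(H',0)}_1\le\sqrt{8\epsilon'}+\epsilon_p\le\epsilon$, and $\rhoMC(H',0)=\rhoMC(H,E)$. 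For the runtime, the success probability equals $\norm{\big(\bra{0^{N+5}}\tilde U\ket{0^{N+5}}\big)\ket{\psi}}^2=\tfrac{1}{\alpha^2 2^N}\sum_i\big(\pgauss{w/\alpha}{\eta}(\lambda_i/\alpha)\big)^2+\bigO{\epsilon'}=\tfrac{1}{\alpha^2 2^N}\sum_i e^{-\pi\lambda_i^2/w^2}+\bigO{\epsilon'+\eta}$. The crucial point --- and the main obstacle --- is that \emph{every} eigenvalue $\lambda_i$ of $H'$ satisfies $|\lambda_i|\le\norm{H'}\le\norm{H}+|E|$, so each of the $2^N$ Gaussian weights is at least $e^{-\pi(\norm{H}+|E|)^2/w^2}$, giving $\tfrac{1}{2^N}\sum_i e^{-\pi\lambda_i^2/w^2}\ge e^{-\pi(\norm{H}+|E|)^2/w^2}$ and hence (with $\epsilon',\eta$ as fixed above) a success probability of $\bigomega*{\poly{N}^{-1}\,e^{-\pi((\norm{H}+|E|)/w)^2}}$. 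Repeat-until-success therefore terminates in $\poly{N,\log{\epsilon^{-1}}}\cdot\exp\!\big(\bigO{((\norm{H}+|E|)/w)^2}\big)$ expected time while reusing $\poly{N,\log{\epsilon^{-1}}}$ space across attempts (each attempt has polynomial depth --- there is no phase estimation here), which is the claimed bound. The remaining work is routine bookkeeping: tracking the interacting precisions $\eta,\varepsilon_0,\delta$ and the fidelity-to-trace-distance conversions inside \cref{lem:approx_poly_block_encoding} so that all errors compound to $\le\epsilon$ while every intermediate precision stays inverse-polynomial in the input size.
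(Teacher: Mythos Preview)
Your overall approach matches the paper's: shift to $H'=H-EI$, build a block encoding of a Gaussian polynomial of $H'/\alpha$ via \cref{lemma:sparseblockenc} and \cref{lemma:polyblock}, apply it to one half of a maximally entangled pair, postselect on the ancillas, and invoke \cref{lem:approx_poly_block_encoding}. The structure is correct.

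The genuine gap is your choice $\epsilon'=\epsilon^2/128$ and your closing claim that ``every intermediate precision stays inverse-polynomial in the input size.'' With that constant $\epsilon'$, your postselection analysis fails. You write the success probability as
\[
\frac{1}{\alpha^2 2^N}\sum_i e^{-\pi\lambda_i^2/w^2}+\bigO{\epsilon'+\eta}
\]
and then assert it is $\bigomega{\poly{N}^{-1}\,e^{-\pi((\norm{H}+|E|)/w)^2}}$. But the main term can be as small as $\alpha^{-2}e^{-\pi((\norm{H}+|E|)/w)^2}$, which for generic $w$ is exponentially small in $(\alpha/w)^2$, while your $\epsilon'$ is a fixed constant: the $\bigO{\epsilon'}$ correction swamps the main term and the lower bound becomes vacuous (indeed, the normalized state $\ket{\phi}=M\ket{\psi}/\norm{M\ket{\psi}}$ is not even controlled when $\norm{M\ket{\psi}}\lesssim\epsilon'$). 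The paper handles exactly this by introducing a second condition, $\norm{M\ket{\psi}}=\bigomega{e^{-\pi\alpha^2/(2w^2)}}$, and then choosing $\delta$, $\epsilon_H$, $\eta$ each to be at most $\tfrac14 e^{-\pi\alpha^2/(2w^2)}$ (e.g.\ $\delta<\epsilon^2 e^{-\pi\alpha^2/(2w^2)}/32$). So the intermediate precisions are \emph{not} inverse-polynomial; they are exponentially small in $(\alpha/w)^2$. The reason space stays polynomial is that only $\log{1/\delta}$ and $\log{1/\epsilon_H}$ enter the gate and ancilla counts, and these are $\bigO{(\alpha/w)^2+\log{1/\epsilon}}$. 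Your ``routine bookkeeping'' remark hides precisely the coupling between block-encoding precision and postselection probability that makes the argument work.

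A minor point: a $(1,N+5,\epsilon')$-block encoding of $q=p_0/\alpha$ is an $(\alpha,N+5,\alpha\epsilon')$-block encoding of $p_0(H'/\alpha)$, not $(\alpha,N+5,\epsilon')$ as you write; once $\epsilon'$ is set correctly this extra $\alpha=\poly{N}$ is harmless, but it is another place where the precision must be tightened beyond a fixed constant.
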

\begin{proof}
Our algorithm is simple. Starting in the initial state $\ket{\psi} \coloneqq \frac{1}{\sqrt{2^s}} \sum_i  \ket{\lambda_i}_a\ket{\lambda_i}_b$, we apply a block encoding such that the resultant state, after postselecting on the correct block encoding being performed and tracing out the subsystem $b$, is a good approximation of the microcanonical state. It remains to be shown that such a block encoding can be constructed, and to analyze the approximation error as well as the time and space requirements.

First, we show how the block encoding is constructed. Since each term of $H$ is $k$-local and there are $\bigO{\poly{n}}$ terms in $H$, $H' \coloneqq H-EI$ is $\bigO{2^k \poly{n}}$-sparse. We can select $\alpha \geq \norm{H} + |E|$ so that the eigenvalues of $H'/\alpha$ lie in $[-1,1]$. $\alpha = \poly{n}$, so by \cref{lemma:sparseblockenc}, $H'$ has an $(\poly{n}, \poly{n}, \epsilon_H)$-block-encoding that can be implemented by a circuit with $\bigO{\poly{n},\poly{\log{\frac{1}{\epsilon_H}}}}$ gates and $\bigO{\poly{n},\poly{\log{\frac{1}{\epsilon_H}}}}$ space.

We make use of the polynomial approximation for a Gaussian function $\pgauss{w/\alpha}{\eta}$ (\cref{lemma:polyapprox_gaussian}) with degree $\deg(\pgauss{w/\alpha}{\eta})$ which we will determine later. Along with \cref{lemma:polyblock}, this implies there is a circuit $\tilde{U}_\delta$ with $\bigO{\poly{n},\poly{\log{\frac{1}{\epsilon_H}}}}$ gates that implements a ${(1,\poly{n}, 4 \deg(\pgauss{w/\alpha}{\eta}) \sqrt{\epsilon_H/\alpha} + \delta)}$ block-encoding of $\pgauss{w/\alpha}{\eta}(H'/\alpha)$, and has a classical description computable in time $\bigO{\poly{n, \deg(\pgauss{w/\alpha}{\eta}),\log{1/\delta}}}$.

The above construction makes use of parameters $\eta, \epsilon_H, \delta$. We would like to choose these parameters such that we approximate the target state well, and the postselection probability is not too small.
For the first condition,
we apply \cref{lem:approx_poly_block_encoding} with $\epsilon' = 4 \deg(\pgauss{w/\alpha}{\eta}) \sqrt{\epsilon_H/\alpha} + \delta$, we have
    \begin{equation}
        \norm{\rho' - \rhoMC(H,0)}_1 \leq  \left(\sqrt{32 \deg(\pgauss{w/\alpha}{\eta}) \sqrt{\epsilon_H/\alpha} + 8\delta}\right)  + \epsilon_p.
    \end{equation}
    We will solve for parameters such that the following condition is fulfilled:
    \begin{equation}
    \label{eq:condition1}
        \norm{\rho' - \rhoMC}_1 \leq \epsilon.
    \end{equation}

For our second condition, we require that our algorithm doesn't take too long. The time taken is determined by the probability of correctly postselecting, which we now estimate. Let $M \coloneqq \bra{0^{n_c}}_c \tilde{U}_\delta \ket{0^{n_c}}_c$. The probability of postselecting to the correct state is $\norm{M\ket{\psi}}^2$. 
Observe that 
    \begin{align}
        \norm{M\ket{\psi}} &\geq \norm{\pgauss{w/\alpha}{\eta}(H'/\alpha)\ket{\psi}} - \norm{(M - \pgauss{w/\alpha}{\eta}(H'/\alpha))\ket{\psi}} \nonumber\\
        &\geq \norm{\pgauss{w/\alpha}{\eta}(H'/\alpha)\ket{\psi}} - 4 \deg(\pgauss{w/\alpha}{\eta})\sqrt{\epsilon_H/\alpha} - \delta \nonumber\\
        &\geq e^{-\pi \alpha^2/(2w^2)} - \eta - 4 \deg(\pgauss{w/\alpha}{\eta})\sqrt{\epsilon_H/\alpha} - \delta.
    \end{align}
    In order for the probability of correct postselection to be lower bounded, require that the probability of correct postselection remains of the order of its leading term, i.e.,
    \begin{equation}
    \label{eq:condition2}
        \norm{M\ket{\psi}} = \bigomega{\exp(-\pi \alpha^2/(2w^2))}.
    \end{equation}

    We now solve for $\eta, \epsilon_H, \delta$ such that \cref{eq:condition1} and \cref{eq:condition2} are fulfilled. 
    
    First, to partially satisfy \cref{eq:condition1}, we select $\epsilon_p = \epsilon/2$. This implies $\eta \leq \frac{\epsilon e^{-\pi \alpha^2/w^2}}{18}$. 
    \Cref{eq:condition1} is then satisfied if $\sqrt{8\delta} < \epsilon/4$ and $\sqrt{32\deg(\pgauss{w/\alpha}{\eta})\sqrt{\epsilon_H/\alpha}}< \epsilon/4$.
    
    Now, in order to satisfy \cref{eq:condition2}, we impose the constraint that $\eta, 4 \deg(\pgauss{w/\alpha}{\eta})\sqrt{\epsilon_H/\alpha}, \delta$ are each less than $\frac{1}{4} e^{-\pi \alpha^2/(2w^2)}$. By our choice of $\epsilon_p$, we have already obtained a sufficient condition for $\eta$.

    To simultaneously fulfill $\sqrt{8\delta} < \epsilon/4$ and $\delta < \frac{1}{4} e^{-\pi \alpha^2/(2w^2)}$, we select $\delta < \epsilon^2 e^{-\pi \alpha^2/(2w^2)} / 32$.

    We make use of the fact that $\deg(\pgauss{w/\alpha}{\eta}) = \gamma \left(\frac{\alpha}{w}(\log{2/\epsilon} + \pi \alpha^2/w^2) \right)$ for some constant $\gamma$. 
    To simultaneously fulfill $\sqrt{32\deg(\pgauss{w/\alpha}{\eta})\sqrt{\epsilon_H/\alpha}} < \epsilon/4$ and $4 \deg(\pgauss{w/\alpha}{\eta})\sqrt{\epsilon_H/\alpha} < \frac{1}{4} e^{-\pi \alpha^2/(2w^2)}$, we select
    \begin{equation}
        \epsilon_H \leq \alpha \left(\frac{\epsilon e^{-\pi \alpha^2/(2w^2)}}{128\gamma \left((\alpha/w)(\log{2/\epsilon} + \pi \alpha^2/w^2) \right)} \right)^4.
    \end{equation}

    With the chosen values of $\eta, \epsilon_H, \delta$, \cref{eq:condition1} and \cref{eq:condition2} are satisfied. We now analyze the time and space requirements of the algorithm. The probability of correct postselection is $\norm{M\ket{\psi}}^2$. Hence, in order to postselect onto the target state $\rho'$, we require an expected $\bigO{e^{\pi\alpha^2/w^2}}$ repetitions of the block encoding circuit. We do not need any additional space despite an expected total exponential depth since we are repeating the same circuit until correct postselection.
    Further, the depth of the block-encoding circuit as well as the classical computation time to obtain its description remain polynomial since $\log{\frac{1}{\epsilon_H}}, \log{\frac{1}{\delta}} = \bigO{\log{1/\epsilon}(\alpha/w)^2}$.

    The depth of the block-encoding circuit scales as
    \begin{equation}
        \bigO{\left(\frac{\alpha}{w}\right) \log{1/\eta} \times \poly{\log{1/\epsilon_H}}} = \bigO{\poly{N, \log{1/\epsilon}, 1/w}}.
    \end{equation}
    
    Therefore, to prepare $\rho$ we require $\bigO{e^{\pi\alpha^2/w^2}}$ expected executions of a $\bigO{\poly{N, \log{1/\epsilon}, 1/w}}$ depth circuit, which can be performed in $\bigO{\poly{N, \log{1/\epsilon}, 1/w}}$ space.
\end{proof}
This algorithm allows us to estimate observables of the microcanonical state.
\begin{corollary}
    \label{cor:microcanonical_observable}
    The algorithm in \cref{lem:rhoMC_sampling} can be used to estimate $\widehat{\AN}$ such that $|\widehat{\AN} - \tr{\AN \rhoMC}| \leq \epsilon$ using $\bigO{\poly{N, \log{1/\epsilon}}}$ space and $\bigO{\poly{N,\log{1/\epsilon}}\exp(((\norm{H} + |E|)/w)^2)}$ expected time.
\end{corollary}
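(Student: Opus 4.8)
The plan is to run the state-preparation subroutine of \cref{lem:rhoMC_sampling} to obtain independent copies of a state $\rho$ with $\norm{\rho-\rhoMC(H,E)}_1\leq\epsilon_\rho$, to measure $\AN$ on each copy, and to output the empirical mean of the outcomes. Three error sources need to be controlled: the trace-norm error $\epsilon_\rho$ of the prepared state, the statistical error from finitely many copies, and the per-copy postselection overhead (the last already bounded inside \cref{lem:rhoMC_sampling}).

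First I would bound the effect of the state-approximation error. Since $\AN=\frac1N\sum_i A_i$ with each $A_i$ a fixed single-qudit observable, $\norm{\AN}\leq\norm{A}=\bigO{1}$, so by Hölder's inequality for Schatten norms (equivalently, the variational characterization of trace distance),
\begin{equation}
    \left\vert\tr{\AN\rho}-\tr{\AN\rhoMC}\right\vert\leq\norm{\AN}\,\norm{\rho-\rhoMC}_1=\bigO{\epsilon_\rho}.
\end{equation}
Hence it is enough to call \cref{lem:rhoMC_sampling} with $\epsilon_\rho=\bigtheta{\epsilon}$, which by that lemma costs $\bigO{\poly{N,\log{1/\epsilon}}}$ space and $\bigO{\poly{N,\log{1/\epsilon}}\exp(((\norm{H}+|E|)/w)^2)}$ expected time per copy, the exponential factor being the expected number of postselection attempts.

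Next I would handle the measurement and concentration. Because the $A_i$ act on distinct qudits they mutually commute, so a single measurement of $\rho$ in the common eigenbasis of $\{A_i\}$ — implemented by a depth-$\bigO{1}$ layer of single-qudit unitaries, needing no extra workspace — returns an outcome in $[-\norm{A},\norm{A}]$ whose expectation is $\tr{\AN\rho}$. Averaging the outcomes of $\bigO{\epsilon^{-2}}$ such prepare-and-measure rounds and invoking Hoeffding's inequality gives an estimate $\widehat{\AN}$ with $\lvert\widehat{\AN}-\tr{\AN\rho}\rvert\leq\bigtheta{\epsilon}$ with probability at least $2/3$; combining this with the displayed bound and fixing the hidden constants yields $\lvert\widehat{\AN}-\tr{\AN\rhoMC}\rvert\leq\epsilon$.

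Finally I would account for resources. Every round reuses the same $\bigO{\poly{N,\log{1/\epsilon}}}$-space workspace from \cref{lem:rhoMC_sampling}, and only a running sum and a counter (each $\bigO{\poly{N,\log{1/\epsilon}}}$ bits) persist between rounds, so the space bound follows immediately. The expected time is the per-copy time of \cref{lem:rhoMC_sampling} times the $\bigO{\epsilon^{-2}}$ rounds, i.e.\ $\bigO{\poly{N,\log{1/\epsilon}}\exp(((\norm{H}+|E|)/w)^2)}$ once the sampling factor is absorbed (it is $\bigO{1}$ in the constant-$\epsilon$ regime relevant to the decision problems, or an explicit $\poly{1/\epsilon}$ factor otherwise). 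I do not anticipate a genuine obstacle here; the only delicate points are that \cref{lem:rhoMC_sampling} already delivers \emph{trace-norm} closeness — exactly the metric controlling the observable error — and that its postselection success probability stays $\bigomega{\exp(-\pi\alpha^2/w^2)}$, so the expected per-round time does not degrade.
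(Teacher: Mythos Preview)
Your proposal is correct and follows essentially the same approach as the paper: prepare $\rho$ via \cref{lem:rhoMC_sampling}, bound $\lvert\tr{\AN\rho}-\tr{\AN\rhoMC}\rvert$ by H\"older's inequality using $\norm{\AN}=\bigO{1}$, exploit commutativity of the single-site $A_i$ to measure $\AN$ on each copy, and apply Hoeffding over $\bigO{\epsilon^{-2}}$ rounds. Your resource accounting is in fact slightly more explicit than the paper's (you note that only a running sum and counter persist between rounds, and you flag the $\poly{1/\epsilon}$ sampling factor that the stated time bound quietly absorbs into the constant-$\epsilon$ regime).
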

\begin{proof}
    Note that by the triangle inequality, $\norm{A_N} \leq \max_i \norm{A_i} = c$ for some constant $c$. Let $\rho$ be the state as prepared in \cref{lem:rhoMC_sampling}, to precision $\epsilon/c$.
    Applying Holder's inequality,
     \begin{align}
        \vert\tr{\AN\rho - \AN\rhoMC(H,E)}\vert \nonumber&\leq \Vert \AN(\rho-\rhoMC(H,E))\Vert_1\nonumber\\
        &\leq \Vert\rho-\rhoMC(H,E)\Vert_1\Vert \AN\Vert \nonumber\\
        &\leq \epsilon.
    \end{align}
    Hence by preparing $\rho$ and measuring $\AN$, we can estimate $\tr{\AN \rhoMC}$ to arbitrary constant precision. $\AN$ is a sum of $N$ single site (and hence commuting) observables. A single measurement of $\AN$ can be performed by preparing a sample of $\rho$ and measuring each of the $A_i$ observables sequentially, adding them, and dividing by $N$. By Hoeffding's inequality, to estimate $\tr{\AN \rhoMC}$ to precision $\epsilon$ with probability at least $2/3$ requires $\bigO{1/\epsilon^2}$ samples. Hence there is a a probabilistic polynomial space quantum algorithm to estimate $\tr{\AN \rhoMC}$ to constant precision. We note however, that this algorithm requires postselection, which may only succeed with inverse-exponentially small probability when $w = \bigO{\poly{N}}$.
\end{proof}
Finally, we can use this algorithm to prove one of our main results.
\begin{theorem}
    \label{thm:fstherm_pspace_containment}
    $\FSTherm \in \PSPACE$. 
\end{theorem}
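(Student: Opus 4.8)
The statement follows by assembling the two polynomial-space quantum subroutines already established, together with the identity $\PrQSPACE = \PSPACE$. The plan is: compute the target energy $E$ classically; estimate $\barAN$ and $\tr{\AN\rhoMC(E)}$ separately in polynomial space; compare the two estimates against the promised gap; and absorb all postselection exactly as in \cref{lem:fsrelax_pspace_containment}.

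First I would note that, given the input product state $\ket{\psi_0}$, the energy $E = \bra{\psi_0}H\ket{\psi_0}$ can be computed to inverse-exponential precision (which is more than enough) in $\poly{N}$ classical time and space: $H$ is a sum of $\poly{N}$ terms of bounded norm acting on $\bigO{1}$ qudits, and $\ket{\psi_0}$ is a product state, so $E = \sum_i \bra{\psi_0} h_i \ket{\psi_0}$ is a sum of $\poly{N}$ expectation values of $\bigO{1}$-qudit operators in a product state. Fix the instance constants $c > 1$ and $\epsilon > 0$ and set $\epsilon' \coloneqq (c-1)\epsilon/8$. The algorithm then runs two subroutines, each to additive precision $\epsilon'$ and each amplified --- by increasing the number of measurements, at only $\bigO{\poly{N}}$ additional space --- to succeed with probability at least $11/12$ conditioned on its postselections: the algorithm of \cref{lem:long_time_qpoly}, returning an estimate $\hat\mu$ of $\barAN$ (this is the step that uses the $\FSTherm$ promise that $\ket{\psi_0}$ is supported on eigenstates with an inverse-exponential energy gap, $m = \poly{N}$), and the algorithm of \cref{cor:microcanonical_observable} applied to $H$ and the computed $E$, returning an estimate $\hat\nu$ of $\tr{\AN\rhoMC(E)}$. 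Both use $\poly{N,\log\epsilon^{-1}}$ space. The algorithm outputs YES if $|\hat\mu - \hat\nu| \le \epsilon + 2\epsilon'$ and NO otherwise. Conditioned on all postselections succeeding and both estimates being accurate (probability at least $5/6$ given success), $|\hat\mu - \hat\nu|$ lies within $2\epsilon'$ of $|\barAN - \tr{\AN\rhoMC(E)}|$, and since $c>1$ one checks that $\epsilon + 2\epsilon'$ falls strictly between $\epsilon + 2\epsilon'$ (the YES ceiling) and $c\epsilon - 2\epsilon'$ (the NO floor), so the threshold test is correct.

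Finally, postselection is handled as in \cref{lem:fsrelax_pspace_containment}: whenever any block-encoding postselection fails, the algorithm outputs a uniformly random bit. Writing $p > 0$ for the combined postselection success probability and $q$ for the conditional accuracy probability, the overall acceptance probability is $1/2 + p(q - 1/2)$, which is strictly above $1/2$ on YES instances ($q \ge 5/6$) and strictly below $1/2$ on NO instances ($q \le 1/6$); hence $\FSTherm \in \PrQSPACE = \PSPACE$. The one point requiring care --- and the part I expect to be the main obstacle to state cleanly --- is that, as noted after \cref{cor:microcanonical_observable}, the postselection probability for preparing $\rhoMC(E)$ is inverse-exponentially small when $w = \poly{N}$, the leading factor being $e^{-\pi\alpha^2/(2w^2)}$ with $\alpha = \bigtheta{\norm{H} + |E|} = \poly{N}$. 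This is harmless: $p$ stays strictly positive, which is all the $\PrQSPACE$ argument needs, and failed attempts are never retried within the machine, so the space bound is untouched.
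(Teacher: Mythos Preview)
Your proof is correct and follows essentially the same approach as the paper: estimate $\barAN$ via \cref{lem:long_time_qpoly}, estimate $\tr{\AN\rhoMC(E)}$ via \cref{cor:microcanonical_observable}, compare against a suitable threshold, handle failed postselections by outputting a uniformly random bit, and invoke $\PrQSPACE = \PSPACE$. You supply a bit more detail than the paper (the explicit classical computation of $E$, the choice of $\epsilon'$, and the acceptance-probability calculation), but the argument is the same; your sentence ``$\epsilon + 2\epsilon'$ falls strictly between $\epsilon + 2\epsilon'$ \ldots'' has a minor slip in wording, though the intended inequality $\epsilon + 2\epsilon' < c\epsilon - 2\epsilon'$ is clear and holds with your choice of $\epsilon'$.
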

\begin{proof}
Using the algorithm from \cref{cor:microcanonical_observable} along with the probabilistic polynomial space quantum algorithm to estimate $\barAN$ to constant precision \cref{lem:long_time_qpoly}, when $w = \bigomega{\frac{1}{\poly{N}}}$, $\FSTherm$ can be decided by a polynomial-space quantum algorithm that simply compares the two values, if our algorithm postselects correctly, which occurs with non-zero probability. On the other hand, in the event our algorithm does not postselect correctly, we can Accept or Reject with uniform probability. Hence, there is a quantum algorithm that such that for YES instances of $\FSTherm$, we accept with probability strictly greater than $1/2$, while for NO instances, we accept with probability strictly less than $1/2$. Hence, $\FSTherm \in \PrQSPACE$. 
    By \textcite{Watrous_1999}, $\PrQSPACE \subseteq \PSPACE$. Therefore, $\FSTherm \in \PSPACE$. 
\end{proof}

Another statistical ensemble of interest is the Gibbs ensemble $\rhog = e^{-\beta H}/\tr{e^{-\beta H}}$, representing the state of a system in contact with a bath of temperature $1/\beta$. An alternative to $\FSTherm$ is the problem of deciding whether a system thermalizes to the \textit{Gibbs} ensemble, i.e., whether the long-time average of an observable converges to the value corresponding to the Gibbs ensemble. Here, we note that the Gibbs state can be prepared probabilistically in polynomial quantum space, and hence this decision problem is also contained in $\PSPACE$. 
\begin{lemma}
    \label{thm:fstherm_gibbs_containment}
    There is a quantum algorithm using $\bigO{\poly{N ,\log{1/\epsilon}}}$ space and $\bigO{\exp(N)}$ time that computes $\widetilde{\AN}$ such that
    \begin{equation}
        \lvert \tr{\AN \rhog} - \widetilde{\AN} \rvert \leq \epsilon.
    \end{equation}
    Hence $\FSThermG \in \PSPACE$
\end{lemma}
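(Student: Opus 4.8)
The strategy mirrors the microcanonical construction of \cref{lem:rhoMC_sampling} and \cref{cor:microcanonical_observable}, with one extra layer. Since the inverse temperature $\beta$ is not given but is implicitly fixed by the energy constraint $\tr{H\rhog} = E$ with $E = \bra{\psi_0}H\ket{\psi_0}$, the plan is: (i) build a polynomial-space subroutine that, given a candidate $\beta'$ and an observable $O$ which is a sum of single-site terms, estimates $\tr{O\,e^{-\beta' H}}/\tr{e^{-\beta' H}}$; (ii) run this subroutine with $O = H$ inside a binary search to locate $\beta$; and (iii) run it once more with $O = \AN$ at the located value.

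For step (i) I would reuse the purified-state construction of \cref{prop:poly_blockencoding}. Shift by a computable bound on the spectrum, $H' := H + \norm{H}I$, so that the eigenvalues of $H'/\alpha$ with $\alpha := 2\norm{H} = \bigO{\poly{N}}$ lie in $[0,1]$; start from the maximally entangled state $\ket{\psi} = 2^{-N/2}\sum_i \ket{\lambda_i}_a\ket{\lambda_i}_b$, and apply to register $a$ a block encoding of $e^{-\beta' H'/2} = e^{-(\beta'\norm{H})(H'/\alpha)}$. Such a block encoding follows from \cref{lemma:sparseblockenc} and \cref{lemma:polyblock} together with a polynomial approximation of $x\mapsto e^{-cx}$ on $[0,1]$ with $c = \beta'\norm{H}$, of degree $\bigO{\sqrt{c\log{\epsilon'^{-1}}}}$ when $c$ is polynomially bounded (constructed as in the proof of \cref{lemma:polyapprox_gaussian}). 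When this degree would be super-polynomial, i.e.\ $\beta'$ is exponentially large, I would instead write $e^{-\beta' H'/2} = \bigl(e^{-\beta' H'/(2k)}\bigr)^{k}$, choose $k = \bigtheta{\beta'\norm{H}}$ so each factor needs only a degree-$\bigO{\log{k/\epsilon'}}$ polynomial, and apply the corresponding block-encoding circuit $k$ times in sequence, \emph{reusing the same ancilla register} and postselecting it onto $\ket{0}$ after each application; this keeps the ancilla count—and hence the space—polynomial, at the cost of a possibly exponential running time and exponentially small postselection probability. Postselecting on the block-encoding ancilla being $\ket{0}$ and tracing out register $b$ then yields, by the arguments of \cref{prop:poly_blockencoding} and \cref{lem:approx_poly_block_encoding}, a state $\epsilon'$-close in trace norm to $\rhog(\beta') = e^{-\beta' H}/\tr{e^{-\beta' H}}$ (the overall scalar $e^{-\beta'\norm{H}}$ cancels in the normalization). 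Measuring the commuting single-site terms of $O$, averaging, and repeating $\bigO{\epsilon'^{-2}}$ times (Hoeffding) estimates $\tr{O\rhog(\beta')}$ to precision $\bigO{\epsilon'}$, all in $\bigO{\poly{N,\log{\epsilon'^{-1}}}}$ space.

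For step (ii), $g(\beta') := \tr{H\rhog(\beta')}$ is continuous and monotonically non-increasing in $\beta'$ (its derivative is $-\mathrm{Var}_{\rhog(\beta')}(H)\le 0$), and its range has closure $[E_{\min},E_{\max}]$ ($E_{\min},E_{\max}$ the extreme eigenvalues of $H$), which contains $E$; hence $g(\beta')=E$ has a unique solution $\beta$, which I would approximate by binary search over an interval $[-B,B]$ using the step-(i) subroutine with $O=H$ as an approximate comparison oracle. The Lipschitz bound $\norm{\rhog(\beta_1)-\rhog(\beta_2)}_1 \le \int_{\beta_1}^{\beta_2}\sqrt{-g'(\beta)}\,d\beta \le \sqrt{|\beta_1-\beta_2|\,(E_{\max}-E_{\min})}$ (Cauchy–Schwarz) shows it suffices to pin $\beta$ down to additive accuracy $\bigtheta{\epsilon^2/\poly{N}}$; then, using $\norm{\AN}=\bigO{1}$, the output of step (iii) at the located $\beta'$ satisfies $|\tr{\AN\rhog(\beta')}-\tr{\AN\rhog}|\le\norm{\AN}\norm{\rhog(\beta')-\rhog}_1\le\epsilon/2$, and combining with an $\epsilon/2$-accurate estimate of $\tr{\AN\rhog(\beta')}$ gives the claimed bound. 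All steps run in polynomial space, so this is a postselected polynomial-space quantum algorithm, placing $\FSThermG$ in $\PrQSPACE=\PSPACE$.

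The main obstacle is step (ii): bounding how many bits of $\beta$ are needed (equivalently, how large $B$ must be and how fine the search), and handling the regime of extreme $\beta$, where the postselection probability in step (i) becomes exponentially small and the approximating polynomial's degree exponentially large. The key point—and the reason the lemma only promises $\bigO{\exp(N)}$ \emph{time} while insisting on $\bigO{\poly{N,\log{\epsilon^{-1}}}}$ space—is that neither issue affects the space bound: the block-encoding ancillas are reused across the $k$ factors, and an exponentially small acceptance probability costs only (exponential) time. Thus the construction still yields a $\PrQSPACE$, hence $\PSPACE$, algorithm, exactly in the spirit of the analogous caveat for the microcanonical case with $w=\bigomega{1/\poly{N}}$.
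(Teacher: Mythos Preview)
Your argument is correct and is considerably more thorough than the paper's own proof. The paper gives essentially a two-line argument: it invokes off-the-shelf Gibbs-state preparation (a block encoding of $e^{-\beta H/2}$ with $\bigO{\sqrt{d^N/Z}}$ amplitude-amplification rounds, citing existing literature), observes $Z \ge d^N e^{-\beta\norm{H}}$, and concludes that $\bigO{\exp(N)}$ repetitions of a polynomial-depth circuit suffice. Crucially, the paper treats $\beta$ as if it were handed to the algorithm and polynomially bounded; it never addresses how to recover $\beta$ from the energy constraint $\tr{H\rhog}=E$.

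Your binary search for $\beta$, the Lipschitz estimate $\norm{\rhog(\beta_1)-\rhog(\beta_2)}_1 \le \sqrt{|\beta_1-\beta_2|(E_{\max}-E_{\min})}$ (which is correct: $\norm{\partial_\beta\rhog}_1 \le \sqrt{\mathrm{Var}_{\rhog}(H)}$ followed by Cauchy--Schwarz on the $\beta$-integral), and the factoring $e^{-\beta'H'/2}=(e^{-\beta'H'/(2k)})^k$ with ancilla reuse are all genuine additions that make the argument self-contained rather than an appeal to black-box Gibbs samplers. The one loose end you flag---an a priori bound $B$ on $|\beta|$ so that the $\bigO{\exp(N)}$ time claim actually holds---is equally unaddressed in the paper (its assertion $Z=\bigomega{e^{-\poly{N}}}$ tacitly assumes $\beta\norm{H}=\poly{N}$), and as you correctly note it does not threaten the $\PSPACE$ containment, since only the space bound matters for $\PrQSPACE=\PSPACE$.
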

\begin{proof}
    The Gibbs state $\rhog = \frac{e^{-\beta H}}{\tr{e^{-\beta H}}}$ can be $\epsilon$-approximated with a block-encoding of circuit depth $\bigO{\sqrt{\beta}\log{1/\epsilon}}$ and $\bigO{\sqrt{d^N/Z}}$ amplitude amplification steps \cite{gilyen2018qsvt,Apeldoorn_Gilyén_Gribling_Wolf_2020, Chowdhury_Somma_2016}. 
    Note that $Z = \tr{e^{-\beta H}} \geq d^N e^{-\beta \norm{H}} = \bigomega{1/e^{\poly{N}}}$, so estimating $\tr{\rhog \AN}$ to constant precision only requires $\bigO{\exp(N)}$ repetitions of a $\bigO{\poly{N}}$-depth circuit, which can be done in polynomial quantum space.
    Since $\PrQSPACE \subseteq \PSPACE$, $\FSThermG \in \PSPACE$.
\end{proof}

 \section{Hardness of \texorpdfstring{$\FSTherm$ and $\FSThermG$}{FSTherm(MC) and FSTherm(Gibbs)}}\label{sec:FSThermHardness}

In this subsection we will show that $\FSTherm$ is as hard as $\FSHalt$  under \textit{quantum} polynomial time reductions. We prove this by showing that given an instance of $\FSHalt$, we may use a poynomial time quantum algorithm to construct an instance of $\FSTherm$. We note that since we make use of a quantum polynomial time reduction rather than a classical one, our reduction differs from the standard formal complexity theoretic reduction.
We modify the $\FSRelax$ hardness construction such that we have $A^*=\tr{\rhoMC \AN},$ thus showing that there exist observables such that determining whether the system thermalizes is as hard as a $\PSPACE$-complete problem. We describe a high-level overview of our approach:
\begin{enumerate}
    \item We pad the end of the Turing machine even more in the case of halting, so that when it halts, $\barAN=(1+\bigO{1/N})(1-\alpha)\bra{a_2}A\ket{a_2}.$ This is covered in \cref{sec:boostingAN}.
    \item We double the size of the local Hilbert space, and modify $A_i$ by introducing parameters such that the two halves of the Hilbert space have different, tunable values of $\langle A_i\rangle$. We then modify the Hamiltonian so that, in the microcanonical state, both halves of the Hilbert space are represented equally. This construction allows us to tune the microcanonical expectation of $\AN$. This is covered in \cref{sec:changingquantum}.
    \item Finally, we show that a polynomial time quantum algorithm can be used to tune these parameters such that in the case of halting, $\barAN $ is equal to the microcanonical average of $\AN,$ and both are separated from $0,$ the non-halting value of $\AN.$ This is covered in \cref{sec:mc_qpoly_reduction}.
\end{enumerate}
Finally, in \cref{sec:fstherm_gibbs_hardness}, we prove that $\FSThermG$ is $\PSPACE$-hard (and hence $\PSPACE$-complete due to \cref{thm:fstherm_gibbs_containment}).
\subsection{Modifying the TM}\label{sec:boostingAN}
We modify the Turing Machine construction such that the values of $\barAN$ on halting inputs and on non-halting inputs are sufficiently separated, so that we may perform our tuning procedure in the following sections to reduce $\FSRelax$ to $\FSTherm$.
We first show how to asymptotically increase $\barAN$ in the case of halting to $(1-\alpha)\bra{a_2}A\ket{a_2}$, which we require for technical reasons in our tuning construction in \cref{sec:changingquantum}. 
As discussed in \cref{Sec:Turing_Machine_Description}, TM1 performs the actual computation of a URTM, TM2 flips A cells if TM1 halts, and TM3 runs TM1 in reverse. Since in the case of halting, the A cells are flipped  by TM2 and remain flipped for the duration of TM3, the long-time average of $\AN$ is determined by the duration of TM2 and TM3. We can thus modify this average by changing the time spent in TM3.
We achieve this by modifying TM3 to first run a buffer which performs a full loop around the tape for each element of the tape (i.e., taking $N(N+1)$ steps). Then, we continue as before with TM3 running TM1 in reverse, but with an additional buffer that iterates the finite control around the entire loop until it returns to its initial location in between each (computational) step of TM3.
Since a full loop around the tape takes time $N$, we thus increase the time spent in the computational part of TM3 by a factor of $N+1$.

To ensure the correct TM action, we add to the tape alphabet of TM3 such that every symbol in the tape alphabet has a ``place-marker'' copy which will be what the TM initially writes to the tape. 
This copy lets the finite-control know when it has returned to its most recent write cell. 
To facilitate the traversal, we will also add a copy of every finite-control state, which the update rules will just shift through the ring until it hits the ``place-marker'' tape-symbols.
For the first buffer, we need to add another state of the head corresponding to the specific buffer, and two symbols to the tape alphabet to help keep track of the progress of the $N$ loops around the tape. Specifically we will initialize the two symbols, pushing one of the symbols in one direction after every loop, and only finishing when the symbols come in contact again. 
Thus, in the case of halting, this additional looping increases the total run-time to $(N+2)T_h + (N+1)N$. The first buffer ensures that this run-time is large even when $T_h < N$.

\begin{lemma}
	\label{lem:halting_separation_modified}
	With the above modified Turing Machine construction, If $y$ encodes a Turing Machine that either halts in time $T_h$ or never halts, then there exists $\delta \in (0,N^{-1}]$ such that
	\begin{align}
		\barAN&=\begin{cases}
			(1-\alpha)\left[1-\delta\right]\bra{a_2} A \ket{a_2} & \text{halting}\\
			0 & \text{non-halting}.
		\end{cases}
	\end{align}
\end{lemma}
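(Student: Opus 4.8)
The plan is to re-run the calculation of Lemma~\ref{lem:halting_separation} (equivalently, the Shiraishi--Matsumoto expression in \cref{Eq:Expectation_Value}) but with the new run-time $T = (N+2)T_h + (N+1)N$ in place of $T = 2T_h + N$, being careful to track which configurations $\ket{w_j}$ have flipped $A$-cells. The non-halting case is immediate and unchanged: if TM1 never halts, TM2 never triggers, so $\AN\ket{w_j}=0$ for all $j$, hence $\AN\ket{\lambda_j}=0$ for every eigenstate, and $\barAN = 0$. So the whole content is the halting case, and there I would isolate the quantity $S := \sum_{j=1}^{T-2}\bra{w_j}\AN\ket{w_j}$ together with the boundary terms $\bra{w_0}\AN\ket{w_0}=0$ and $\bra{w_{T-1}}\AN\ket{w_{T-1}} = (1-\alpha)\bra{a_2}A\ket{a_2}$, and the fact that the $j'=j\pm2$ cross-terms vanish (the $A$-cell occupation pattern changes by at most one cell per two substeps and $\ket{a_1},\ket{a_2}$ are orthogonal basis states, so $\bra{w_j}\AN\ket{w_{j\pm2}}=0$ exactly as before).

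Next I would partition the index range $1 \le j \le T-2$ according to which phase of $\mathcal M$ the configuration $\ket{w_j}$ lies in: (a) TM1 is running, contributing $0$ to $\AN$; (b) TM2 is running, during which the $A$-cells get flipped one at a time, contributing a transient of total weight $O(N)\cdot\frac{1}{N} = O(1)$; (c) the first buffer of TM3 (the $N(N+1)$-step loop counter) and the main reversed-computation phase of TM3 with its per-step full-loop buffer — during all of phase (c) every $A$-cell already carries $a_2$ (TM3 runs TM1 in reverse on the $M$-cells only, leaving $A$-cells untouched after the flip), so $\bra{w_j}\AN\ket{w_j} = (1-\alpha)\bra{a_2}A\ket{a_2}$ for every such $j$; and (d) the halting configuration at the very end. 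The number of configurations in phases (a)+(b) is $T_h + O(N)$ — call it $T_h + O(N)$ — while phases (c)+(d) account for the remaining $T - 2 - (T_h + O(N)) = T(1 - o(1))$ configurations. Plugging into \cref{Eq:Expectation_Value},
\begin{align}
\barAN &= \frac{3}{2(T+1)}\,(1-\alpha)\bra{a_2}A\ket{a_2}
 + \frac{(1-\alpha)\bra{a_2}A\ket{a_2}}{T+1}\bigl(T - 2 - T_h - O(N)\bigr) \nonumber\\
 &= (1-\alpha)\bra{a_2}A\ket{a_2}\cdot\frac{T - T_h - O(N) + 1/2}{T+1}.
\end{align}
Writing this as $(1-\alpha)[1-\delta]\bra{a_2}A\ket{a_2}$ defines $\delta := \frac{T_h + O(N) + 1/2}{T+1}$, which is manifestly positive, so $\delta > 0$.

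The remaining point — and the only one requiring the specifics of the modified construction rather than a verbatim copy of the earlier argument — is the upper bound $\delta \le N^{-1}$. Here I would use that $T + 1 \ge (N+2)T_h + (N+1)N$ and that the numerator of $\delta$ is $T_h + O(N)$: the first buffer of TM3, which forces $N(N+1)$ extra steps regardless of $T_h$, guarantees $T \ge N(N+1)$ even when $T_h$ is tiny, so the $O(N)$ term in the numerator is dominated by the $(N+1)N$ term in the denominator; and the $(N+2)T_h$ term in the denominator dominates the $T_h$ in the numerator by a factor of $N+2$. Combining these two regimes gives $\delta \le \frac{T_h + O(N)}{(N+2)T_h + (N+1)N} \le \frac{1}{N}$ for all sufficiently large $N$ (and for small $N$ one can absorb the constant into the $O(N)$ bookkeeping or simply note the statement is asymptotic). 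The main obstacle is purely bookkeeping: one must confirm that the transient phase-(b) contribution and the per-step loop-buffer configurations of phase (c) contribute exactly the ``$A$-cells all flipped'' value and not some partial value — i.e., that the buffer loops never pass through a configuration in which a genuine computational $A$-cell has been transiently unflipped — which follows from the design that TM3's reverse computation acts only on $M$-cells and the added place-marker/traversal machinery only rewrites auxiliary tape symbols, never the $a_1/a_2$ content. Once that invariant is checked, the displayed estimate and the bound $\delta\in(0,N^{-1}]$ follow.
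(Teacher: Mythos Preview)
Your approach is correct and essentially the same as the paper's. The paper is slightly more direct: rather than re-deriving the phase partition with $O(N)$ bookkeeping, it simply invokes \cref{eq:expvaluegeneral} (which already encapsulates the TM1/TM2/TM3 contributions with exact constants) and substitutes $T=(N+2)T_h+(N+1)N$, obtaining the explicit $\delta=\frac{T_h+N/2+3/2}{(N+2)T_h+(N+1)N+1}$, from which the bound $\delta\le N^{-1}$ follows for every $N\ge 1$ without your asymptotic caveat (and which also fixes the harmless $+1/2$ versus $-1/2$ slip in your displayed fraction).
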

\begin{proof}
	The proof proceeds along exactly the same lines as the proof of \cref{lem:halting_separation}.
Revisiting \cref{eq:expvaluegeneral}, with the modification that $T=(N+2)T_h + (N+1)N,$ we have that
\begin{align}
    \barAN &= (1-\alpha)\frac{T-(T_h+N/2+1/2)}{T+1} \nonumber\\
    &=(1-\alpha) \frac{(N+2)T_h + (N+1)N- (T_h+N/2+1/2)}{(N+2)T_h + (N+1)N+1} \bra{a_2} A \ket{a_2}\nonumber\\
    &= (1-\alpha)\left[1-\frac{T_h+N/2+3/2}{(N+2)T_h + (N+1)N+1}\right]\bra{a_2} A \ket{a_2}\nonumber\\
    &= (1-\alpha)\left[1-\delta\right]\bra{a_2} A \ket{a_2},
\end{align}
where $\delta \coloneqq \frac{T_h+N/2+3/2}{(N+2)T_h + (N+1)N+1},$ and $ 0 < \delta \leq 2N^{-1}.$ 
The non-halting case will still have $\barAN=0.$ 
\end{proof}

\subsection{Modifying \texorpdfstring{$H,$ $\mathcal{H},$ and $A$}{H,H, and A}}\label{sec:changingquantum}

In addition to this modification to the Turing Machine, we will also modify the Hilbert space and Hamiltonian. The goal of this modification is to:\begin{itemize}
    \item Push the microcanonical expectation value of $\AN$ to the middle of $A$'s spectrum.
    \item Give us control of $\barAN$ in the halting case in such a way that we are guaranteed that it crosses the microcanonical expectation away from the non-halting case's value (zero).
\end{itemize}
To do these, we will replace our local Hilbert space $\mathcal{H}$ with a new Hilbert space $\mathcal{H}_{\text{tune}}:=\mathcal{H}\oplus \mathcal{H}$ of doubled dimension. 
We will denote the projection onto $\mathcal{H}\oplus \mathbf{0}$ by $\Pi$ and the projection onto $\mathbf{0}\oplus \mathcal{H}$ by $\Pi'.$ Denote the basis of $\mathcal{H}\oplus \mathbf{0}$ by $\{\ket{a}\ :\ a \in Q\cup \Gamma\}$ and that of $\mathbf{0}\oplus\mathcal{H}$ by $\{\ket{a'}\ :\ a \in Q\cup \Gamma\}.$ Let 
\begin{equation}
    H_{TM} \coloneqq \sum_{abcd\in\mathcal{T}} \ketbra{cd}{ab}+h.c,
\end{equation} represent Turing-machine evolution on $\mathcal{H}\oplus \mathbf{0},$ and
\begin{equation}
H_{TM}' \coloneqq \sum_{abcd\in\mathcal{T}} \ketbra{c'd'}{a'b'}+h.c,    
\end{equation}
represent Turing machine evolution on $\mathbf{0}\oplus\mathcal{H}.$ We then let $H\coloneqq H_{TM}+H'_{TM}.$ Note that $[H_{TM},H'_{TM}]=0.$ We also define 
\begin{equation}
    F\coloneqq \sum_{a \in Q \cup \Gamma} \ketbra{a'}{a} + h.c.
\end{equation}
Next, we slightly modify $A_i.$
Let $A_i  \coloneqq p\ketbra{a_2}{a_2}_i + q\Pi'_i$, where $p,q$ are parameters we will choose. 
With these modifications, we compute the expectation of $\AN$ for the microcanonical ensemble and Gibbs ensembles as special cases of any ensemble which is analytic in $H$.
\begin{lemma}
\label{lem:analytic_expectation}
Consider any $\rho \propto f(H)$ where $f$ is analytic and $\rho$ is unit trace. Then,
    \begin{equation}
        \tr{\mathcal{A}_N \rho} = \frac{q}{2} + p \tr{\ketbra{a_2}{a_2}_1 \rho}.
    \end{equation}
\end{lemma}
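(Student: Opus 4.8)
The plan is to exploit a $\mathbb{Z}_2$ ``copy-swap'' symmetry of the tuned Hamiltonian $H = H_{TM} + H'_{TM}$. Let $\mathcal{F} := F^{\otimes N}$ be the operator that applies the single-site swap $F$ (which exchanges $\ket{a}\leftrightarrow\ket{a'}$ for each $a\in Q\cup\Gamma$) at every site; since $F$ is Hermitian with $F^2=I$ on the single-site space, $\mathcal{F}$ is a Hermitian involution, hence unitary. First I would verify that conjugation by $\mathcal{F}$ interchanges the two Turing-machine Hamiltonians: the two-site term $\ketbra{cd}{ab}$ associated with a transition rule $ab\to cd$ is sent to $\ketbra{c'd'}{a'b'}$, so summing over $\mathcal{T}$ gives $\mathcal{F}H_{TM}\mathcal{F}=H'_{TM}$ and $\mathcal{F}H'_{TM}\mathcal{F}=H_{TM}$; therefore $\mathcal{F}H\mathcal{F}=H$, i.e.\ $[\mathcal{F},H]=0$.

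Next I would transfer this commutation to $\rho$. Since $f$ is analytic, $f(H)$ is a norm-convergent power series in $H$ and every power $H^k$ commutes with $\mathcal{F}$, so $[\mathcal{F},f(H)]=0$; because $\rho\propto f(H)$ and conjugation by the unitary $\mathcal{F}$ preserves the trace (hence the normalization), $\mathcal{F}\rho\mathcal{F}=\rho$. The same functional-calculus reasoning, together with the translation invariance of $H$ under the periodic boundary conditions, shows that $\rho$ is translationally invariant, so $\tr{\ketbra{a_2}{a_2}_i\rho}$ and $\tr{\Pi'_i\rho}$ do not depend on the site $i$.

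With these two facts the computation is short. Expanding $\AN=\frac1N\sum_i A_i$ with $A_i=p\ketbra{a_2}{a_2}_i+q\Pi'_i$ and using site-independence,
\begin{equation}
\tr{\AN\rho} = \frac{1}{N}\sum_{i=1}^{N}\left(p\,\tr{\ketbra{a_2}{a_2}_i\rho} + q\,\tr{\Pi'_i\rho}\right) = p\,\tr{\ketbra{a_2}{a_2}_1\rho} + q\,\tr{\Pi'_1\rho},
\end{equation}
so it only remains to show $\tr{\Pi'_1\rho}=\frac12$. Since $F$ swaps the two halves of the single-site space at site $1$ and acts as identity conjugation on every other site, $\mathcal{F}\Pi'_1\mathcal{F}=\Pi_1$; then cyclicity of the trace together with $\mathcal{F}\rho\mathcal{F}=\rho$ gives $\tr{\Pi'_1\rho}=\tr{\Pi'_1\mathcal{F}\rho\mathcal{F}}=\tr{\mathcal{F}\Pi'_1\mathcal{F}\,\rho}=\tr{\Pi_1\rho}$. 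As $\Pi_1+\Pi'_1=I$, summing yields $\tr{\Pi_1\rho}+\tr{\Pi'_1\rho}=\tr{\rho}=1$, hence $\tr{\Pi'_1\rho}=\frac12$, which gives the claimed identity.

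I do not anticipate a genuine obstacle here: the entire content is recognizing the swap symmetry and noting that both ensembles of interest --- the microcanonical Gaussian $\propto e^{-\pi((H-E)/w)^2}$ and the Gibbs weight $\propto e^{-\beta H}$ --- are analytic functions of $H$, so one argument covers both as corollaries. The only steps requiring a little care are justifying $[\mathcal{F},f(H)]=0$ from $[\mathcal{F},H]=0$ for a general analytic $f$ (handled by the power-series/functional-calculus argument above) and confirming that $\mathcal{F}$ exchanges $\Pi_i\leftrightarrow\Pi'_i$ while commuting with everything supported on the remaining sites.
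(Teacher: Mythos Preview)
Your proposal is correct and follows essentially the same route as the paper: exploit the copy-swap symmetry to get $\tr{\Pi'_i\rho}=\tr{\Pi_i\rho}=\tfrac12$, and use translational invariance of $\rho$ (inherited from $H$ via the functional calculus) to reduce the site average to site $1$. The only cosmetic difference is that the paper states $[F,H]=0$ directly for its swap operator, while you spell out the global $\mathcal{F}=F^{\otimes N}$ and verify explicitly that conjugation exchanges $H_{TM}\leftrightarrow H'_{TM}$; this is the same symmetry, just made a bit more explicit.
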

\begin{proof}

First, note that $[F,H]=0.$ As $\rho\propto f(H)$ for some analytic function $f:\mathbb{R}\to\mathbb{R},$ it follows that $[\rhoMC,F]=0$ by the existence of a power series expansion of $f;$ we note that this holds for arbitrary functions $f$ whose domain contains the spectrum of $H,$ but this is a bit more involved to prove.
Therefore, as $F^2=I,$
\begin{align}
    \tr{\Pi'_i \rho}&= \tr{\Pi'_i F\rho F}\nonumber\\
    &= \tr{F\Pi'_i F\rho}\nonumber\\
    &= \tr{\Pi_i \rho}\nonumber\\
    &= \tr{(I-\Pi'_i) \rho}.
\end{align}
Thus, as $\rho$ is unit-trace, $\tr{\Pi'_i \rho}=1/2.$ 
Hence, we have
\begin{align}
        \tr{\mathcal{A}_N \rho} &=\frac{p}{N} \sum_i \left(\tr{ \ketbra{a_2}{a_2}_i \rho} + q\tr{\Pi'_i\rho}\right) \nonumber\\
        &= \frac{q}{2} + p \tr{\ketbra{a_2}{a_2}_1 \rho}. 
\end{align}
where we have used the fact that $\rho$ is translationally invariant since $H$ is translationally invariant, which follows from the same logic as showing that $[\rho,F]=0$ above.
\end{proof}

Applying \cref{lem:analytic_expectation} to the microcanonical state we have
\begin{corollary}
\label{lem:microcanonical_expectation}
    \begin{equation}
        \tr{\mathcal{A}_N \rhoMC} = \frac{q}{2} + p \tr{\ketbra{a_2}{a_2}_1 \rhoMC}.
    \end{equation}
\end{corollary}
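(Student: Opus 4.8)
The plan is to derive this directly from \cref{lem:analytic_expectation}; the only thing that needs checking is that the microcanonical ensemble is of the form covered there, namely $\rho \propto f(H)$ with $f$ analytic and $\rho$ of unit trace.

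First I would recall that, having fixed the Gaussian window $f(x)=e^{-\pi x^2}$ in \cref{eq:microcanonical_gaussian}, the microcanonical state at the relevant energy $E$ is
\begin{equation}
    \rhoMC = \frac{1}{W}\sum_i e^{-\pi\left(\frac{\lambda_i-E}{w}\right)^2}\ketbra{\lambda_i}{\lambda_i} = \frac{g(H)}{\tr{g(H)}},
\end{equation}
where $g(x):=e^{-\pi(x-E)^2/w^2}$ and $W=\tr{g(H)}>0$. The function $g$ is entire, hence analytic on a neighbourhood of $\sigma(H)$, and $W>0$ makes $\rhoMC$ a well-defined unit-trace operator. (This is exactly why the Gaussian window was preferred over a rectangular one: a sharp cutoff would produce a non-analytic spectral function and the commutation step inside \cref{lem:analytic_expectation} would require extra care.)

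I would then apply \cref{lem:analytic_expectation} with $\rho=\rhoMC$ and $f=g$, which immediately gives
\begin{equation}
    \tr{\mathcal{A}_N \rhoMC} = \frac{q}{2} + p\,\tr{\ketbra{a_2}{a_2}_1 \rhoMC}.
\end{equation}

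I do not expect any real obstacle: all the work is done in \cref{lem:analytic_expectation}, whose proof already supplies the two ingredients — that $[F,H]=0$ forces $[\rho,F]=0$ for any $\rho$ analytic in $H$, so $F$ swaps the two copies $\mathcal{H}\oplus\mathbf{0}$ and $\mathbf{0}\oplus\mathcal{H}$ with equal weight and $\tr{\Pi'_i\rho}=1/2$, and that translation invariance of $H$ is inherited by $\rho$. The single point of care is the analyticity and well-definedness observed above; everything else is a one-line substitution.
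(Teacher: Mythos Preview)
Your proposal is correct and matches the paper's approach exactly: the paper simply states that the corollary follows by applying \cref{lem:analytic_expectation} to the microcanonical state, and you have spelled out the only point to verify, namely that $\rhoMC \propto g(H)$ for the entire function $g(x)=e^{-\pi(x-E)^2/w^2}$. If anything, your write-up is more explicit than the paper's one-line justification.
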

Likewise, applying \cref{lem:analytic_expectation} to the Gibbs state we have
\begin{corollary}
\label{lem:gibbs_expectation}
    \begin{equation}
        \tr{\mathcal{A}_N \rhog} = \frac{q}{2} + p \tr{\ketbra{a_2}{a_2}_1 \rhog}.
    \end{equation}
\end{corollary}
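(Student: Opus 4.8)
The plan is to obtain this corollary as an immediate specialization of \cref{lem:analytic_expectation}, in exact parallel to \cref{lem:microcanonical_expectation}. First I would observe that the Gibbs state $\rhog = e^{-\beta H}/\tr{e^{-\beta H}}$ is manifestly of the form $\rho \propto f(H)$ with $f(x) = e^{-\beta x}$, that $f$ is entire (hence analytic on all of $\mathbb{R}$, in particular on the spectrum of $H$, so it admits a globally convergent power-series expansion), and that $\rhog$ is unit-trace by construction of the normalization in the denominator. Thus both hypotheses of \cref{lem:analytic_expectation} are satisfied.

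Applying \cref{lem:analytic_expectation} with $\rho = \rhog$ then directly yields
\begin{equation}
    \tr{\mathcal{A}_N \rhog} = \frac{q}{2} + p \tr{\ketbra{a_2}{a_2}_1 \rhog},
\end{equation}
which is the claimed identity. In particular, the argument in \cref{lem:analytic_expectation} only uses $[F,H]=0$, $F^2 = I$, the translation invariance of $H$, and the power-series expansion of $f$; none of these is affected by passing from the microcanonical window function to $e^{-\beta x}$.

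The one point worth checking is that the value of $\beta$ — which by our convention is fixed implicitly via $\tr{H e^{-\beta H}}/\tr{e^{-\beta H}} = E = \bra{\psi_0} H \ket{\psi_0}$ — is irrelevant to this reasoning: for every fixed real $\beta$ the map $x \mapsto e^{-\beta x}$ is analytic and $\rhog$ is a valid density operator, so the commutation step $\tr{\Pi'_i \rhog} = \tr{F \Pi'_i F \rhog} = \tr{\Pi_i \rhog} = 1/2$ and the translation-invariance step go through verbatim. Hence there is essentially no obstacle here; the only (trivial) thing to verify is analyticity of the exponential, and the corollary follows, just as \cref{lem:microcanonical_expectation} did for the Gaussian-windowed microcanonical ensemble.
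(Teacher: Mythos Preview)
Your proposal is correct and matches the paper's approach exactly: the paper simply states that the corollary follows by applying \cref{lem:analytic_expectation} to the Gibbs state, and your verification that $f(x)=e^{-\beta x}$ is analytic and $\rhog$ is unit-trace is precisely what justifies that application.
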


\subsection{A quantum polynomial time reduction from \texorpdfstring{$\FSHalt$ to $\FSTherm$}{FSHalt to FSTherm(MC)}}
\label{sec:mc_qpoly_reduction}
\begin{figure}[h!]
    \centering
\includegraphics[width=0.5\textwidth]{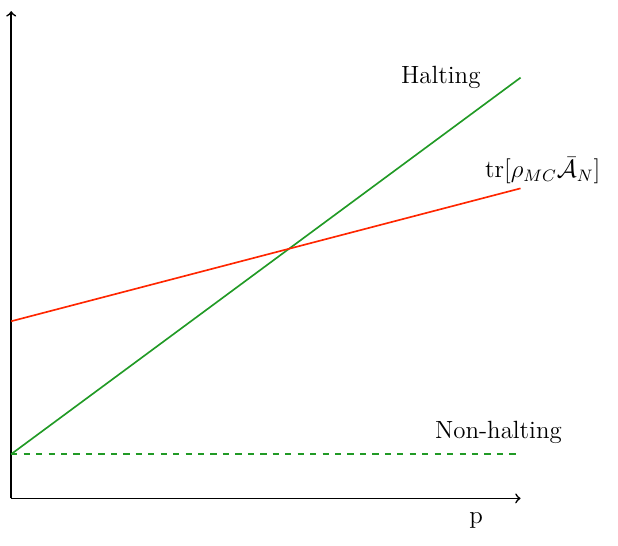}
    \caption{The solid green and dashed-green lines represent the time-averaged expectation values in the halting and non-halting cases respectively, as $p$ is tuned. 
    The red line represents the expectation values with respect to the microcanonical ensemble as $p$ is tuned. 
    }
    \label{fig:Tuned_Expectation_Value}
\end{figure}
We now show how $p$ can be chosen so that the long time average of $\AN$, i.e., $\barAN$, in the halting case and the expectation of $\AN$ for the microcanonical ensemble both agree. We briefly comment here on some constraints of this construction. Our reduction relies on a quantum algorithm to prepare the microcanonical state, but the algorithm presented in \cref{sec:fstherm_containment} in general requires $\bigO{e^{\pi\alpha^2/w^2}}$ expected executions of a $\bigO{\poly{N, \log{1/\epsilon}, 1/w}}$ depth circuit, where $\alpha$ upper bounds $\norm{H} + |E|$. We therefore show hardness for a restricted class of microcanonical states with  width $w = \bigomega{\alpha/\sqrt{\log{N}}}$. Such a window width is still vanishingly small with respect to the range of the Hamiltonian spectrum, but allows for efficient preparation of the microcanonical state.

\begin{lemma}
\label{lem:mc_qpoly_reduction}
    Let $H$ be a $k$-local Hamiltonian with $cN$ terms, each with norm bounded by 1. For energy $E$, let $\rhoMC(H,E)$ be the microcanonical state with window width $w = \bigomega{cN/\log{N}}$. Then, for any $\epsilon$ there exists $N_0$ such that for all $N\geq N_0$ there is a quantum algorithm that takes time $\bigO{\poly{N, \log{\frac{1}{\epsilon}}} }$ to compute $p$ such that with the above Turing machine construction and observable $\AN$, 
    \begin{equation}
        \lvert\tr{\rhoMC\AN} - \barAN\rvert \leq \epsilon
    \end{equation}
\end{lemma}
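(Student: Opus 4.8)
The plan is to treat $p$ (with $q$ fixed to a convenient positive constant) as a tuning dial that, in the halting case, slides $\barAN$ onto the microcanonical value $\tr{\AN\rhoMC}$, and to show a polynomial-time quantum computation can set it correctly. First I would collect the three quantities in play. The input state $\ket{\psi_0}=\ket{q_0}\otimes\ket{y}$ lies in the all-unprimed subspace $\mathcal{H}\oplus\mathbf{0}$ at every site; $H_{TM}$ preserves that subspace while $H_{TM}'$ vanishes on it, so $e^{-iHt}\ket{\psi_0}=e^{-iH_{TM}t}\ket{\psi_0}$ stays there, where $\Pi'_i=0$ and $A_i$ acts as $p\ketbra{a_2}{a_2}_i$. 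Restricted to this invariant subspace the dynamics is exactly that of \cref{lem:halting_separation_modified} with $\bra{a_2}A\ket{a_2}$ replaced by $p$, so $\barAN=p(1-\alpha)(1-\delta)$ with $\delta=\bigO{1/N}$ in the halting case and $\barAN=0$ in the non-halting case. For the ensemble side, the TM Hamiltonian has vanishing diagonal, so $E=\bra{\psi_0}H\ket{\psi_0}=0$ and the relevant ensemble is $\rhoMC(H,0)$; then \cref{lem:microcanonical_expectation} gives $\tr{\AN\rhoMC}=\tfrac{q}{2}+p\mu$, where $\mu\coloneqq\tr{\ketbra{a_2}{a_2}_1\rhoMC}$ depends only on $H$, $w$, and $N$ --- not on $p$, $q$, or the $\FSHalt$ instance.

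The algorithm is then: (i) estimate $\mu$ to additive error $\epsilon'$ by preparing $\rhoMC(H,0)$ with \cref{lem:rhoMC_sampling} and measuring the single-site observable $\ketbra{a_2}{a_2}_1$ (as in \cref{cor:microcanonical_observable}), boosting the success probability by repetition; by the lower bound on $w$ assumed in the statement this runs in time $\bigO{\poly{N,\log{1/\epsilon'}}}$; call the output $\hat\mu$. (ii) Output $p\coloneqq\frac{q/2}{(1-\alpha)-\hat\mu}$. To see this is well-posed with $p=\bigO{1}$, note that $\rhoMC(H,0)\propto g(H)$ for the entire function $g(x)=e^{-\pi(x/w)^2}$, so the $F$-symmetry argument of \cref{lem:analytic_expectation} applies and gives $\tr{\Pi_1\rhoMC}=\tfrac12$; since $\ketbra{a_2}{a_2}_1\le\Pi_1$ this forces $\mu\le\tfrac12$, and choosing $\alpha<\tfrac12$ (which we are free to do) makes $(1-\alpha)-\hat\mu\ge\tfrac12-\alpha-\epsilon'$ a positive constant for small $\epsilon'$. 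Hence $p$ is a well-defined $\bigO{1}$ quantity.

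For the error bound, our choice of $p$ means $\tfrac{q}{2}=p((1-\alpha)-\hat\mu)$, so by the triangle inequality
\[
\left|\tr{\AN\rhoMC}-\barAN\right|=\left|p(\mu-\hat\mu)+p(1-\alpha)\delta\right|\le p\epsilon'+\frac{2p(1-\alpha)}{N}.
\]
Since $p=\bigO{1}$, taking $\epsilon'=\bigtheta{\epsilon}$ small enough and $N_0=\bigtheta{1/\epsilon}$ makes the right-hand side at most $\epsilon$ for all $N\ge N_0$, which is the claim in the halting case. In the non-halting case $\barAN=0$ while $\tr{\AN\rhoMC}=\tfrac{q}{2}+p\mu\ge\tfrac{q}{2}$ is a constant bounded away from $0$; after a harmless rescaling of the promise-gap constants this becomes a negative instance, so the map sending an instance of $\FSHalt$ to the instance of $\FSTherm$ with this $p$ is the required quantum polynomial-time reduction.

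The main obstacle is the cost of step (i): the preparation of $\rhoMC(H,0)$ in \cref{lem:rhoMC_sampling} succeeds on postselection only with probability of order $e^{-\pi(\norm{H}/w)^2}$, so a polynomial running time forces the window width $w=\bigomega{\norm{H}/\sqrt{\log{N}}}$ assumed here --- precisely the restriction flagged around \cref{thm:fstherm_pspace_complete}. Two secondary points need care: that the unknown halting time $T_h$ enters $\barAN$ only through the benign factor $1-\delta$ with $\delta=\bigO{1/N}$ (the reason \cref{sec:boostingAN} padded the machine), and that $\mu$ stays bounded away from $1-\alpha$ so the tuned $p$ does not blow up (handled by the $F$-symmetry bound $\mu\le\tfrac12$ together with $\alpha<\tfrac12$). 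Everything else is bookkeeping of the approximation errors in the block-encoding-based preparation.
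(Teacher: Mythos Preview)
Your proposal is correct and follows essentially the same approach as the paper: fix $q$ to a constant (the paper takes $q=\tfrac12$), use \cref{lem:rhoMC_sampling} to estimate $\mu=\tr{\ketbra{a_2}{a_2}_1\rhoMC}$ in quantum polynomial time under the assumed window width, solve the linear equation $\tfrac{q}{2}+p\mu=(1-\alpha)p$ for $p$, and absorb both the estimation error in $\mu$ and the $\delta=\bigO{1/N}$ residual from \cref{lem:halting_separation_modified}. Your write-up is in fact somewhat more explicit than the paper's --- you spell out why $E=0$, invoke the $F$-symmetry to get $\mu\le\tfrac12$, and make the well-posedness of $p$ quantitative via $\alpha<\tfrac12$ --- but the underlying argument is the same.
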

\begin{proof}
    In the above sections using \cref{lem:halting_separation_modified} and \cref{lem:microcanonical_expectation}, we showed how an instance of $\FSHalt$ can be used to construct an instance of $\FSRelax$ with parameters $p,q$ such that:
\begin{enumerate}
    \item In the case of non-halting, $\barAN = 0$
    \item In the case of halting, $\barAN= (1-\alpha)\left[1-\delta\right]p.$, with $\delta < 1/N$.
    \item $\tr{\rhoMC\AN} = \frac{q}{2} + \frac{p\tr{\rhoMC \ketbra{a_2}{a_2}}}{2}$
\end{enumerate}
In the case of halting, to tune $p$ such that $\barAN \approx \tr{\rhoMC\AN}$, we require that $q$ is chosen such that $\tr{\rhoMC\AN} > \barAN$ at $p=0$, and $\tr{\rhoMC\AN} < \barAN$ at $p=1$.
Note that as $\Pi-\ket{a_2}\!\bra{a_2}\succeq 0,$ it follows that $0\leq\tr{ \ketbra{a_2}{a_2}_i\rhoMC}\leq \tr{\Pi_i\rhoMC}\leq 1/2.$ Using the fact that $\tr{\rhoMC\ketbra{a_2}{a_2}_1} \leq 1/2$, the above constraint on $q$ is reduced to
\begin{equation}
    \frac{q}{2} \leq (1-\alpha)[1-\delta] - \frac{1}{4}.
\end{equation}
when $q = \frac{1}{2}$, for any $N$ there exists a sufficiently small constant $\alpha$  such that this constraint is satisfied. Therefore, in the rest of this section we assume $q=\frac{1}{2}$.
Using the polynomial time quantum algorithm described in \cref{lem:rhoMC_sampling}, for constant $w$, we can estimate $\widetilde{\AN}$ such that $\lvert \tr{\rhoMC \ketbra{a_2}{a_2}_1} - \widetilde{\AN} \rvert \leq \epsilon/2$ for constant error $\epsilon/2$. 
To ensure that in the case of halting, $\barAN \approx \tr{\rhoMC\AN}$, we would like to select an appropriate $p$, corresponding the point of intersection of the 2 values in \cref{fig:Tuned_Expectation_Value}. With $p = \frac{1}{4(1-\alpha) - 2\widetilde{\AN}}$, we obtain
\begin{equation}\label{eq:tuning_approx_error}
    \lvert\tr{\rhoMC\AN} - \barAN\rvert \leq \left\lvert \frac{(1-\alpha)\delta}{4(1-\alpha) - 2\widetilde{A_N}} \right\rvert + \frac{\epsilon}{4}
\end{equation}
Since $\delta \leq N^{-1}$, there exists $N_0$ such that for all $N \geq N_0$, $\lvert\tr{\rhoMC\AN} - \barAN\rvert \leq \epsilon$.
\end{proof}

We note that as we have doubled our Hilbert space dimension twice from the relaxation construction, $d=204.$ in the above construction. Combining \cref{thm:fsrelax_pspace_hardness} and \cref{lem:mc_qpoly_reduction}, we obtain the following theorem.
\begin{theorem}
\label{thm:fstherm_pspace_hardness}
    For any instance of $\FSHalt$, there exists a Hamiltonian $H$, input state $\ket{\psi_0}$, and sum of local observables $\AN$ for which the long time average of $\AN$ equals $\tr{\AN\rhoMC}$ in the case of halting and equals 0 in the case of non-halting. Further, $H, \ket{\psi_0}, \AN$ can be constructed by a polynomial time quantum algorithm from instances of $\FSHalt$. Hence, $\FSTherm$ is $\PSPACE$-hard under a quantum polynomial time reduction.
\end{theorem}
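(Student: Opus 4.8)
The plan is to bundle the modified constructions of \cref{sec:boostingAN} and \cref{sec:changingquantum} together with the microcanonical state-preparation routine of \cref{sec:fstherm_containment} into a single polynomial-time quantum algorithm mapping an instance $(M,x,n)$ of $\FSHalt$ to an instance of $\FSTherm$. First I would build the composite machine $\mathcal{M}$ of \cref{Sec:Turing_Machine_Description} with TM1 set to simulate $M$ on $x$ in $n$ cells, apply the buffer padding of \cref{sec:boostingAN}, push the result through the TM-to-Hamiltonian map of \cref{Sec:TM-to_Hamiltonian}, and double the local Hilbert space as in \cref{sec:changingquantum}, obtaining $H = H_{TM} + H'_{TM}$ and the tunable observable $A_i = p\ketbra{a_2}{a_2}_i + q\,\Pi'_i$ with $q = \tfrac{1}{2}$. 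The input would be the computational basis state $\ket{\psi_0} = \ket{q_0}\otimes\ket{y}$ that encodes $x$ on the $M$-cells with the $A$-cells set to $a_1$, with $\alpha$ a small constant and $N$ taken at least some threshold $N_0$ (padding with inert $a_1$-cells if necessary, which does not change the halting behaviour of TM1). Since $\ket{\psi_0}$ lives entirely in the unprimed copy, we have $H'_{TM}\ket{\psi_0}=0$, the trajectory never leaves that copy, and $E = \bra{\psi_0}H\ket{\psi_0}=0$, so the relevant microcanonical state is $\rhoMC(H,0)$.

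With this in place the two quantities to be matched become explicit functions of the free parameter $p$. By \cref{lem:halting_separation_modified}, $\barAN = (1-\alpha)[1-\delta]p$ with $\delta\in(0,2/N]$ in the halting case and $\barAN = 0$ in the non-halting case; by \cref{lem:microcanonical_expectation} (a corollary of \cref{lem:analytic_expectation}, applicable because the Gaussian-windowed microcanonical state is analytic in $H$ and $F$ commutes with $H$), $\tr{\AN\rhoMC} = \tfrac{q}{2} + p\,\tr{\ketbra{a_2}{a_2}_1\rhoMC}$, whose coefficient lies in $[0,\tfrac{1}{2}]$ since $\ketbra{a_2}{a_2}_1\preceq \Pi_1$ and $\tr{\Pi_1\rhoMC}=\tfrac{1}{2}$. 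Read over $p\in[0,1]$ these are two line segments: at $p=0$ the microcanonical value is $\tfrac{1}{4}$ while the halting value is $0$, and at $p=1$ the halting value $(1-\alpha)[1-\delta]$ exceeds the microcanonical value once $\alpha$ is a small enough constant, so the segments cross at some $p^\star\in(0,1)$. The reduction must output this crossing point, and the point is that it can be computed efficiently: \cref{lem:mc_qpoly_reduction} supplies a polynomial-time quantum algorithm that prepares $\rhoMC$ via \cref{lem:rhoMC_sampling}, estimates $\tr{\ketbra{a_2}{a_2}_1\rhoMC}$ to constant accuracy, and sets $p$ to the estimated intersection; combined with $\delta \le 2/N$ this yields $\lvert\tr{\AN\rhoMC}-\barAN\rvert\le \epsilon$ in the halting case for all $N\ge N_0$.

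To finish, I would check that the halting and non-halting cases fall on opposite sides of the $\FSTherm$ promise. In the non-halting case $\barAN=0$ whereas the target value $\tr{\AN\rhoMC}\ge\tfrac{1}{4}$, so (rescaling $\AN$ by a constant if necessary) $\lvert\barAN-\tr{\AN\rhoMC}\rvert\ge c\epsilon$ for the problem's fixed constants; in the halting case $\lvert\barAN-\tr{\AN\rhoMC}\rvert\le\epsilon$. Thus halting maps to a YES instance and non-halting to a NO instance, with $\AN$ serving as the observable whose long-time average meets $\tr{\AN\rhoMC}$ to within tolerance in the halting case and sits a constant away in the non-halting case. The $\FSTherm$ promise on $\ket{\psi_0}$ (\cref{def:fstherm_formal}) holds because $\ket{\psi_0}$ lies in a single line-graph (or loop) block of the block-diagonalised $H$, whose consecutive eigenvalue gaps are $\bigomega{d^{-2N}}$ by the verbatim analogue of \cref{Lemma:Promise_Satisfied}. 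Since the whole map --- in particular the computation of $p^\star$ --- runs in quantum polynomial time and $\FSHalt$ is $\PSPACE$-complete, $\FSTherm$ is $\PSPACE$-hard under quantum polynomial-time reductions.

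The hard part is precisely the computation of $p^\star$, and it is what forces the reduction to be quantum rather than classical: there is no closed form for the spectrum of $H$, so the only handle on $\tr{\ketbra{a_2}{a_2}_1\rhoMC}$ is the preparation circuit of \cref{lem:rhoMC_sampling}, whose correct postselection succeeds only with probability $\bigomega{e^{-\pi\alpha^2/w^2}}$. This is $1/\poly{N}$ exactly when $w = \bigomega{\norm{H}/\sqrt{\log N}}$ --- equivalently the hypothesis $w=\bigomega{cN/\log N}$ of \cref{lem:mc_qpoly_reduction} --- which is why the theorem must be restricted to that class of (still vanishingly narrow) microcanonical windows. The rest is routine bookkeeping: confirming that one constant $\alpha$ can be chosen, uniformly for all large $N$, small enough to keep the halting and non-halting values of $\barAN$ separated while forcing $p^\star\in(0,1)$ with $q=\tfrac{1}{2}$, and re-checking \cref{Lemma:Promise_Satisfied} for the doubled Hilbert space and the padded machine.
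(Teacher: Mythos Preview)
Your proposal is correct and follows the paper's own approach; the paper's proof is simply the one-line ``combine \cref{thm:fsrelax_pspace_hardness} and \cref{lem:mc_qpoly_reduction},'' and you have unpacked precisely that combination with the right ingredients (the padded TM of \cref{sec:boostingAN}, the doubled Hilbert space and tunable observable of \cref{sec:changingquantum}, the quantum estimation of $\tr{\ketbra{a_2}{a_2}_1\rhoMC}$ via \cref{lem:rhoMC_sampling}, and the promise check via \cref{Lemma:Promise_Satisfied}). Your additional observations---that $E=\bra{\psi_0}H\ket{\psi_0}=0$ because $\ket{\psi_0}$ sits in the unprimed copy, and that the window-width restriction $w=\bigomega{\norm{H}/\sqrt{\log N}}$ is forced by the postselection probability in \cref{lem:rhoMC_sampling}---are correct elaborations the paper leaves implicit.
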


Our main result \cref{thm:fstherm_pspace_complete} follows from \cref{thm:fstherm_pspace_hardness} and \cref{thm:fstherm_pspace_containment}.

\subsection{A classical polynomial time reduction from \texorpdfstring{$\FSHalt$ to $\FSThermG$}{FSHalt to FSTherm(Gibbs)}}
\label{sec:fstherm_gibbs_hardness}
Using similar techniques to \cref{sec:mc_qpoly_reduction}, we will prove a \emph{classical} polynomial time reduction from $\FSHalt$ reduces to $\FSThermG$.
The only notable difference from \cref{sec:changingquantum} is that we must compute appropriate $\beta$ such that $\tr{H\rhog} = E \coloneqq \bra{\psi_0} H \ket{\psi_0}$ efficiently classically, and then compute $\tr{\ketbra{a_2}{a_2} \rhog}$ efficiently classically in order to correctly tune the value of $p$ (using \cref{lem:gibbs_expectation}). We can compute both of these using the  result \cite{Alhambra_Cirac_2021} that there is a classical polynomial time algorithm to estimate local observables of Gibbs states with finite $\beta$ in 1D.
However, in order to do this, we must know in advance limits on the value of $|\beta|$. In order to make use of these algorithms we require that $\beta$ is constant, which we ensure in \cref{sec:gibbs_modification}.

In order to estimate $\beta$, we can make use of the fact that $E(\beta) = \tr{H\rhog}$ is monotonic in $\beta$. If $\beta$ is guaranteed to be $\bigO{1}$, we can use the algorithm from \cite{Alhambra_Cirac_2021} to estimate $\tr{H\rhog}$ and iteratively improve our estimate of $\beta$ to error $\epsilon$ in $\log{1/\epsilon}$ steps classically using a binary search algorithm.

Finally, if $\beta = \bigO{1}$, we may efficiently compute the expectation value of $\tr{\ketbra{a_2}{a_2} \rhog}$, once again using the result of \textcite{Alhambra_Cirac_2021}.

How do we guarantee that $\beta = \bigO{1}$? Observe that as $\beta$ approaches $-\infty$ or $\infty$, $\rhog$ approaches the highest excited state or the ground state respectively. Hence, we expect that there is a range of expected energies for which $\beta$ is constant, which are found away from the edges of the spectrum of $H$. If $E$ is extensively gapped from the edges of the spectrum, in the thermodynamic limit (i.e., large $N$ limit) the energy density must be a constant distance away from the edges of the spectrum as well. This would imply that the corresponding $\beta$ must be finite in the thermodynamic limit (as it is only infinite at the edges of the spectrum), and hence in the finite-sized limit we expect it to be constant. We formalize this in \cref{lem:constantbeta}.
This Lemma implies that for a Hamiltonian $H$ used in our hardness construction, if the initial state is guaranteed to be in the middle of the spectrum, the corresponding Gibbs state has $\beta = \bigO{1}$.  By simple modifications of $H$ discussed in the following section (increasing the local Hilbert space dimension and applying local energy penalties), it is possible to ensure that the initial product state encoding hard instances of $\FSHalt$ has expected energy extensively separated from the edges of the spectrum. Hence, there is a classical polynomial time reduction from $\FSHalt$ to $\FSThermG$, and $\FSThermG$ is $\PSPACE$-complete.\subsection{Modifying the Hilbert Space.}
\label{sec:gibbs_modification}
We first modify $\mathcal{H}$ to push $E(\beta)$ away from the edges of the spectrum. To do this, we use a similar technique to \cref{sec:changingquantum}, augmenting the Hilbert space with two new sectors that duplicate the behaviour of the original sector, just with energies shifted appropriately up and down. In particular, we first triple the local Hilbert space to 
\begin{equation}
\mathcal{H}_{\text{Gibbs}} = \mathcal{H}\oplus\mathcal{H}\oplus\mathcal{H}.
\end{equation}
We further modify our Hamiltonian by adding a local term on each site that applies a penalty of $-\epsilon$ for any basis state of $\mathcal{H}\oplus \mathbf{0} \oplus \mathbf{0}$, and a penalty of $\epsilon$ for any basis state of $\mathbf{0} \oplus \mathbf{0} \oplus \mathcal{H}$. 
For any eigenstate of $\widetilde{H}$ with energy $E$ and support only in the middle sector ($\mathbf{0} \oplus \mathcal{H} \oplus \mathbf{0}$) of $\mathcal{H}_{\text{Gibbs}}$, there thus exists a corresponding eigenstate with support only in the upper sector ($\mathbf{0} \oplus \mathbf{0} \oplus \mathcal{H}$) with energy $E + \epsilon N$ and a corresponding eigenstate with support only in the lower sector ($\mathcal{H} \oplus \mathbf{0} \oplus \mathbf{0}$) with energy $E - \epsilon N$.
Hence, any eigenstate with support only in the middle sector of $\mathcal{H}_{\text{Gibbs}}$ must have energy gapped away from the edges of the spectrum of $\widetilde{H}$ by at least $\epsilon N$.
 Finally, we let our observable $\AN':=\mathbf{0}\oplus \AN\oplus \mathbf{0}$, i.e., it is only supported in the middle sector. 

 So far, by tripling the Hilbert space and modifying the Hamiltonian appropriately, we have ensured that our computational states are extensively separated from the edges of the spectrum. Now, in order to tune our observable as before, we perform the same type of Hilbert space doubleing and Hamiltonian modification as in \cref{sec:changingquantum}.
 Hence, our final local Hilbert space is $\mathcal{H}_{\text{tune}}:=\mathcal{H}_{\text{Gibbs}}\oplus \mathcal{H}_{\text{Gibbs}}$. If $\widetilde{H}$ represents the Turing machine evolution (tripled and gapped) on the space $\mathcal{H}_{\text{Gibbs}}\oplus \mathbf{0}$, and $\widetilde{H'}$ represents the Turing machine evolution on $\mathbf{0} \oplus \mathcal{H}_{\text{Gibbs}}$, we make use of the Hamiltonian $H = \widetilde{H} + \widetilde{H'}$.
We further slightly modify $A_i$, again following \cref{sec:changingquantum}.
Let $A_i  \coloneqq p\ketbra{a_2}{a_2}_i + q\Pi'_i$, where $p,q$ are parameters we will choose, and $\ket{a_2}$ is in the middle sector of the space $\mathcal{H}_{\text{Gibbs}}\oplus \mathbf{0}$, while $\Pi'$ is a projector onto $\mathbf{0} \oplus \mathcal{H}_{\text{Gibbs}}$. Then, the analysis proceedds as in \cref{sec:changingquantum}, and we can make use of \cref{lem:gibbs_expectation}.
Since we have first tripled, and then doubled our local Hilbert space, the local Hilbert space dimension is six times that of our $\FSRelax$ construction.

Thus, with our newly modified Hamiltonian, observable, and Hilbert space, we are able to use the following lemma.
\begin{lemma}\label{lem:constantbeta}
    For any 1D nearest neighbor Hamiltonian $H$, and product state $\ket{\psi}$, let $E = \bra{\psi} H \ket{\psi}$. Let the maximum and minimum eigenvalues of $H$ be $E_{max}, E_{min}$ respectively.
    If there exists a constant $\epsilon$ such that $E_{min} + \epsilon N \leq E \leq E_{max}-\epsilon N$, and if $\beta$ is such that $\tr{\rhog H} = E$, then $\beta = \bigO{1}$
\end{lemma}
\begin{proof}
    For each eigenvalue $E_i$, let $\Delta_i \coloneqq E_i - E_{min}$.
    Then, let
    \begin{equation}
        S(\beta) \coloneqq \sum_{i} e^{-\beta \Delta_i}
    \end{equation}
    and $p$ be the distribution 
    \begin{equation}
        p_i(\beta) \coloneqq \frac{e^{-\beta \Delta_i}}{S(\beta)}.
    \end{equation}
    Observe that $E(\beta) - E_{min} = \sum_{i} p_i(\beta) \Delta_i$. Now, consider the Shannon entropy of $p$:
    \begin{align}
        H(p) &= -\sum_i p_i \log{p_i} \nonumber \\
        &= \sum_i e^{-\beta \Delta_i} \beta(E_i - E_{min}) \nonumber \\
        &= \beta(E(\beta) - E_{min}) 
    \end{align}
    Now, observe that $H(p)$ is bounded by the maximum entropy in $d^N$ dimensions. Hence $H(p) \leq N\log{d}$.
    Since $E(\beta) - E_{min} \geq \epsilon N$, we have
    \begin{align}
        \beta \epsilon N &\leq N \log{d} \nonumber \\
        \implies \beta &\leq \frac{\log{d}}{\epsilon}.
    \end{align}
    Performing a similar analysis for the other edge of the spectrum, $E_{max}$, we find that if 
    \begin{equation}
        E_{max} - E(\beta) \geq \epsilon N,
    \end{equation}
    then
    \begin{equation}
        |\beta| \leq \frac{\log{d}}{\epsilon}.
    \end{equation}
    Since $\epsilon, d$ are constant, $\beta = \bigO{1}$.
\end{proof}

Before we prove our result, we need one more lemma about the use of the bisection method for root-finding in the presence of error in the estimation of the function.
\begin{lemma}\label{lemma:noisybisec}
Let $f(x)$ be a strictly monotonic function with zero $f(x^*)=0$, and let $\hat{f}(x)$ be some approximation to $f(x)$ satisfying $\sup_{x}\vert f(x)-\hat{f}(x)\vert \leq \delta.$ 
Suppose that $x_\downarrow< x^* < x_\uparrow,$ such that $-f(x_\downarrow),f(x_\uparrow)>\delta.$ 
Now, suppose further that $\vert f'(x)\vert > m$ for all $x\in [x_\downarrow,x_\uparrow]$. 
Then, if we apply the bisection method to $\hat{f},$ with starting points $x_{\downarrow},x_{\uparrow}$ and running it until either $\lceil \mathrm{log}_2 \frac{x_{\uparrow}-x_{\downarrow}}{\varepsilon}\rceil$ steps have elapsed or we obtain some value $\hat{x}$ with $\vert f(\hat{x})\vert\leq \delta,$ 
we will get an approximate zero $\hat{x}$ with $\vert \hat{x}-x^*\vert\leq \varepsilon,$ so long as $\delta < m\varepsilon. $ 
\end{lemma}
\begin{proof}
    Let $I:=\{x\ \vert\ \vert{f}(x)\vert\leq \delta\}.$ Then, $\text{sign}\hat{f}(x)=\text{sign}f(x)$ for $x\notin I.$ Thus, if $I\subseteq [x^*-\varepsilon, x^*+\varepsilon],$ the bisection method will run as if it has access to $f,$ not $\hat{f},$ 
    up until the final step. To guarantee that $I\subseteq [x^*-\varepsilon,x+\varepsilon],$ note that
\begin{align*}
\min\{\vert f(x^*-\varepsilon) \vert, \vert f(x^*+\varepsilon)\vert\} &\geq \varepsilon\inf_{u\in [x^*-\varepsilon, x^*+\varepsilon]} \vert f'(u)\vert \\
&\geq m\varepsilon.
\end{align*}
Thus, if $\delta < m\varepsilon,$ we have that $I\subset [x^*-\varepsilon,x^*+\varepsilon].$
\end{proof}

Finally, we may now prove our result.
\begin{theorem}
\label{thm:fstherm_gibbs_pspace_hardness}
    For any instance of $\FSHalt$, there exists a Hamiltonian $H$, input state $\ket{\psi_0}$, and sum of local observables $\AN$ for which the long time average of $\AN$ equals $\tr{\AN \rhog}$ in the case of halting and equals 0 in the case of non-halting. Further, $H, \ket{\psi_0}, \AN$ can be constructed by a polynomial time classical algorithm from instances of $\FSHalt$. Hence, $\FSThermG$ is $\PSPACE$-hard.
\end{theorem}
\begin{proof}
    We make use of the modified Hilbert Space and Hamiltonian construction to encode an instance of $\FSHalt$ into the initial state, with Turing machine evolution performed by $H$. 
    The initial state is encoded into the middle sector of $\mathcal{H}_{\text{Gibbs}}\oplus \mathbf{0}$, and hence obeys the conditions required of \cref{lem:constantbeta}. 
    Hence, $\beta$ of the Gibbs state corresponding to the energy of the initial state is a constant. 
    Note that for any constant $\beta$, $u(\beta):= \tr{\rhog H}/N,$ the energy density, can be estimated to inverse polynomial error by a polynomial time classical algorithm \cite{Alhambra_Cirac_2021}. 
    In particular, let us estimate $u(\beta)$ to $\mathcal{O}(N^{-2})$ precision. 
    Then, to apply \cref{lemma:noisybisec}, we need only show that for $N$ sufficiently large, $u'(\beta)>m>0$ for the relevant range of $\beta$. 
    To see this, we note that by \cite{araki_gibbs_1969}
    , the energy density is analytic in the thermodynamic limit, and by \cite{Simon1993,griffiths_strict_1971}
    , we have that $u'(\beta)>0$ for all $\beta$ in the thermodynamic limit, meaning that there exists some $m,N_0$ such that on any compact set of $\beta$s, for all $N>N_0,$ $u'(\beta)>m>0.$ Thus, if we estimate $u(\beta)$ to $\mathcal{O}(N^{-1})$ precision, we will get $\beta$ to $\mathcal{O}(N^{-1})$ precision as well. This is sufficient to estimate $\tr{\AN\rhog}$ to constant precision, because \begin{align}
    \Vert \rhog-\rhogp\Vert_1 = \mathcal{O}(N\vert \beta-\beta'\vert),
\end{align} 
where the prefactors only depend on the details of the local interaction of Hamiltonian. 
Thus, by the matrix H\"{o}lder inequality, we see that if $\beta'-\beta = \mathcal{O}(N^{-1}),$ the error in the expectations of $\AN$ is $\bigO{1}.$

    
    Having estimated $\beta$, we may once again make use of the protocol of ~\textcite{Alhambra_Cirac_2021} to estimate local observables of 1D Gibbs states in order to estimate $\tr{\AN\rhog},$ or alternatively $\tr{\ket{a_2}\bra{a_2}_1\rhog},$ efficiently. Thus, we can efficiently perform the same tuning procedure as in \cref{lem:mc_qpoly_reduction}, and specifically by setting $p$ as discussed above \cref{eq:tuning_approx_error}, to tune the Gibbs expectation to the halting value of $\barAN.$ Hence, $\FSHalt$ is reduced to $\FSThermG$.
\end{proof}
Our main result \cref{thm:fsthermg_pspace_contained} follows from \cref{thm:fstherm_gibbs_containment} and \cref{thm:fstherm_gibbs_pspace_hardness}.

\section{BQP-Hardness of \texorpdfstring{$\FTFSRelax$}{FTFSRelax}}\label{sec:FTFSRelax}

\subsection{Modifying the TM-to-Hamiltonian Construction} \label{Sec:QTM_Intro}
We return to $\FTFSRelax$ \cref{def:ftfsrelax} showing that it is $\BQP$-hard following a very similar proof to the $\PSPACE$-hardness proof of $\FSRelax$. 
However, instead of encoding the problem $\FSHalt$, we instead choose to encode a $\BQP$-complete computation.
The only major change this requires is that TM1 is no longer a classical Turing Machine, but is replaced with a quantum Turing Machine (QTM).
A QTM functions similarly to a classical TM, but where the TM head and tape configuration can be in a superposition of states \cite{bernsteinqct}.
At each time-step of the QTM evolution, the tape configuration and control head update is described by a unitary $U$ which is determined by the QTM transition rules.
If the initial QTM configuration is $\ket{\psi}$, the state at the next time-step is $U\ket{\psi}$.
A more detailed description is given in \cite{bernsteinqct}.

Similar to the $\PSPACE$-hardness proof, we run TM2 and TM3 only if the computation outputs an accepting state.
However, in the rejecting state, we choose to run a TM2', which is identical to TM2 except it does not flip $a_1\rightarrow a_2$, but otherwise exactly mimics the motion of TM2.
The analysis of the eigenvalues and eigenstates of the Hamiltonian proceeds identically: following \cref{Sec:TM-to_Hamiltonian}.
Since the encoded TMs are guaranteed to halt, we find that the eigenvalues of the effective Hamiltonian are given as $\lambda_j(\Heff) = 2\cos\left(\frac{2\pi j}{T}\right)$.

\subsection{Time-Averaged Expectation Values}

In order to see the long-time relaxation behaviour of the Hamiltonian, we need to observe the Hamiltonian for a sufficiently long time.

The only other difference is that the $\BQP$ computation will only accept with probability $>2/3$ or $<1/3$.
Thus, we use a standard amplification procedure to push the acceptance probabilities to $1-2^{-\poly{N}}$ and $2^{-\poly{N}}$ respectively \cite{marriott2005quantum}.
Using the same definition of $\ket{w_k}$ as \cref{Sec:TM-to_Hamiltonian}, we see that 
\begin{align}
    \bra{w_j}\AN \ket{w_j}  \begin{cases}
        = 0 \quad \quad   \quad  \quad   \quad \quad     &j<T_{halt}, \\ 
        > (1-\alpha)(1-2^{-\poly{N}}) \quad &j>T_{halt} + T_2 \  \text{and TM1 accepts,}  \\
        < (1-\alpha)2^{-\poly{N}}   &j>T_{halt} + T_2 \ \text{and TM1 rejects.} 
    \end{cases}
\end{align}
For the intermediate times, where TM2 is still acting, we note that in the rejecting case $\bra{w_j}\AN \ket{w_j}<(1-\alpha)2^{-\poly{N}}$.

To simplify the analysis, we first note that provided we are averaging over sufficiently long times, the averaged expectation value is close to the infinite time value.
We prove the following in \cref{Sec:Finite_Time_Expectation_Proof}:
\begin{restatable}{lemma}{finitetimeexpvalue}\label{lemma:finite_time_exp_value}
    Consider the time-averaged expectation value:
    \begin{align}
        \barAN(\tau) = \frac{1}{\tau}\int_0^\tau dt   \bra{\psi_0}e^{iHt}\AN e^{-iHt}\ket{\psi_0}.
    \end{align}
    Then, the following bound between the infinite time and finite time expectation values holds:
    \begin{align}
        |\barAN(\tau) - \barAN| = \bigO*{ \frac{T^2}{\tau}}.
    \end{align}
\end{restatable}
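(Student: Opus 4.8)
The plan is to expand $\barAN(\tau)-\barAN$ in the eigenbasis of $H$ and then estimate the resulting oscillatory sum, the crux being a bound on $\sum 1/|\lambda_i-\lambda_j|$ over pairs of distinct eigenvalues that is sharper than what the minimum spectral gap alone provides. Write the initial basis state as $\ket{\psi_0}=\ket{w_0}=\sum_i c_i\ket{\lambda_i}$; since the clock subspace $\mathrm{span}\{\ket{w_k}\}$ is $H$-invariant and has dimension $T$, at most $\bigO{T}$ of the $c_i$ are nonzero. Averaging $\bra{w_0}e^{iHt}\AN e^{-iHt}\ket{w_0}=\sum_{i,j}c_i^*c_j\,e^{i(\lambda_i-\lambda_j)t}\bra{\lambda_i}\AN\ket{\lambda_j}$ over $t\in[0,\tau]$ and subtracting the $\tau\to\infty$ limit (in which only the $\lambda_i=\lambda_j$ terms survive) gives
\begin{equation}
    \barAN(\tau)-\barAN=\sum_{i,j\,:\,\lambda_i\neq\lambda_j}c_i^*c_j\,\bra{\lambda_i}\AN\ket{\lambda_j}\,\frac{e^{i(\lambda_i-\lambda_j)\tau}-1}{i(\lambda_i-\lambda_j)\tau}.
\end{equation}
Bounding the last factor by $2/(|\lambda_i-\lambda_j|\tau)$ and using $|\bra{\lambda_i}\AN\ket{\lambda_j}|\leq\norm{\AN}=\bigO{1}$ yields $|\barAN(\tau)-\barAN|\leq \frac{C}{\tau}\sum_{i,j:\lambda_i\neq\lambda_j}\frac{|c_i||c_j|}{|\lambda_i-\lambda_j|}$ for an absolute constant $C$.

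Next I would bring in the explicit spectral data of $\Heff$ from \cref{Sec:TM-to_Hamiltonian}. In the loop case relevant to $\FTFSRelax$ (with $\lambda_j=2\cos(2\pi j/T)$), as well as in the line case, the eigenvectors are discrete Fourier or sine modes, so every nonzero $|c_i|=|\!\braket{\lambda_i}{w_0}\!|$ is $\bigO{1/\sqrt{T}}$ and every eigenvalue has multiplicity at most $2$. Grouping the double sum by distinct eigenvalues $\mu\neq\nu$ and using $\sum_{i:\lambda_i=\mu}|c_i|=\bigO{1/\sqrt T}$, the estimate reduces to
\begin{equation}
    |\barAN(\tau)-\barAN|\leq\frac{C'}{\tau}\cdot\frac{1}{T}\cdot\Sigma,\qquad \Sigma:=\sum_{\mu\neq\nu\text{ distinct eigenvalues}}\frac{1}{|\mu-\nu|},
\end{equation}
so it suffices to prove $\Sigma=\bigO{T^3}$.

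The step I expect to be the main obstacle is precisely this estimate $\Sigma=\bigO{T^3}$: the naive bound, namely the minimum gap $\bigomega{T^{-2}}$ (\cref{Lemma:Promise_Satisfied}) times $\bigO{T^2}$ pairs, gives only $\bigO{T^4}$, which would degrade the conclusion to $\bigO{T^3/\tau}$. The fix is to exploit that small gaps occur only near the spectral edges. Writing each eigenvalue as $2\cos\theta_a$ and using $2\cos\theta_a-2\cos\theta_b=-4\sin\tfrac{\theta_a+\theta_b}{2}\sin\tfrac{\theta_a-\theta_b}{2}$, each reciprocal gap factors into a product of two cosecants of angles that are integer multiples of $\pi/(cT)$; fixing one index, the remaining single sum is a cross-correlation of the sequence $s\mapsto 1/|\sin(\pi s/cT)|$ with itself, which by Cauchy--Schwarz is at most $\sum_s \csc^2(\pi s/cT)=\bigO{T^2}$ via the classical identity $\sum_{s=1}^{M-1}\csc^2(\pi s/M)=(M^2-1)/3$. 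Summing over the $\bigO{T}$ indices gives $\Sigma=\bigO{T^3}$, and hence $|\barAN(\tau)-\barAN|=\bigO{T^2/\tau}$. I would carry out the computation for the loop spectrum in detail and remark that the line-graph spectrum $2\cos\tfrac{(j+1)\pi}{T+1}$ is handled identically (with $M=2(T+1)$).
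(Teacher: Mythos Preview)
Your proposal is correct and in fact more careful than the paper's own argument. Both proofs start by expanding in the eigenbasis of $\Heff$ and bounding each off-diagonal oscillatory contribution by a term of order $1/(|\lambda_i-\lambda_j|\,\tau)$. The paper then rather loosely pulls out $\norm{\AN}$, recombines the remaining sum into $\braket{\psi(t)}{\psi(t)}=1$, and multiplies by the maximal period $\max_{i\neq j}2\pi/|\lambda_i-\lambda_j|=\bigO{T^2}$; that recombination step is not literally valid because the residual integration limits $[\tau_{ij}]$ depend on the pair $(i,j)$, so one cannot re-sum into a single inner product. You instead confront the double sum honestly, invoke the explicit overlap bound $|c_i|=\bigO{T^{-1/2}}$ for the sine/Fourier eigenvectors (which the paper leaves implicit), correctly note that the minimum-gap estimate from \cref{Lemma:Promise_Satisfied} alone would only yield $\bigO{T^3/\tau}$, and then supply the sharper estimate $\Sigma=\sum_{\mu\neq\nu}|\mu-\nu|^{-1}=\bigO{T^3}$ via the factorization $2\cos\theta_a-2\cos\theta_b=-4\sin\tfrac{\theta_a+\theta_b}{2}\sin\tfrac{\theta_a-\theta_b}{2}$, Cauchy--Schwarz, and the closed form $\sum_{s=1}^{M-1}\csc^2(\pi s/M)=(M^2-1)/3$. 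This last step is a genuine technical addition: it is precisely what is needed to turn the paper's heuristic into a rigorous derivation of the claimed $\bigO{T^2/\tau}$ bound, and it works uniformly for both the line-graph and loop spectra.
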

\noindent Hence we simply need to choose $\tau\gg T^2$ and our analysis will reduce to the analysis in infinite time limit.
Since $T = \poly{N}$, for convenience we choose $\tau = p(N) T^2$, for some polynomial $p$, as an appropriate (but still polynomial) time period.
Repeating the analysis of \cref{Sec:Long-Time_Expectation_Values}, we find that 
\begin{align}
    \barAN(\tau) \begin{cases}
        >\frac{1}{2}(1-\alpha)(1-1/{\poly{N}}) -  \bigO*{ \frac{1}{p(N)} } \quad &\text{TM1 accepts,} \\
        < \bigO*{ \frac{1}{p(N)}} &\text{TM1 rejects.}
    \end{cases}
\end{align}
\noindent We can then verify that the promise is satisfied for this Hamiltonian provided in an identical way to \cref{Lemma:Promise_Satisfied}, but noting that $T=\bigO{\poly{N}}$.

\begin{theorem}
\label{thm:FTFSRelax_BQP_Complete}
    $\FTFSRelax$ is $\BQP$-complete.
\end{theorem}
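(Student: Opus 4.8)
The plan is to establish the two halves of $\BQP$-completeness separately: containment $\FTFSRelax\in\BQP$ by explicit quantum simulation exploiting the polynomial spectral-gap promise, and $\BQP$-hardness by the universal-QTM reduction already assembled in this section. I will treat containment first.

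\textbf{Containment.} First I would note that the promise does all the work. Since $\norm{H}=\bigO{\poly{N}}$ and the eigenvalues supporting $\ket{\psi_0}=\sum_i c_i\ket{\lambda_i}$ are pairwise $\Omega(1/\poly{N})$-separated, the support of $\ket{\psi_0}$ can contain only $\poly{N}$ eigenstates (a fixed-width spectrum cannot hold more $1/\poly{N}$-separated points). A $\operatorname{sinc}$-type estimate on $\barAN(\tau)-\barAN=\sum_{i\neq j}c_i^*c_j\bra{\lambda_i}\AN\ket{\lambda_j}\frac{e^{i(\lambda_i-\lambda_j)\tau}-1}{i(\lambda_i-\lambda_j)\tau}$ — the same estimate underlying \cref{lemma:finite_time_exp_value} — then gives $|\barAN(\tau)-\barAN|=\poly{N}/\tau$, so I pick $\tau=\poly{N}$ large enough that this is at most $\epsilon/4$. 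It then remains to estimate $\barAN(\tau)$ to additive error $\epsilon/4$: draw $t$ uniformly from $[0,\tau]$, prepare the classically described product state $\ket{\psi_0}$, apply a Trotterized unitary approximating $e^{-iHt}$ to operator-norm error $o(\epsilon)$ (polynomial cost, since $H$ is $k$-local with $\poly{N}$ bounded-norm terms and $t=\poly{N}$), and measure each site $i$ in the eigenbasis of $A_i$ — the $A_i$ pairwise commute, so a single copy suffices — returning $\frac1N\sum_i A_i$. This bounded random variable has expectation within $o(\epsilon)$ of $\barAN(\tau)$; averaging $\bigO{1/\epsilon^2}$ independent samples concentrates to it by Hoeffding's inequality. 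Comparing the resulting estimate to $A^*$ and thresholding at $(1+c)\epsilon/2$ decides $\FTFSRelax$ correctly under the promised gap, so $\FTFSRelax\in\BQP$.

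\textbf{Hardness.} Fix a $\BQP$-complete promise problem $L$ decided by a uniform family of polynomial-size quantum circuits. Given an input $x$, a classical polynomial-time machine computes the circuit $C_{|x|}$, applies standard success-probability amplification to bring the acceptance probability to $\ge 1-2^{-\poly{N}}$ or $\le 2^{-\poly{N}}$, and writes a description of the amplified computation on $x$ onto the $M$-cells of the tape, yielding the initial product state $\ket{\psi_0}=\ket{q_0}\otimes\ket{y}$ with the $A$-cells initialised to $\ket{a_1}$ and an evenly spaced constant fraction $\alpha$ of $M$-cells; this $\ket{\psi_0}$, with the fixed universal Hamiltonian $H$ from the TM-to-Hamiltonian mapping of \cref{Sec:TM-to_Hamiltonian} with TM1 replaced by a universal QTM as in \cref{Sec:QTM_Intro}, is the $\FTFSRelax$ instance. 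On an accepting branch TM2 flips all $A$-cells to $\ket{a_2}$; on a rejecting branch TM2', which reproduces TM2's head motion without flipping, runs instead, so that every branch halts after the same polynomial number of steps. Hence $T=\poly{N}$, the effective Hamiltonian on the relevant block has eigenvalues $2\cos(2\pi j/T)$, and the spectral-gap promise holds with gap $\Omega(T^{-2})=\Omega(1/\poly{N})$ exactly as in \cref{Lemma:Promise_Satisfied}. Taking $\tau=p(N)T^2=\poly{N}$ in \cref{lemma:finite_time_exp_value} and repeating the computation of \cref{Sec:Long-Time_Expectation_Values}, $\barAN(\tau)>\tfrac12(1-\alpha)(1-1/\poly{N})-\bigO{1/p(N)}$ if $C_{|x|}$ accepts and $\barAN(\tau)<\bigO{1/p(N)}$ if it rejects. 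Setting $A^*=\tfrac12(1-\alpha)$ and choosing the constants $\epsilon,c$ with $c\epsilon<\tfrac12(1-\alpha)-\epsilon$, YES-instances of $L$ map to YES-instances of $\FTFSRelax$ and NO-instances to NO-instances; since every step of the reduction is classical polynomial time, this is a valid Karp reduction, so $\FTFSRelax$ is $\BQP$-hard. Together with containment this proves $\FTFSRelax$ is $\BQP$-complete.

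\textbf{Main obstacle.} The delicate point is the interface between the two directions: containment is possible only because of the $\Omega(1/\poly{N})$-gap promise — dropping it recovers the $\PSPACE$-complete $\FSRelax$ — so the hardness reduction must output Hamiltonians that provably satisfy that promise. This forces \emph{every} branch of the encoded QTM, including the rejecting branch handled by TM2' and the reversal phase TM3, to terminate within a fixed polynomial number of steps, so that $T=\poly{N}$ and the gap is $\Omega(1/\poly{N})$ rather than inverse-exponential. Verifying that a universal QTM with constant-size alphabet and control can be padded to have exactly this property, and that the QTM-to-Hamiltonian map retains the loop spectral structure of \cref{Sec:TM-to_Hamiltonian}, is where the care is needed; the remainder is a faithful transcription of the $\PSPACE$-hardness argument for $\FSRelax$ with ``inverse exponential'' replaced by ``inverse polynomial'' throughout.
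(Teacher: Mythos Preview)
Your proposal is correct and follows essentially the same approach as the paper: containment via polynomial-time Hamiltonian simulation exploiting the $\Omega(1/\poly{N})$ spectral-gap promise, and hardness via the QTM-to-Hamiltonian construction of \cref{Sec:QTM_Intro} with the polynomial halting time ensuring the promise is met. Your write-up is considerably more detailed than the paper's terse two-sentence proof---in particular your explicit observation that the gap promise bounds the number of supporting eigenstates by $\poly{N}$, which is what makes the $\operatorname{sinc}$-type bound $|\barAN(\tau)-\barAN|=\poly{N}/\tau$ go through, and your random-time sampling procedure for estimating $\barAN(\tau)$---but the underlying logic is the same.
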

\begin{proof}
    By the above analysis, the value of $\barAN(\tau)$ can be made to depend on whether an instance of a $\BQP$-complete problem is accepted or rejected, so $\FTFSRelax$ is $\BQP$-hard. Further, due to the promise, we can estimate $\barAN$ to constant precision by simply simulating Hamiltonian evolution for polynomial time and measuring $\AN$. Hence $\FTFSRelax \in \BQP$, which implies $\FTFSRelax$ is $\BQP$-complete.
\end{proof}
\section{Discussion}
\label{sec:discussion}
We have shown there exist Hamiltonians for which, given an initial state and an observable to measure, predicting the long-time average of the observable is computationally intractable even for a quantum computer.
An immediate consequence of this is that, assuming $\PSPACE$ does not collapse to a lower (easy) complexity class, there is no ``easy'' way of generically determining whether a given Hamiltonian and initial state thermalizes.
This should be compared to the eigenstate thermalization hypothesis, which states that Hamiltonians satisfying certain conditions on their matrix elements should thermalize.
Hence, either checking whether a Hamiltonian satisfies the conditions of the ETH on its matrix elements must not be computationally easy, or the ETH must not be a full characterization of all Hamiltonians which thermalize.

We briefly discuss here the similarities and differences between our construction and that of \textcite{shiraishithermal2021}. Primarily, their work focuses on systems in the thermodynamic limit (i.e. $N \rightarrow \infty)$, while we focus on finite sized systems.  While their result applies to almost-uniform product state inputs, our hardness construction is for non-uniform product state inputs in order to satisfy the promise of inverse-exponentially-separated eigenvalues for the eigenvectors with support on the input state. We take advantage of this relaxation to change the Turing machines used, as we do not need to decode the input to the URTM. We also make use of a more concise local Hilbert space for our hardness proof of $\FSRelax$, although we currently require greater local dimension to prove the hardness of $\FSTherm$.
Finally, as their result shows undecidability, they did not need to prove containment for their decision problem. 

Concurrently and independently of this work, Matsumoto \cite{Matsumoto_2025} further generalized the result of \textcite{shiraishithermal2021}, showing intractability of thermalization in the thermodynamic limit even for almost i.i.d. input states. Matsumoto also showed that in a finite-sized lattice, a relaxation problem defined similarly to ours is either $\PSPACE$-complete or $\EXPSPACE$-complete with a modified definition of the input size.

Our contribution in this work also includes $\PSPACE$ algorithms to compute the long time observable average, as well as the microcanonical expectation values, and thereby decide $\FSTherm$. 

There are several limitations to our work. 
In our tuning construction, we showed hardness for a restricted family of microcanonical states, with energy windows having inverse-logarithmic width ratio with the Hamiltonian norm. This is a technical restriction which arises due to the complexity of preparing the microcanonical state. In addition, in our proof of $\PSPACE$-containment of $\FSTherm$, we assumed that $w = \bigomega{\frac{1}{\poly{N}}}$. In future work, we hope to show $\PSPACE$-completeness for the more general microcanonical state, with narrower energy windows.
Our tuning construction also requires large local Hilbert space dimension. We hope to refine this result in future versions of this work. A hardness result with low local dimension would make our results much more applicable to `natural' Hamiltonians.

\paragraph{Further Work}
While in \cref{sec:containment} we showed that $\FSTherm$ is contained in $\PSPACE$ for both the Gibbs ensemble and the microcanonical ensemble with greater than inverse-polynomial width of the energy window, our hardness proof in \cref{sec:FSThermHardness} only applies to the microcanonical ensemble with an energy window $w = \bigomega{\norm{H}/\sqrt{\log{N}}}$. Further, the hardness proof relies on a quantum polynomial time reduction as opposed to a more standard classical polynomial time reduction. In future work, we hope to show hardness under the more standard reduction, as well as for most general ensembles. 
In \cref{sec:fstherm_gibbs_hardness}, we proved that $\FSThermG$ is $\PSPACE$-hard. Thus, combined with \cref{thm:fstherm_gibbs_containment}, $\FSThermG$ is $\PSPACE$-complete (and using only a classical polynomial time reduction).
Our argument in \cref{sec:fstherm_gibbs_hardness} that there is a classical polynomial time reduction from $\FSHalt$ to $\FSThermG$ makes use of the fact that in 1D for constant $\beta$, there exists a polynomial time classical algorithm to compute Gibbs expectation values \cite{Alhambra_Cirac_2021}. This algorithms make use of the locality of $H$ and exponential decay of correlations to construct matrix product operators for Gibbs states. In our tuning construction in \cref{sec:changingquantum} to prove hardness of $\FSTherm$, which relies on computing local observables, we do not make use of the additional structure of $H$ when estimating local observables of $\rhoMC$. One possible strategy to make the reduction from $\FSHalt$ to $\FSTherm$ classical rather than quantum could be to make use of this structure to show that local observables of $\rhoMC$ can be estimated efficiently classically. Such an approach may show that $\FSTherm$ is indeed $\PSPACE$-complete. Another strategy would be to use ensemble equivalence results, such as those of Brandão and Cramer \cite{brandao2015equivalence}, to provide quantiative bounds on the deviation of the microcanonical and Gibbs expected values for sufficiently large system size, which we can then use to efficiently classically perform the tuning procedure for microcanonical ensembles.

One direction for future work is investigating complexity of different formulations of thermalization and equilibration. An example of a notion of thermalization and ensembles that is not captured by our work is equilibration of integrable systems to generalized Gibbs ensembles \cite{Rigol_Dunjko_Yurovsky_Olshanii_2007, DAlessio_Rigol_2016}. An additional open question is related to thermalization of subsystems. For example, in a system with two parts $A,B$, with dynamics generated by a hamiltonian $H = H_A + H_B + H_{AB}$, we may like to know under which conditions and inverse temperatures $\beta$ the behavior of subsystem $A$ is captured by the Gibbs ensemble $\propto e^{-\beta H_A}$.

An alternative route to further research is to consider the reasons why Hamiltonians do not thermalize from a complexity-theoretic perspective.
Some relevant thermalization-inhibiting phenomena include many-body localization (MBL) and quantum scars. 
While the complexity of simulating systems with MBL been studied previously \cite{huang2015efficient, ehrenberg2022simulation}, the authors of this work do not know of any results concerning the complexity of determining MBL from a description of the Hamiltonian.

Two additional obvious and interesting paths are as follows: finding constructions with smaller local Hilbert space dimensions, and proving a complexity result without the assumption on the gap between eigenvalues. Recent work has shown the presence of ultraslow relaxation in systems based on the complexity of word problems, with small local dimension \cite{Balasubramanian_2024}. In future work we hope to explore whether similar techniques can be used to yield results on the hardness of thermalization but with smaller local dimension.

\section*{Acknowledgements}

	{\begingroup
		\hypersetup{urlcolor=navyblue}

We thank \href{https://orcid.org/0000-0002-3321-3198}{Brayden Ware}, \href{https://orcid.org/0000-0002-9969-7391}{Alex Schuckert}, \href{https://orcid.org/0000-0002-9903-837X}{Andrew Childs}, \href{https://orcid.org/0000-0003-0509-3421}{Alexey Gorshkov}, \href{https://orcid.org/0000-0002-5889-4022}{Alvaro Alhambra},
\href{https://orcid.org/0000-0001-7899-6619}{Stephen Piddock},
\href{https://orcid.org/0000-0002-9992-3379}{Sevag Gharibian},
\href{https://orcid.org/0000-0002-7918-2841}{Greeshma Oruganti}
and \href{https://orcid.org/0000-0003-3383-1946}{Joe Iosue}
for helpful discussions. We thank GPT-5 for suggesting the proof of \cref{lem:constantbeta} and for helpful discussions.

DD acknowledges support by the NSF GRFP under Grant No.~DGE-1840340 and an LPS Quantum Graduate Fellowship.
T.C.M.~and D.D.~were supported in part by the DoE ASCR Quantum Testbed
Pathfinder program (awards No.~DE-SC0019040 and No.~DE-SC0024220).
D.D.~was also supported in part by DARPA SAVaNT ADVENT. T.C.M.~and
D.D.~also acknowledge support from the U.S.~Department of Energy,
Office of Science, Accelerated Research in Quantum Computing,
Fundamental Algorithmic Research toward Quantum Utility (FAR-Qu).
J.D.W.~acknowledges support from the United States Department of Energy, Office of Science, Office of Advanced Scientific Computing Research, Accelerated Research in Quantum Computing program, and also NSF QLCI grant OMA-2120757.

	\endgroup}

{\begingroup
		\hypersetup{urlcolor=navyblue}
\printbibliography
	\endgroup}
 \appendix

\section{Finite Time-Averaged Expectation Values} \label{Sec:Finite_Time_Expectation_Proof}
In this appendix, we prove \cref{lemma:finite_time_exp_value}. First, let us recall the lemma:
\finitetimeexpvalue*

\begin{proof}

 In the finite time case, we can no longer use the integration property of $e^{-i(\lambda_i-\lambda_j)t}$. 
Instead we write:
\begin{align}
    \barAN(\tau) &= \frac{1}{\tau}\int_0^\tau dt \sum_{ij}e^{-i(\lambda_i-\lambda_j)}c_i^*c_j\bra{\lambda_i}\AN\ket{\lambda_j} \nonumber\\
    &= \sum_{ij}\delta(\lambda_i-\lambda_j)c_i^*c_j\bra{\lambda_i}A\ket{\lambda_j} + \frac{1}{\tau}\sum_{ij}\int_{[\tau_{ij}]}^\tau dt e^{-i(\lambda_i-\lambda_j)}c_i^*c_j\bra{\lambda_i}\AN\ket{\lambda_j},
\end{align}
where for a particular $i,j$ we define $[\tau_{ij}]$ as the largest multiple of the period of $e^{-i(\lambda_i-\lambda_j)}$ which is less than $\tau$.
We have used that $e^{-i(\lambda_i-\lambda_j)}$ integrates to zero if the integration is done over a multiple of the oscillatory period.
We now want to bound the second term.
Using the fact there are at most $T$ eigenstates, we get:
\begin{align}
     \frac{1}{\tau}\sum_{ij}\int_{[\tau_{ij}]}^\tau dt e^{-i(\lambda_i-\lambda_j)}c_i^*c_j\bra{\lambda_i}\AN\ket{\lambda_j} &\leq \frac{\norm{A}}{\tau}\int_{[\tau_{ij}]}^\tau dt e^{-i(\lambda_i-\lambda_j)}c_i^*c_j\bra{\lambda_i}\ket{\lambda_j} \nonumber\\
     &= \frac{\norm{A}}{\tau}\int_{[\tau_{ij}]}^\tau dt \bra{\psi(t)}\ket{\psi(t)} \nonumber\\
&\leq \frac{\norm{A}}{\tau}(\tau-[\tau_{ij}])    
\end{align}
We now note that $(\tau-[\tau_{ij}])$ must be less than a single period of the exponential $e^{-i(\lambda_i-\lambda_j)t}$, and hence $(\tau-[\tau_{ij}])\leq \max_{ij} 1/(\lambda_j-\lambda_i)$, where $\max_{ij} 1/(\lambda_j-\lambda_i)$ is the maximum period of any of the complex exponentials.
From \cref{Sec:QTM_Intro}, we see that 
\begin{align}
    \lambda_k-\lambda_m =   \cos\left( \frac{k\pi}{T+1} \right) -  \cos\left( \frac{m\pi}{T+1} \right).
\end{align}
The minimum will be achieved for $m=k+1$, hence:
\begin{align}
    |\lambda_k-\lambda_{k+1}| &=   \cos\left( \frac{k\pi}{T+1} \right) -  \cos\left( \frac{(k+1)\pi}{T+1} \right) \nonumber\\
    & \leq \frac{1}{2}\left( \frac{\pi}{T+1} \right)^2|[k^2 - (k+1)^2]| + \bigO*{\frac{1}{T^4} } \nonumber\\
    &=\frac{1}{2}\left( \frac{\pi}{T+1} \right)^2(2k+1)+ \bigO*{\frac{1}{T^4} }
\end{align}
The minimum is achieved for $k=0$.
Hence:
\begin{align}
     \frac{1}{\tau}\sum_{ij}\int_{[\tau_{ij}]}^\tau dt e^{-i(\lambda_i-\lambda_j)}c_i^*c_j\bra{\lambda_i}\AN\ket{\lambda_j} &\leq \frac{\norm{A}}{\tau}(\tau-[\tau_{ij}])    \nonumber\\
     &\leq \bigO*{\frac{T^2}{\tau} }.
\end{align}
    
\end{proof}

\end{document}